\newcommand{\expvalue}{\mathbb{E}} 
\newcommand{\XC}{X} 
\newcommand{\XE}{X_e} 
\newcommand{\XCH}{\hat{X}} 
\newcommand{\XEH}{\hat{X}_e}
\newcommand{\real}{\mathbb{R}}
\newcommand{\integerspos}{\mathbb{N}}
\newcommand{\timenN}{t_n}
\newcommand{\timepS}{t'_n}
\newcommand{\gmm}{1+\nu}
\newcommand{\ngmm}{1-\nu}
\newcommand{\oprocend}{\relax\ifmmode\else\unskip\hfill\fi\oprocendsymbol}
\newcommand{\oprocendsymbol}{\hbox{$\bullet$}}
\newtheorem{remark}{Remark}
\newtheorem{defn}{Definition}
\newtheorem{corollary}{Corollary}
\newtheorem{theorem}{Theorem}
\newtheorem{lemma}{Lemma}
\begin{document}

\sptitle{Article Category}

\title{Stabilizing a linear system using phone calls     when   time is information} 

%\editor{This paper was recommended by Associate Editor F. A. Author.}

\author{Mohammad Javad Khojasteh, \affilmark{1} (Member, IEEE)}

\author{Massimo Franceschetti\affilmark{2}  (Fellow, IEEE)}

\author{Gireeja Ranade\affilmark{3}}

\affil{EME department at Rochester Institute of Technology. Portions of this work were completed  at University of California, San Diego, Massachusetts Institute of Technology, and while visiting Microsoft Research.} 
\affil{ECE department at the University of California, San Diego.} 
\affil{EECS department at the University of California, Berkeley. Some of this work was performed while at Microsoft Research, and also at the Simons Institute for the Theory of Computing}

\corresp{CORRESPONDING AUTHOR: MJ Khojasteh (e-mail: \href{mjkeme@rit.edu}{mjkeme@rit.edu})}
\authornote{This research was partially supported by NSF awards CNS-1446891 and ECCS-1917177, and the Gleason Endowment at RIT.}

\begin{abstract}
We study stabilizing a scalar linear system over a ``timing'' channel, where information is communicated via the timestamps of transmitted symbols. Each symbol, sent from a sensor to a controller in a closed-loop system, is received after a random delay. The sensor encodes information in the waiting times between transmissions, and the controller decodes it using the inter-reception times of symbols. This setup resembles a telephone system: a transmitter signals a call with a ``ring,'' and the receiver becomes aware of the ``ring'' after a random connection delay. With no data payload exchanged, this setup provides an abstraction for performing event-triggering control with zero-payload rate. We show that for system stabilization (state converging to zero in probability), the timing capacity of the channel must be at least as large as the entropy rate of the system. For exponentially distributed symbol delays, we provide an ``almost'' tight sufficient condition using a coding strategy that refines the message estimate with each new symbol. Our results generalize previous zero-payload event-triggering control strategies, revealing a fundamental limit for using timing information in stabilization, independent of the transmission strategy.
\end{abstract}

\begin{IEEEkeywords}
Timing channel,  control with communication constraints,    event-triggered control, linear systems.
\end{IEEEkeywords}

\maketitle

\section{INTRODUCTION}

A networked control system with a feedback loop over a communication channel provides a first-order approximation of a cyber-physical system (CPS), where the interplay between the communication and control aspects of the system leads to new and unexpected analysis and design challenges~\cite{kumar,hespanha2007survey}\footnote{The material in this paper was presented in part at
the 18th European Control Conference (ECC), 2019~\cite{khojasteh2019stabilizing112}, and at 57th IEEE Conference on  Decision and Control, 2018~\cite{khojasteh2018estimating111}.}. In this setting, data-rate theorems quantify the impact of the communication channel on the ability to stabilize the system. Roughly speaking, these theorems state that stabilization requires a communication rate in the feedback loop at least as large as the intrinsic \emph{entropy rate} of the system, expressed by the sum of the logarithms of its unstable eigenvalues~\cite{Yukselbook,matveev2009estimation,Massimo,Nair,martins2006feedback,Delchamps,wong1999systems,baillieul1999feedback}.  

We consider a specific communication channel in the loop --- a \emph{timing channel}. Here, information is communicated through the timestamps of the symbols transmitted over the channel; the time is carrying the message. This formulation is motivated by recent works in event-triggering control, showing that the timing of the  triggering events carries information that can be used for 
stabilization~\cite{khojasteh2020exploiting,Level,OurJournal1,linsenmayer2017delay,yildiz2019event,cdc19papermj,guo2019optimal}.  By encoding information in timing,  stabilization can be achieved by transmitting additional data at a rate arbitrarily close to zero. However, in these works, the timing information was not explicitly quantified, and the analysis was limited to specific event-triggering strategies. In this paper, our goal is to determine the value of a timestamp from an information-theoretic perspective, when this timestamp is used for control.  We are further motivated by the results on the impact of multiplicative noise in control~\cite{ranade2013non, ranade2018control}, since timing uncertainty can lead to multiplicative noise in systems and thus can serve as an information bottleneck.

To illustrate the proof of concept that timing carries information useful for control, we consider the simple case of stabilization of a scalar, undisturbed, continuous-time, unstable, linear system over a timing channel and rely on the information-theoretic notion of \emph{timing capacity} of the channel, namely the amount of information that can be encoded using time stamps~\cite{anantharam1996bits, bedekar1998information,arikan2002reliability,sundaresan2000robust,rose2016inscribed1,gohari2016information,riedl2011finite,wagner2005zero,6736600,prabhakar2003entropy,aptel2018feedback,giles2002information,MaximRag,liu2004timing,sellke2007capacity}.
In this setting, the sensor can communicate with the controller by choosing the timestamps at which  symbols  from a unitary alphabet  are transmitted. The controller receives each transmitted symbol after a random delay is added to the timestamp.   We show the following data-rate theorem. For the state  to
converge to zero in probability, the timing capacity of the channel should be, essentially, at least
as large as the entropy rate of the system. Conversely, in the case the random delays are exponentially distributed, we show that when the timing capacity is strictly greater than the entropy rate of the system,
we can drive the state to zero in probability by using a decoder that   refines its estimate of the transmitted message every time a new symbol is received~\cite{como2010anytime}. 
We also derive analogous  necessary and sufficient conditions for the problem of estimating the state of the system with an  error that tends to zero in probability.

\begin{table*}[ht]  
  \begin{center}
    \caption{Capacity notions used to derive data-rate theorems in the literature under different notions of stability, channel types, and system disturbances.}
     \tabcolsep=0.11cm
       \begin{tabularx}{12.47cm}{l l l l l }
       \label{tab:table1}
       \\
        \hline 
      Work &  Disturbance &Channel  & Stability condition & Capacity   \\
      \hline \hline
      \cite{Mitter} & NO & Bit-pipe &   $|\XC(t)| \rightarrow 0$ a.s. & Shannon \\

      \cite{tatikonda2004control, matveev2007analoguedd} & NO & DMC &  $|\XC(t)| \rightarrow 0$ a.s. & Shannon     \\

      \cite{matveev2007shannon} & bounded & DMC & $ \mathbb{P}(\sup_{t} |\XC(t)|< \infty)=1$   & Zero-Error   \\
       \cite[Ch. 8]{matveev2009estimation}  & bounded & DMC & $    \mathbb{P}(\sup_t |\XC(t)|<K_\epsilon)>1-\epsilon$ & Shannon  \\
          \cite{sahai2006necessity}  & bounded & DMC & $   \sup_{t} \, \mathbb{E}(  |\XC(t)|^{m})<\infty$ & Anytime    \\
     \cite{nair2004stabilizability}  & unbounded & Bit-Pipe & $ \sup_{t }  \, \mathbb{E}( |\XC(t)|^{2})<\infty$& Shannon   \\
      \cite{Paolo,Lorenzo,minero2017anytime}  & unbounded & Var.  Bit-pipe & $ \sup_{t} \, \mathbb{E}(  |\XC(t)|^m)<\infty$ & Anytime   \\
         This paper  & NO &     Timing  &  ${|\XC(t)|\overset{P}{\rightarrow}0}$   & Timing   \\
         \hline   
    \end{tabularx}
  \end{center} 
%  \end{adjustwidth} 
\end{table*}

The books~\cite{Yukselbook,matveev2009estimation,fang2017towards,kawan2013invariance} 
and the surveys~\cite{Massimo,Nair,colonius2014analysis} provide detailed discussions of  data-rate theorems and related results that heavily inspire this work.
A portion of the literature studied stabilization over ``bit-pipe channels,'' where a rate-limited, possibly time-varying and erasure-prone communication channel is present in the feedback loop~\cite{Mitter,nair2004stabilizability,hespanha2002towards,Paolo,Lorenzo}. For  more general noisy channels, Tatikonda and Mitter \cite{tatikonda2004control} and Matveev and Savkin~\cite{matveev2007analoguedd} showed that  the state of undisturbed linear systems   can be forced to converge to zero almost surely (a.s.) if and only if   the Shannon capacity of the channel is larger than the entropy rate of the system.  
In the presence of disturbances, in order to keep the state bounded a.s., a more stringent condition is required, namely the zero-error capacity of the channel must be larger than the entropy rate of the system~\cite{matveev2007shannon}. Nair derived a similar information-theoretic result in a non-stochastic setting~\cite{girish2013}. Sahai and Mitter~\cite{sahai2006necessity} considered   moment-stabilization over noisy channels and in the presence of system disturbances of bounded support, and provided a  data-rate theorem in terms of the anytime capacity of the channel. They showed that to keep the $m$th moment of the state bounded, the anytime capacity of order $m$ should be larger than the entropy rate of the system. The anytime capacity has been further investigated in~\cite{ostrovsky2009error,sukhavasi2016linear,khina2016almost,minero2017anytime}.  Matveev and Savkin~\cite[Chapter~8]{matveev2009estimation} have also introduced a weaker  notion of stability in probability, requiring the state to be bounded with probability $(1-\epsilon)$ by a constant that diverges as $\epsilon \rightarrow 0$, and showed that in this case 
it is possible to stabilize linear systems with bounded disturbances over noisy channels provided that the Shannon capacity of the channel is larger than the entropy rate of the system. The various results, along with our contribution, are summarized in Table~\ref{tab:table1}. 
The main point that can be drawn from   all of these results  is that the relevant capacity notion for stabilization over a communication channel critically depends on the notion of stability and on the system's model.

From the system's perspective, our setup is closest to the one in~\cite{Mitter,tatikonda2004control, matveev2007analoguedd}, as there are no disturbances and the objective is to  drive the state to zero. Our convergence in probability provides a stronger necessary condition for stabilization, but a weaker sufficient condition than the one in these works.  We also point out that our notion of stability is  considerably stronger than the notion of probabilistic stability proposed in~\cite[Chapter~8]{matveev2009estimation}. 
Some additional works considered nonlinear  plants without disturbances~\cite{Topological,liberzon2005stabilization,de2005n},  and switched linear systems~\cite{liberzon2014finite,yang2018feedback}  where communication between the sensor and the controller occurs over a bit-pipe communication channel. The recent work in~\cite{sanjaroon2018estimation} studies   estimation of   nonlinear systems over noisy communication channels and the work in~\cite{kostina2016rate} investigates the trade-offs between the communication channel rate and the  cost of the linear quadratic regulator for linear plants.

Parallel work in control theory has investigated the possibility of stabilizing linear systems using timing information. 
One primary focus  of the emerging paradigm  of event-triggered control~\cite{Tabuada,WPMHH-KHJ-PT:12,astrom2002comparison,wang2011event,dimarogonas2012distributed,khashooei2018consistent,heemels2013periodic,li2012stabilizing,demirel2017trade,quevedo2014stochastic,lindemann2018event,girard2015dynamic,seuret2016lq}
has been on minimizing the number of transmissions while simultaneously ensuring the control objective~\cite{PT-JC:16-tac,pearson2017control}. 
Rather than performing periodic communication between  the system and the controller, in event-triggered control communication occurs only as needed, in an opportunistic  manner. In this setting, the timing of the triggering events can carry useful information about the state of the system, that can be used for stabilization~\cite{khojasteh2020exploiting,Level,OurJournal1,linsenmayer2017delay,yildiz2019event,cdc19papermj,guo2019optimal}. 
In this context, it has  been shown that the amount of timing information  is sensitive to the delay in the communication channel. While for small delay stabilization can be achieved  using only timing information and
transmitting data payload (i.e. physical data) at a rate
arbitrarily close to zero, for large values of the delay this is not the case, and the 
data payload rate
must be increased~\cite{OurJournal1,cdc19papermj}. In this paper, we  extend these results from an information-theoretic perspective, as we explicitly quantify the value of the timing information, independent of any transmission strategy. To quantify the amount of timing information alone, we  restrict to transmitting symbols from a unitary alphabet, i.e. at zero data payload rate. 
Research directions left open for future investigation include  the study of ``mixed'' strategies, using both timing information and physical data transmitted over a larger alphabet, as well as   generalizations to vector systems and the study of systems with disturbances. In the latter case, it is likely that the usage of stronger notions of capacity, or weaker notions of stability, will be necessary. 

The rest of the paper is organized as follows.
Section~\ref{sec:setup} introduces the system and channels models. The main results are presented in Section~\ref{sec:results}. Section~\ref{newsection111} considers the estimation problem, and Section~\ref{newsection222234fej} considers the stabilization problem.
Section~\ref{sec:relatedwork} provides a comparison with related work, and Section~\ref{simulations11} presents a numerical example. Conclusions are drawn in Section~\ref{sec:conc}.

%%%%%%%%%%%
%% Notation
%%%%%%%%%%%
\subsection{Notation}\label{sec:notation}
Let $X^n=(X_1,\cdots,X_n)$ denote a vector of random variables and let $x^n=(x_1,\cdots,x_n)$ denote its realization.
If  $X_{1}, \cdots, X_{n}$ are independent and identically distributed (i.i.d) random variables, then we refer to a generic $X_i \in X^n$ by $X$ and skip the subscript $i$. We use $\log$ and $\ln$ to denote the logarithms   base $2$ and base $e$ respectively. We use $H(X)$ to denote the Shannon entropy of a discrete random variable $X$ and $h(X)$ to denote the differential entropy of a continuous random variable $X$. Further, we use $I(X;Y)$ to indicate  the mutual information between random variables $X$ and $Y$.  
We write $X_n \xrightarrow{P} X$ if  $X_n$ converges in probability to $X$.  Similarly, we   write 
$X_n \xrightarrow{a.s.}X$ if $X_n$ converges almost surely to $X$.  For any set $\mathscr{X}$ and any $n \in \mathbb{N}$ we let
\begin{align}
\label{tunc-oper22}
    \pi_n: \mathscr{X}^\mathbb{N} \rightarrow \mathscr{X}^n
\end{align}
%$$ 
be the  truncation operator, namely the projection of a sequence in $\mathscr{X}^\mathbb{N}$ onto
its first $n$ symbols.

%%%%%%%%%%%
%% Problem formulation
%%%%%%%%%%%
\section{System and channel model}\label{sec:setup}
We consider the networked control system depicted in Fig.~\ref{fig:system}.  
\begin{figure}[t]
	\centering
 \includegraphics[width=.48 \textwidth]{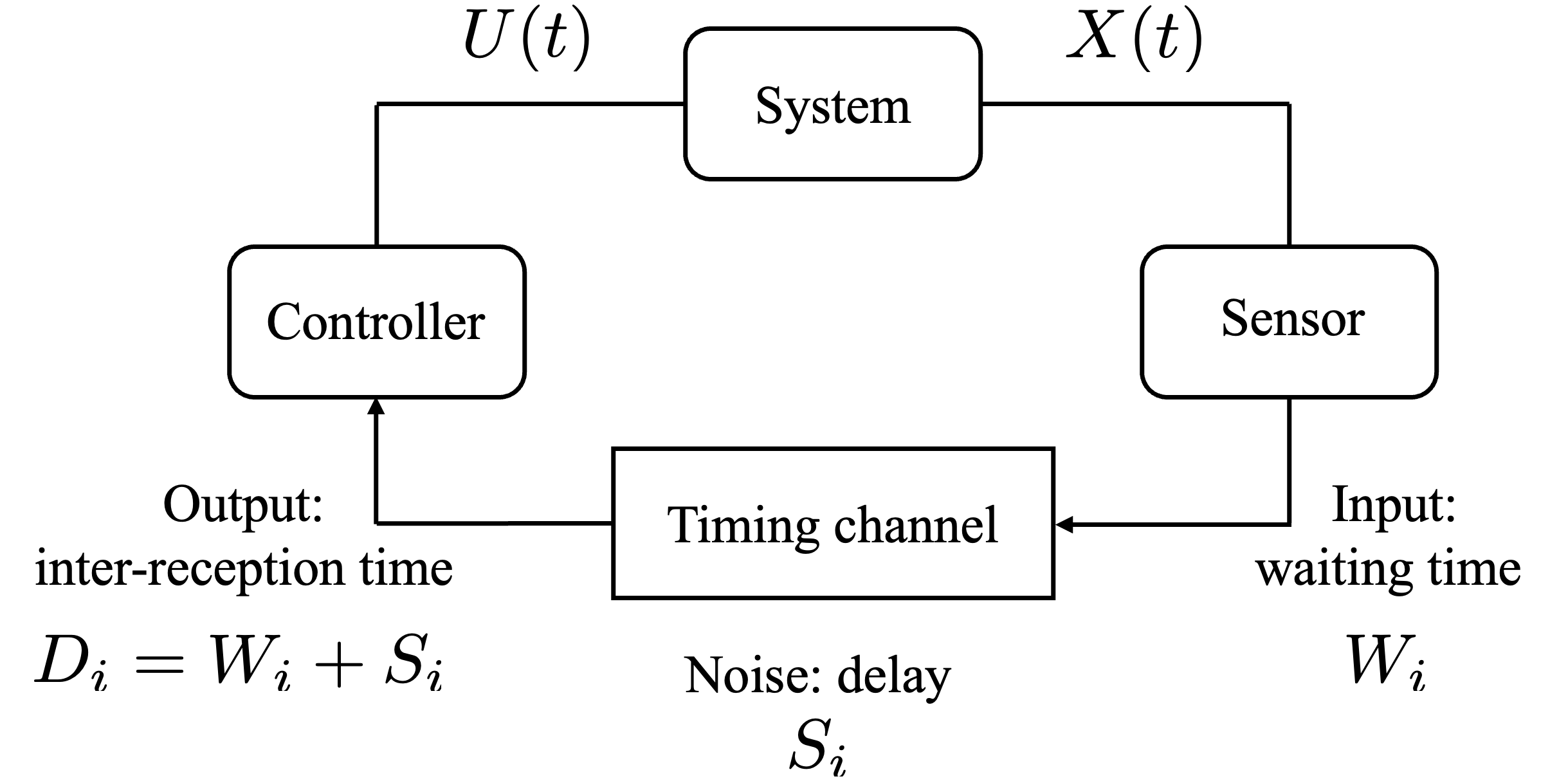}
	\caption{\footnotesize{Model of a networked control system where the feedback loop is closed over a timing channel~\eqref{channelm}.}}\label{fig:system}
\end{figure}
The system dynamics are described by a scalar, continuous-time, noiseless, linear time-invariant (LTI) system 
\begin{align}\label{syscon}
   	\dot{\XC}(t)=a \XC(t)+b \hspace{0.3mm} U(t),
\end{align}
where $\XC(t) \in \real$ and $U(t) \in \real$ 
are the system state and the control input respectively. The constants $a,b \in \mathbb{R}$ are such that $a>0$ and $b \not = 0$. The initial state  $X(0)$  is random and is drawn from a distribution of bounded differential entropy and bounded support, namely  ${h(X(0))<\infty}$   and $|X(0)|<L$, where $L$ is known to both the sensor and the controller.   
Conditioned on the realization of $X(0)$, the system evolution is deterministic. Both controller and sensor have knowledge of the system dynamics in \eqref{syscon}.       
We assume the sensor  can measure the state of the system with infinite precision, and the controller can apply the control input to the system with infinite precision and with zero delay. 

The sensor  is connected to the controller through a \emph{timing channel} (the telephone signaling channel defined in~\cite{anantharam1996bits}).
The operation of this channel is analogous to that of a  telephone system where a transmitter signals a phone call to the receiver through a ``ring'' and, after a random time required to establish the connection, is aware of the ``ring'' being received. Communication between transmitter and receiver can then occur without any vocal exchange, but by encoding messages in the ``waiting times'' between consecutive calls.

\subsection{The channel}
\label{sec:channeldefinition}
We model the channel as carrying symbols $\spadesuit$ from a  unitary alphabet, and each transmission is  received after a random delay. Every time a symbol is received, the sender is notified of the reception by an instantaneous acknowledgment. 
The channel is initialized with a $\spadesuit$ received at time $t=0$. After receiving  the acknowledgment for the $i$th $\spadesuit$, the sender waits for $W_{i+1}$ seconds and then transmits the next $\spadesuit$. Transmitted symbols are subject to i.i.d.\ random delays $\{S_i\}$.
Letting $D_i$ be the inter-reception time between two consecutive symbols, we have
\begin{align}\label{channelm}
D_i=W_i+S_i.
\end{align}
It follows that the reception time of the $n$th symbol is  
\begin{equation} \label{eq:taun}
\mathcal{T}_n=\sum_{i=1}^n D_i.
\end{equation}
Fig.~\ref{timing} provides an example of the timing channel in action.

\begin{figure*}[t]
	\centering
 \includegraphics[height=3.8cm,width=15.8cm]
 {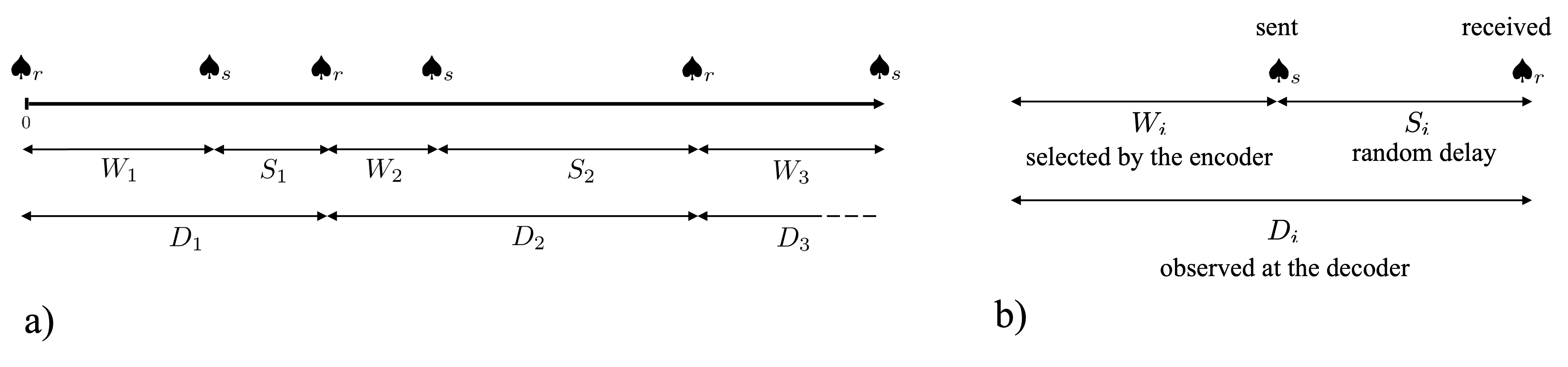}
 \centering
	\caption{\footnotesize{ a) The timing channel. b) $i$th transmission. Subscripts $s$ and $r$, for symbol $\spadesuit$, are used to denote sent and received symbols, respectively.  After reception of the $(i-1)$th symbol, the encoder selects a waiting time $W_i$, which carries information. The channel causes a random delay $S_i$, and the decoder observes the inter-reception time $D_i=W_i+S_i$\,.}}\label{timing}
\end{figure*}

\subsection{Capacity of the channel}
{We start by reviewing the following  definitions (Definitions 1-3) from~\cite{anantharam1996bits}.}
\begin{defn}
\label{randomcodebook}
 A  ${(n,M,T,\delta)}$-timing code for the telephone signaling channel consists of a codebook of $M$ codewords 
 $\{(w_{i,m},i=1,\ldots,n)$, $m=1 \ldots M\}$,
as well as a decoder, which upon observation of ${(D_1,\ldots,D_n)}$ selects the correct transmitted codeword with probability at least $1-\delta$. Moreover, the codebook is such that the expected random arrival time of  the $n$th symbol is at most $T$, namely
 \begin{align}\label{timingcapaman1}
 \expvalue\left(\mathcal{T}_n\right) \leq T.
 \end{align}
\end{defn}

\begin{defn} The rate of an $(n,M,T,\delta)$-timing code is
\begin{align}
 R = (\log M)/T.
 \end{align}
 \end{defn}

\begin{defn}
The timing capacity $C$ of the telephone signaling channel  is the
supremum of the achievable rates, namely the largest
$R$
such that for every $\gamma>0$ there exists a sequence of $(n,M_n,T_n,\delta_{T_n})$-timing codes
that satisfy
\begin{align}\label{timingcapaman2}
\frac{\log M_n}{T_n} > R-\gamma,
\end{align}
and $\delta_{T_n}\rightarrow 0$ as $n \rightarrow \infty$.
\end{defn}

The following result~\cite[Theorem 8]{anantharam1996bits} characterizes the capacity of the telephone signaling channel. 

\begin{theorem}[Anantharam and Verd\'{u}] The timing capacity of the telephone signaling channel is given by
\begin{align}
\label{capanaverdum}
C = \sup_{\chi>0} \sup_{\substack{W\ge 0 \\ \mathbb{E}(W)\le\chi}} \frac{I(W;W+S)}{\mathbb{E}(S)+\chi},
\end{align}
and if $S$ is exponentially distributed then
\begin{align}
\label{capanaverdumexp1}
C=\frac{1}{e \mathbb{E}(S)} \;\; \mbox{  \rm{[nats/sec]}.}
\end{align}
\end{theorem}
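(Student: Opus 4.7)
The plan is to prove the theorem in two stages. For the general capacity formula, the standard information-theoretic machinery can be adapted to the random-duration setting. For the converse, given any $(n,M,T,\delta_T)$-i.i.d.-timing code with message uniform on $\{1,\ldots,M\}$, Fano's inequality combined with the memoryless-channel identity $I(W^n; D^n) = n\, I(W;W+S)$, which holds because the $W_i$ are i.i.d.\ and each $S_i$ is independent of $W^n$, yields
\begin{align*}
(1-\delta_T)\log M \leq n\, I(W;W+S) + h_b(\delta_T).
\end{align*}
Since $\mathbb{E}(\mathcal{T}_n) = n(\mathbb{E}(W)+\mathbb{E}(S)) \leq T$, dividing by $T$, letting $\delta_T \to 0$, and taking the supremum over feasible $W$ yields $R \leq C$. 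For achievability, fix $\chi > 0$ and a distribution on $W \geq 0$ with $\mathbb{E}(W)\leq \chi$, draw $M$ codewords i.i.d.\ from this distribution, and decode by joint typicality on $(w^n, D^n)$. A standard random-coding argument gives vanishing error whenever the rate is below $I(W;W+S)/(\mathbb{E}(W)+\mathbb{E}(S))$; taking the double supremum produces the claimed formula. The chief obstacle in this part is that ``$n$ channel uses'' correspond to a random reception time $\mathcal{T}_n$, so all rate computations must be normalized by the expected duration rather than by $n$ --- this is precisely why $\mathbb{E}(S)+\chi$ appears in the denominator.

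For the exponential specialization with $\mu = 1/\mathbb{E}(S)$, I would exhibit matching bounds. Using that $S$ is independent of $W$ (so $h(W+S\mid W)=h(S)=\ln(e/\mu)$), translation invariance gives
\begin{align*}
I(W;W+S) = h(W+S) - h(S) \leq \ln\!\bigl(\mu(\mathbb{E}(W)+\mathbb{E}(S))\bigr),
\end{align*}
where the inequality uses the maximum-entropy property of the exponential under a mean constraint on the nonnegative random variable $W+S$. Setting $\tau = \mu(\mathbb{E}(W)+\mathbb{E}(S))$, the ratio becomes $\mu\ln(\tau)/\tau$, maximized at $\tau=e$ with value $\mu/e = 1/(e\,\mathbb{E}(S))$. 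For the matching lower bound, take $W=0$ with probability $\theta$ and $W\sim\text{Exp}(\theta\mu)$ with probability $1-\theta$; a short Laplace-transform calculation shows $W+S\sim\text{Exp}(\theta\mu)$, whence $I(W;W+S)=\ln(1/\theta)$ and $\mathbb{E}(W)+\mathbb{E}(S)=1/(\theta\mu)$, so the ratio $\theta\mu\ln(1/\theta)$ is maximized at $\theta=1/e$ with value $\mu/e$. The one delicate point here is the use of differential entropy when the input distribution has an atom at zero, which is legitimate because the convolution $W+S$ is absolutely continuous and only $h(W+S)$ and $h(S)$ enter the calculation.
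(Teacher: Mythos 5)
The paper does not prove this theorem; it simply cites it as Theorem~8 of Anantharam and Verd\'u (1996), remarking only that the result applies to the paper's i.i.d.\ random-codebook setting because the capacity in that reference is achieved by random codes. So there is no in-paper proof to compare your attempt against, and the relevant question is whether your sketch correctly reconstructs the original argument.

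Your outline is essentially sound and follows the standard Anantharam--Verd\'u route. The converse via Fano plus the single-letterization $I(W^n;D^n)=nI(W;W+S)$ is correct for the i.i.d.-random-codebook model of Definition~1 (for fully general codes one would instead upper-bound $I(W^n;D^n)\le\sum_i I(W_i;W_i+S_i)$ and invoke concavity of mutual information in the input law to pass to a single-letter marginal with the average mean; your phrasing implicitly assumes the i.i.d.\ model, which is the relevant one here). Normalizing by $T\ge n(\mathbb{E}(W)+\mathbb{E}(S))$ correctly places $\mathbb{E}(S)+\chi$ in the denominator, and the two supremum formulations ($\mathbb{E}(W)\le\chi$ versus $\mathbb{E}(W)=\chi$) are equivalent because the bound is monotone in $\chi$ for fixed $I$. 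For achievability, joint typicality is a legitimate alternative to the maximum-likelihood analysis used both by Anantharam--Verd\'u and implicitly by the paper's Theorem~\ref{SUFFCONDMA}, but for continuous alphabets one should say a word about how typicality is defined (e.g.\ via quantization or weak typicality for mutual-information densities). Your exponential specialization is exactly the original one: the max-entropy bound $h(W+S)\le 1+\ln\mathbb{E}(W+S)$ for nonnegative variables gives the upper bound $\mu/e$ after optimizing $\ln\tau/\tau$, and the mixture $\mathbb{P}(W=0)=\theta$, $W\mid W>0\sim\mathrm{Exp}(\theta\mu)$ is precisely the capacity-achieving input; the Laplace-transform computation $\frac{\theta(\mu-s)}{\theta\mu-s}\cdot\frac{\mu}{\mu-s}=\frac{\theta\mu}{\theta\mu-s}$ confirming $W+S\sim\mathrm{Exp}(\theta\mu)$ is correct, and $\theta=1/e$ matches the paper's stated optimal input $\mathbb{P}(W_i=0)=e^{-1}$, $\mathbb{P}(W_i>w\mid W_i>0)=\exp\{-w/(e\mathbb{E}(S))\}$. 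Your remark that the atom at zero is harmless because $W+S$ is absolutely continuous is the right thing to flag. In short: this is a faithful reconstruction of the cited result, not a novel route, and the gaps are the usual ``standard argument'' elisions rather than errors.
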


In this paper, 
we assume that
the waiting times $\{W_i\}$ used to encode any given message are generated at random in an i.i.d. fashion, and are also independent of the random delays $\{S_i\}$. Assuming the symbols in each codeword are picked i.i.d. from a common distribution restricts the encoder to using a fixed random telephoning policy. This assumption comes at no loss of generality since: (i) the capacity in \eqref{capanaverdum} is achieved by i.i.d. random codes ~\cite{anantharam1996bits}, and (ii) in our system model there are no disturbances and  therefore the control problem reduces to the communication of a fixed real-valued variable representing the initial condition with exponential reliability over a digital channel, which can be performed optimally using a fixed random coding strategy \cite{como2010anytime}.

\subsection{The sensor}
The sensor in Fig.~\ref{fig:system} can act as a source and channel encoder. Based on its source knowledge, namely the knowledge of the initial condition $X(0)$, system dynamics \eqref{syscon}, and  $L$, it   
selects the  waiting times $\{W_i\}$ between the reception and the transmission  of  consecutive  $\spadesuit$ symbols.
 
As in~\cite{anantharam1996bits,sundaresan2000robust} we assume that the causal acknowledgments received by the sensor every time a $\spadesuit$ is delivered to the controller are not used to choose the waiting times, but     only to avoid queuing, ensuring that every symbol is sent after the previous one has been received. This is analogous to TCP-based networks, where packet deliveries are
acknowledged via a feedback link~\cite{acknowledgement_fdbk,you2010minimum,yuksel2012random,gupta2009data,khina2018tracking}. 
For networked control systems, this causal acknowledgment can be obtained without assuming an additional communication channel in the feedback loop.
The controller can   signal the acknowledgment to the sensor by applying a control input to the system that excites a specific frequency of the state each time a symbol has been received. This strategy is known in the literature as ``acknowledgment through the control input''~\cite{tatikonda2004control,sahai2006necessity,matveev2009estimation,khojasteh2020exploiting},  and is used to avoid assuming an additional reverse communication channel in the feedback loop. In this setting, the causal acknowledgment can be generated through the plant itself. More specifically, the applied input may be viewed as the sum of a nominal stabilizing control component and a bounded acknowledgment marker (for example, a short pre-agreed excitation) inserted when a symbol is received.

\begin{remark}
Unlike conventional packet-based communication models, in which packets primarily carry payload data and network delay is treated as a property of the communication medium, the framework considered here is a timing-only communication model: transmitted symbols carry no payload, and information is encoded in the waiting times (and observed through the inter-reception times). The acknowledgment signal is used only to avoid queuing and is not intended to emulate TCP's  flow-control mechanisms.
\end{remark}

\subsection{The controller}
The controller in Fig.~\ref{fig:system} can act as a source and channel decoder.
It uses the reception times of all the symbols received up to time $t$, along with the knowledge of $L$ and of the system dynamics \eqref{syscon} to decode the source message, compute the control input $U(t)$, and apply it to the system. The control input can be refined over time, as the estimate of the source can be decoded with increasing accuracy when more and more symbols are received.
The objective is to design an encoding and decoding strategy to stabilize the system by driving the state to zero in probability, i.e.\ we want $\left|\XC(t)\right| \xrightarrow{P} 0$ as $t \to \infty$.  

Although the computational complexity of different encoding-decoding schemes is a key practical issue, in this paper we are concerned with the existence of schemes satisfying our objective, rather than with their  practical implementation.

%%%%%%%%%%%%%%%%%%%%%
%% Results
%%%%%%%%%%%%%%%%%%%%
\section{Main results} \label{sec:results}
\subsection{Necessary condition}
To derive a necessary  condition  for the stabilization of the feedback loop system depicted in Fig.~\ref{fig:system}, 
we first consider the problem of estimating the state in open-loop   over the timing channel along a specific sequence of estimation times. 
We show that if the estimation error tends to zero in probability along this sequence, then for all $\nu>0$ the timing capacity must be at least as large as $(\ngmm)$ times the entropy rate of the system.
This result holds for any source and random channel coding strategies  adopted by the sensor,  and for any strategy adopted by the controller to generate the control input.
Our proof employs a  rate-distortion argument to compute a lower bound on  the minimum number of bits required to represent the state   up to any given accuracy, and this leads to a corresponding lower bound on the required timing capacity of the channel. We then show that the same bound on the timing capacity  holds for   stabilization, since in order to have $\left|\XC(t)\right| \xrightarrow{P} 0$ as $t \to \infty$ in closed-loop,   the  estimation error  in open-loop   must   tend to zero in probability as $t \rightarrow \infty$, and therefore, in particular, along the designed sequence of estimation times.

\subsection{Sufficient condition}
To derive a sufficient condition for stabilization, we  first consider the problem of estimating the state in open-loop   over the timing channel. We focus on a specific sequence of estimation times. 
We provide an explicit source-channel coding scheme which guarantees that if for all $\nu>0$ the timing capacity is larger than $(\gmm)$  times the  entropy rate of the system, 
then the estimation error tends to zero in probability.
We then show that this condition is also sufficient to construct a control scheme such that $\left|\XC(t)\right| \xrightarrow{P} 0$ as $t \rightarrow \infty$.
The main idea behind our strategy is based on the realization that
in the absence of disturbances  all that is needed to drive the state to zero is  communicating the initial condition $X(0)$ to the controller with accuracy that increases exponentially over time. Once this is achieved, the controller can estimate the state $X(t)$ with increasing accuracy over time, and continuously apply an input that drives the state to zero.  This idea has been exploited before in the literature~\cite{Mitter,tatikonda2004control}, and the problem is related to the anytime reliable transmission of a real-valued variable over a digital channel~\cite{como2010anytime}.
Here, we  cast this problem in the framework of the timing channel.  
A main difficulty  in our case is to ensure that 
we can drive the system's state  to zero in probability despite the unbounded random delays occurring in the  timing channel.

In the source coding process, we quantize the interval $[-L,L]$ uniformly using a tree-structured quantizer~\cite{gersho2012vector}. We then map the obtained source code into a channel code suitable for transmission over the timing channel, using the capacity-achieving random codebook of~\cite{anantharam1996bits}. 
Given $X(0)$, the encoder picks a codeword from an arbitrarily large  codebook and starts transmitting the real numbers of the codeword one by one, where each real number corresponds to a holding time, and proceeds in this way forever.  
Every time a sufficiently large number of symbols are received, we use   a maximum likelihood decoder to successively refine the controller's estimate of $X(0)$.  
Namely,
the controller re-estimates $X(0)$ based on the new inter-reception times and all previous inter-reception times, and uses it  to compute the new state estimate of $\XC(t)$ and  control input $U(t)$. We show that when the sensor quantizes  $X(0)$ at sufficiently high resolution, and when the timing capacity is larger than the entropy rate of the system,    the   controller can construct a sufficiently accurate estimate of  $\XC(t)$ and compute $U(t)$ such that  $\left|\XC(t)\right| \xrightarrow{P} 0$ as $t \to \infty$.

 For ease of exposition, Appendix~\ref{flowchart-1} provides a roadmap of the main proofs, summarizing the logical steps underlying the necessary and sufficient conditions.

\section{The estimation problem}
\label{newsection111}
We  start considering the estimation problem depicted in Fig.~\ref{The observation problem}.
 \begin{figure}[t]
	\centering
 \includegraphics
 [width=8.1cm,height=1.40cm]
 {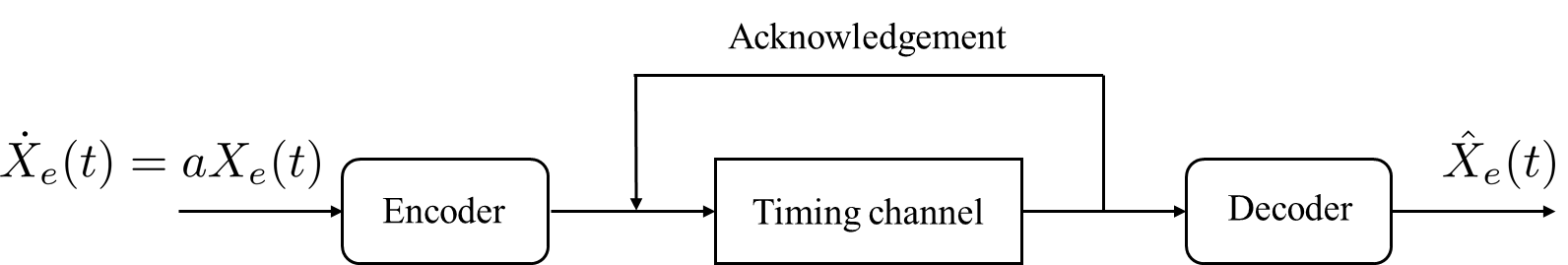}
	\caption{\footnotesize{The estimation problem.}}\label{The observation problem}
\end{figure}
By letting $b=0$ in~\eqref{syscon} we obtain the open-loop equation
\begin{align}
\dot{\XE}(t)=a\XE(t).
\label{estprob}
\end{align}
We assume that the encoder has  causal knowledge of the reception times via acknowledgments through the system as depicted in Fig.~\ref{The observation problem}. 
For any given $\nu>0$, consider a sequence of  estimation  times  $\timenN$  that satisfies 
\begin{align}
\label{tamoomnemishe3}
\ngmm  \le \lim_{n \rightarrow \infty} \frac{\timenN}{\expvalue(\mathcal{T}_n)} < 1.
\end{align}
Our first objective  is to obtain a necessary condition on the capacity of the timing channel required to  construct an estimate $\XEH(\timenN)$ such that 
$|\XE(\timenN)-\XEH(\timenN)|\overset{P}{\rightarrow} 0$ as $n \to \infty$.

Similarly, for any $\nu>0$ we also consider a sequence of  estimation  times 
$\timepS$ 
that satisfies 
\begin{align}
\label{tamoomnemishe2}
1 < \lim_{n \rightarrow \infty} \frac{\timepS}{\expvalue(\mathcal{T}_n)}\le \gmm .  
\end{align}
%we have
Our second objective  is to obtain a sufficient condition on the capacity of the timing channel that ensures the   construction of an  estimate $\XEH(\timepS)$  such that
$|\XE(\timepS)-\XEH(\timepS)|\overset{P}{\rightarrow} 0$ as $n \to \infty$.

The sequences of estimation times that satisfy~\eqref{tamoomnemishe3} and~\eqref{tamoomnemishe2} are ``close'' in the sense that
\begin{align}
\lim_{n \rightarrow \infty} \frac{\timepS-\timenN}{\expvalue(\mathcal{T}_n)} \le 2\nu.
\end{align}

Given the conditions in~\eqref{tamoomnemishe3} and~\eqref{tamoomnemishe2}, the next lemma provides  probabilistic bounds on the number of symbols that are received up to time $\timenN$ and $\timepS$, respectively. 
 Note that, in the limit, $\timenN$ is chosen slightly before the average reception time $\expvalue(\mathcal{T}_n)$, and $\timepS$ is chosen slightly after $\expvalue(\mathcal{T}_n)$. Here, $\nu$ measures the asymptotic timing slack around the average reception time. In essence, the next lemma proves that, as $n \rightarrow \infty$, by time 
$\timenN$, at most $n$ symbols have arrived, and by time $\timepS$, at least $n$ symbols have arrived with arbitrarily high probability.
\begin{lemma}
\label{lemnewsub20}
 Given the conditions in~\eqref{tamoomnemishe3}, the probability $\mathbb{P}(  \mathcal{T}_{n+1} \le \timenN)$ tends to zero as $n \rightarrow \infty$. Moreover, given the condition in~\eqref{tamoomnemishe2}, 
 the probability $\mathbb{P} ( \mathcal{T}_n > \timepS  )$ tends to zero as  $n \rightarrow \infty$. 
\end{lemma}
\begin{proof}
We start by proving that $\mathbb{P}(\mathcal{T}_{n+1} \le \timenN)$ tends to zero as $n \rightarrow \infty$. 
For large enough $n$, using~\eqref{tamoomnemishe3}, we have   that the probability of receiving the $n+1$ symbols before $\timenN$ is
\begin{align}
  \nonumber
\mathbb{P}(\mathcal{T}_{n+1} \le \timenN) &\leq \mathbb{P}(\mathcal{T}_{n+1}/(n+1) < \mathbb{E}(\mathcal{T}_n)/(n+1))
\\
&\le \mathbb{P}(\mathcal{T}_{n+1}/(n+1) < \mathbb{E}(D)).
  \label{tendtozero3}
\end{align}

Since the waiting times $\{W_i\}$ and the random delays $\{S_i\}$ are i.i.d.\ sequences and independent of each other, it follows   by the strong law of large numbers that \eqref{tendtozero3} tends to zero as $n \rightarrow \infty$.

We continue 
by bounding the probability of the event that the $n$-th symbol does not arrive by the estimation deadline $\timepS$. 
For large enough $n$, using~\eqref{tamoomnemishe2}, we have   that the probability of missing the deadline is
\begin{align}
  \nonumber
\mathbb{P}(\mathcal{T}_n > \timepS) &\leq \mathbb{P}(\mathcal{T}_n/n > \mathbb{E}(\mathcal{T}_n)/n)
\\
&=\mathbb{P}(\mathcal{T}_n/n > \mathbb{E}(D)).
  \label{tendtozero}
\end{align}
As in~\eqref{tendtozero3}, because the waiting times $\{W_i\}$ and delays $\{S_i\}$ are i.i.d.\ and independent, the strong law of large numbers ensures that \eqref{tendtozero} converges to zero as $n \to \infty$.
\end{proof}
Lemma~\ref{lemnewsub20} leads to the following conclusions. First, with high probability 
as $n \rightarrow \infty$, by time $\timenN$ at most $n$ symbols have been received.    Second, with high probability 
as $n \rightarrow \infty$ the estimation at time $\timepS$ is evaluated,  after  at least  $n$ symbols have been received.

 \subsection{Necessary condition} 

The next theorem provides a necessary condition on the timing capacity
for the state estimation error to tend to zero in probability. 
\begin{theorem}\label{NECCONDITIO}
Assume that $W_i, S_i$ are i.i.d.
Consider the estimation problem  depicted in Fig.~\ref{The observation problem} with system dynamics~\eqref{estprob}. Consider transmitting $n$ symbols over the telephone signaling channel~\eqref{channelm}, and for any $\nu>0$ let  the sequence of estimation times  satisfy  \eqref{tamoomnemishe3}. 
If ${|\XE(\timenN)-\XEH(\timenN)|\overset{P}{\rightarrow}0}$, then 
\begin{align}\label{mainresult}
I(W;W+S)  \ge ~a~(\ngmm)~\expvalue(W+S) \;\;\; [\mbox{\rm{nats}}],
\end{align}
and consequently
\begin{align}\label{obs:cap:nec:12}
C \ge  a \, (\ngmm)~~[\mbox{\rm{nats}}/\mbox{\rm{sec}}].
\end{align}
\end{theorem}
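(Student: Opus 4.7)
My plan is to sandwich the mutual information $I(\XE(0); \XEH(t_n))$ between an upper bound coming from the timing channel and a lower bound coming from the error requirement, and then take limits. For the upper bound I would exploit the Markov chain $\XE(0) \to W^n \to D^n \to \XEH(t_n)$ together with the fact that, under the i.i.d.\ random codebook assumption and the i.i.d.\ delays $\{S_i\}$, the channel $D_i = W_i + S_i$ is memoryless with i.i.d.\ inputs of marginal law $p_W$, so that
\begin{align*}
I(\XE(0); \XEH(t_n)) \le I(W^n; D^n) \le n\, I(W;W+S).
\end{align*}

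For the lower bound I would use $\XE(t_n) = e^{a t_n} \XE(0)$ and the scaling rule $h(\XE(t_n)) = h(\XE(0)) + a t_n$ (in nats). Setting $B_n = \XE(t_n) - \XEH(t_n)$ and invoking $h(\XE(t_n) \mid \XEH(t_n)) \le h(B_n)$ gives
\begin{align*}
I(\XE(0); \XEH(t_n)) \ge h(\XE(0)) + a t_n - h(B_n).
\end{align*}
The next step is to turn $|B_n|\xrightarrow{P}0$ into a quantitative bound on $h(B_n)$. Since $|\XE(0)|\le L$, I can clip $\XEH(t_n)$ into $[-L e^{a t_n},\, L e^{a t_n}]$ without enlarging the error, producing the deterministic envelope $|B_n|\le 2 L e^{a t_n}$. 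Then, for any fixed $\delta>0$ with $\epsilon_n := \mathbb{P}(|B_n|>\delta)\to 0$, I would split $h(B_n)$ along the indicator $\mathbf{1}\{|B_n|>\delta\}$ and bound each conditional differential entropy by the log of its support, yielding
\begin{align*}
h(B_n) \le \ln 2 + (1-\epsilon_n)\ln(2\delta) + \epsilon_n\bigl(\ln(4L) + a t_n\bigr).
\end{align*}

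Chaining these inequalities, dividing by $n$, and using $\expvalue(\mathcal{T}_n) = n\,\expvalue(W+S)$ together with \eqref{tamoomnemishe2}, I would let $n\to\infty$ on the sequence achieving $t_n/\expvalue(\mathcal{T}_n)\to\Gamma$, so that $t_n/n \to \Gamma\,\expvalue(W+S)$, the $(1-\epsilon_n)$ factor tends to $1$, and all $\delta$- and $L$-dependent constants vanish as $O(1/n)$; this delivers \eqref{mainresult}, and then \eqref{obs:cap:nec:12} follows by plugging $\chi = \expvalue(W)$ into the Anantharam--Verd\'u formula \eqref{capanaverdum}. The main delicacy is exactly the conversion of $|B_n|\xrightarrow{P}0$ into an entropy bound: convergence in probability alone does not control differential entropy, so the a priori envelope $|B_n|\le 2 L e^{a t_n}$ obtained from clipping is essential. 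With it, the otherwise dangerous tail term $\epsilon_n\, a t_n$ is tamed by the final division by $n$, because $t_n/n$ stays bounded by $\Gamma\,\expvalue(W+S)$ on the sequence of interest.
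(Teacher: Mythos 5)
Your proposal is correct and follows essentially the same route as the paper: an upper bound on the relevant mutual information via data processing through the memoryless i.i.d.\ timing channel (the paper's Lemma~\ref{mutalinfotimin}), a lower bound via the differential entropy of the error split on a binary error-indicator (the paper's rate-distortion Lemma~\ref{ratedistortionlower}), and a final normalization by $n$ and passage to the limit along a sequence with $t_n/\expvalue(\mathcal{T}_n)\to\Gamma$. The one substantive refinement in your version is the clipping step, which yields the deterministic envelope $|B_n|\le 2Le^{a t_n}$ and hence a rigorous bound on $h(B_n\mid |B_n|>\delta)$; the paper instead replaces the analogous term with $h(\XE(t_n))$ by citing ``conditioning reduces entropy,'' which does not literally hold when conditioning on a fixed event (only on average), though the resulting constant is harmless either way since it is annihilated by the $1/n$ normalization.
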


The proof of Theorem~\ref{NECCONDITIO} is given in the appendix.

\hfill \break
\subsection{Sufficient condition}
The next theorem provides a sufficient condition for convergence of the state estimation error to zero in probability along  any sequence of estimation times $\timepS$ satisfying~\eqref{tamoomnemishe2}, in the case of exponentially distributed delays.

\begin{theorem}\label{SUFFCONDMA}
For any given $\nu > 0$.
Consider the estimation problem  depicted in Fig.~\ref{The observation problem} with system dynamics~\eqref{estprob}. Consider transmitting $n$ symbols over the telephone signaling channel~\eqref{channelm}.
Assume $\{S_i\}$ are drawn i.i.d. from exponential distribution with mean $\expvalue(S)$. 
If the capacity of the timing channel  is at least
\begin{align}\label{sufCap2}
C  \ge a \, (\gmm) \;\;\;[\mbox{\rm{nats}}/\mbox{\rm{sec}}],
\end{align}
then for any sequence of times $\{\timepS \}$ that satisfies~\eqref{tamoomnemishe2}, 
we can compute an estimate $\XEH(\timepS)$ such that as $n \to \infty$, we have
\begin{align}
  {|\XE(\timepS)-\XEH(\timepS)|\overset{P}{\rightarrow}0}.
\end{align}
\end{theorem}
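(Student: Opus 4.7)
The plan is to reduce the estimation problem to the task of communicating the initial condition $X(0)$ with geometrically decreasing distortion, using the capacity-achieving random timing code of Anantharam--Verd\'u. Since the open-loop dynamics are deterministic, $X_e(t_n)=e^{a t_n}X(0)$, so any estimator of the form $\hat{X}_e(t_n)=e^{a t_n}\hat{X}(0)$ produces error $|X_e(t_n)-\hat{X}_e(t_n)|=e^{a t_n}|X(0)-\hat{X}(0)|$. The task thus reduces to delivering $\hat{X}(0)$ to the controller with enough precision that $e^{a t_n}|X(0)-\hat{X}(0)|\xrightarrow{P}0$.

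First, I would pick a rate $R$ with $a\Gamma<R<C$, which is possible by the hypothesis~\eqref{sufCap2}. For each $n$, apply the tree-structured uniform quantizer on $[-L,L]$ with $M_n=\lceil e^{R T_n}\rceil$ cells, where $T_n=\mathbb{E}(\mathcal{T}_n)$; the quantization error is at most $L e^{-R T_n}$. Next, using the Anantharam--Verd\'u i.i.d.\ random codebook of Definition~\ref{randomcodebook} for the exponential-delay telephone signaling channel, assign each quantization cell to a codeword of length $n$ with expected reception time $\le T_n$. Since $R<C=1/(e\mathbb{E}(S))$, such a sequence of codes exists with ML-decoding error probability $\delta_{T_n}\to 0$. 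After all $n$ inter-reception times $\{D_i\}_{i=1}^n$ are observed, the controller runs the ML decoder and sets $\hat{X}(0)$ to the centroid of the decoded cell, then outputs $\hat{X}_e(t_n)=e^{a t_n}\hat{X}(0)$.

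For the error analysis, on the correct-decoding event (probability $\ge 1-\delta_{T_n}$) we have
\begin{equation}
|X_e(t_n)-\hat{X}_e(t_n)|\le L\, e^{a t_n-R T_n}.
\end{equation}
Because the ratio $t_n/T_n$ is asymptotically bounded by $\Gamma$ from~\eqref{tamoomnemishe2}, the exponent $a t_n-R T_n\le(a\Gamma-R+o(1))T_n$ tends to $-\infty$, so the right-hand side vanishes. On the complementary event, of vanishing probability $\delta_{T_n}$, the error can be crudely upper-bounded using $|X(0)|<L$, which is enough for convergence in probability. Combining the two gives $\Pr\big(|X_e(t_n)-\hat{X}_e(t_n)|>\epsilon\big)\to 0$ for every $\epsilon>0$.

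The main obstacle I anticipate is the timing alignment between the random reception time $\mathcal{T}_n$ and the deterministic estimation time $t_n$: the estimate $\hat{X}(0)$ is only available at $\mathcal{T}_n$, while we must report at $t_n$. To resolve this I would invoke the law of large numbers, $\mathcal{T}_n/T_n\to 1$ almost surely (since the $D_i$ are i.i.d.\ with finite mean under the chosen input distribution and exponential $S_i$), together with the strict inequality $\lim t_n/\mathbb{E}(\mathcal{T}_n)>1$, to conclude that $\Pr(\mathcal{T}_n\le t_n)\to 1$; on this event the decoder has already produced $\hat{X}(0)$ before time $t_n$. A secondary technical point is to verify that the gap between $R$ and $C$ can be made large enough to absorb the slack in~\eqref{timingcapaman2} (i.e.\ choosing $\gamma$ strictly smaller than $C-a\Gamma$), which is straightforward once the coding rate $R$ is fixed.
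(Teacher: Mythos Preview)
Your proposal is correct and takes essentially the same approach as the paper: reduce to transmitting $X(0)$ with exponentially shrinking quantization error via a tree-structured quantizer, encode with the Anantharam--Verd\'u capacity-achieving i.i.d.\ timing code, handle the alignment $\{\mathcal{T}_n\le t_n\}$ via the law of large numbers, and split the error event into correct decoding (where $e^{at_n-RT_n}\to 0$ since $R>a\Gamma$) and incorrect decoding (vanishing probability). The paper makes the nested ``anytime'' structure more explicit---a single doubly-infinite codebook is fixed once and the decoder re-decodes from its first $n$ columns at each step, so the encoder can commit to its waiting-time sequence without knowing $n$---but since your codeword symbols are i.i.d.\ and your quantizer is tree-structured, this consistency is implicit in your construction and the argument is the same.
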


The proof  of  Theorem~\ref{SUFFCONDMA} is given in the appendix. 
The result is strengthened in the next section (see Corollary~\ref{remark2}), showing that   $C>a \, (1+\nu)$ is also sufficient to drive the state estimation error     to zero in probability for all $t \rightarrow \infty$.

\begin{remark}
 Since $\nu >0$ can be chosen arbitrarily small 
Theorems~\ref{NECCONDITIO} and~\ref{SUFFCONDMA} provide an ``almost" tight necessary and sufficient condition for the estimation problem.
The entropy-rate of our system is   $a$ nats/time~\cite{Topological,liberzon2017entropy,colonius2009invariance,colonius2012minimal,rungger2017invariance}. This represents the amount of uncertainty per unit time generated by the system in open loop. In fact, 
\eqref{mainresult},
can be seen as a typical scenario in    data-rate theorems: to drive the error to zero the mutual information between an  encoding symbol $W$ and its received  noisy version $W+S$ should be larger than the average ``information growth'' of the state during the inter-reception interval $D$, which is given by
\begin{align}
\expvalue(a D)= a~\expvalue(W+S).
\end{align} 
\end{remark}

\section{The stabilization problem}
\label{newsection222234fej}
\subsection{Necessary condition}
We now turn to consider the stabilization problem. Our first lemma states that if in closed-loop  we are able to drive the state to zero in probability, then in open-loop we are also able to estimate the state with vanishing error in probability. 
\begin{lemma}\label{ocem234221q}
Consider   stabilization of the closed-loop system~\eqref{syscon}   and  estimation of the open-loop system~\eqref{estprob} over the timing channel~\eqref{channelm}. If there exists a controller such that  ${|\XC(t)|\overset{P}{\rightarrow} 0}$ as $t \rightarrow \infty$, in closed-loop, then  there exists an estimator such that
$|\XE(t )-\XEH(t )|\overset{P}{\rightarrow} 0$ as $t \rightarrow \infty$, in open-loop.
\end{lemma}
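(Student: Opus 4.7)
The plan is to construct an open-loop estimator by \emph{simulating} the hypothesized closed-loop controller, and to relate the resulting estimation error to the closed-loop state via a coupling argument. Let $(\phi,\psi)$ denote the closed-loop encoder--decoder pair that achieves $|\XC(t)|\overset{P}{\rightarrow}0$, where $\phi$ maps $\XC(0)$ together with the causal acknowledgments into waiting times $\{W_i\}$, and $\psi$ maps the received inter-reception times into the control input $U(t)$. Since the setup in Section~\ref{sec:setup} requires $\phi$ to depend only on $\XC(0)$, the system dynamics, and the bound $L$, we may \emph{reuse the same} $\phi$ in open-loop by feeding it $\XE(0)$ in place of $\XC(0)$. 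The delays $\{S_i\}$ have the same law and $\XE(0),\XC(0)$ share the same distribution, so this yields a well-defined transmission over the timing channel.

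On the decoder side, the open-loop estimator observes the inter-reception times $\{D_i\}$ and internally computes the simulated control $\hat{U}(t)$ that $\psi$ would have produced from these observations. It then outputs
\begin{equation}
\XEH(t) \;:=\; -\int_0^t e^{a(t-s)}\, b\, \hat{U}(s)\, ds.
\end{equation}
Since $\XE(t)=e^{at}\XE(0)$, the estimation error becomes
\begin{equation}
\XE(t) - \XEH(t) \;=\; e^{at}\XE(0) \,+\, \int_0^t e^{a(t-s)}\, b\, \hat{U}(s)\, ds,
\end{equation}
which is exactly the closed-loop trajectory produced by running $(\phi,\psi)$ against $\dot{\XC}=a\XC+bU$ with initial condition $\XC(0)=\XE(0)$ and the same realization of $\{S_i\}$.

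Coupling the two scenarios in this way---identifying $\XE(0)$ with $\XC(0)$ and sharing the same delays---forces $\XE(t)-\XEH(t)$ to inherit exactly the law of the closed-loop state $\XC(t)$ under the hypothesized stabilizing strategy. The assumption $|\XC(t)|\overset{P}{\rightarrow}0$ then immediately yields $|\XE(t)-\XEH(t)|\overset{P}{\rightarrow}0$, which is the claim of Lemma~\ref{ocem234221q}.

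The main obstacle I anticipate is making the coupling fully rigorous. One must verify that the closed-loop encoder genuinely depends only on $\XC(0)$ and on the acknowledgments (and not implicitly on the controlled trajectory that the acknowledgment mechanism rides on), so that transplanting $\phi$ into an open-loop system preserves the joint law of waiting times, channel delays, and received inter-reception times. The model in Section~\ref{sec:setup} spells out exactly this restriction on $\phi$, and the availability of acknowledgments is postulated identically in the estimation setup of Section~\ref{newsection111}, which is what makes the coupling work; one then only needs to check that $\hat{U}(\cdot)$ is a measurable function of the observed reception times, which is routine since $\psi$ is by assumption implementable.
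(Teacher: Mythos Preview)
Your proposal is correct and follows essentially the same approach as the paper: define the open-loop estimate as minus the variation-of-constants integral of the (simulated) control, so that $\XE(t)-\XEH(t)$ coincides with the closed-loop state, and conclude from $|\XC(t)|\overset{P}{\rightarrow}0$. The paper's proof is terser---it simply sets $\XEH(t)=-\zeta(t)$ with $\zeta(t)=e^{at}\int_0^t e^{-a\varrho}bU(\varrho)\,d\varrho$ and identifies $\XC(0)$ with $\XE(0)$ without spelling out the coupling---whereas you make the simulation and coupling explicit, but the mechanism is identical.
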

\begin{proof}
From~\eqref{syscon}, we have in closed loop
    \begin{align} 
      \label{zeta}
    \XC(t) &= e^{at}X(0)+\zeta(t),  \\  
    \zeta(t)&=e^{at} \int_{0}^{t}  \label{zeta1}
    {e^{-a\varrho}bU(\varrho)d\varrho}. 
  \end{align}
It follows that if
\begin{align}
\lim_{t \rightarrow \infty} \mathbb{P}\left(|\XC(t)| \le \epsilon\right)=1,
\end{align}
then we also have  
\begin{align}\label{fkekw2224r}
\lim_{t \rightarrow \infty} \mathbb{P}\left(\left|e^{at}X(0)+\zeta(t)\right| \le \epsilon\right)=1.
\end{align}
On the other hand, from~\eqref{estprob}  we have in open loop
\begin{align}
\XE(t) &= e^{at}X(0),
\end{align}
and we can   choose $\XEH(t) = -\zeta(t)$  so that   
\begin{align}
|\XE(t)-\XEH(t)|=|e^{at}X(0)+\zeta(t)|\overset{P}{\rightarrow}0,
\end{align}
where the last step follows from~\eqref{fkekw2224r}. 
\end{proof}

The next theorem provides a necessary rate for the stabilization problem. 
\begin{theorem}\label{NECCONDITIO:stabili}
Consider the stabilization of the closed-loop system~\eqref{syscon}. 
If ${|\XC(t)|\overset{P}{\rightarrow}0}$ as $t \to \infty$, then
\begin{align}\label{nec:stab:3202}
I(W;W+S) ~ \ge ~ a~(\ngmm)~\expvalue(W+S) \;\;\;  [\mbox{\rm{nats}}],
\end{align}
and consequently 
\begin{align}\label{nec:stab:3202:cap}
C ~ \ge  a~(\ngmm) ~~[\mbox{\rm{nats}}/\mbox{\rm{sec}}].
\end{align}
\end{theorem}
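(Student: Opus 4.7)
The plan is to reduce the stabilization statement to the estimation result (Theorem 1) via Lemma 1, and then drive the parameter $\Gamma$ down to $1$. Specifically, Lemma 1 says that if a controller achieves $|X(t)|\overset{P}{\to}0$ as $t\to\infty$ in closed loop, then the companion estimator $\hat{X}_e(t)=-\zeta(t)$ (with $\zeta(t)$ as in~\eqref{zeta1}) satisfies $|X_e(t)-\hat{X}_e(t)|\overset{P}{\to}0$ for the open-loop system~\eqref{estprob}. Crucially, this convergence holds as $t\to\infty$ continuously, hence along every sequence of times diverging to infinity.

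With this in hand, I would apply Theorem~\ref{NECCONDITIO} to the induced estimator. Fix any $\Gamma>1$ and choose any sequence of estimation times $\{t_n\}$ that satisfies~\eqref{tamoomnemishe2}, for instance $t_n=\Gamma\,\mathbb{E}(\mathcal{T}_n)$. Since $\mathbb{E}(\mathcal{T}_n)\to\infty$ (each symbol requires a positive expected inter-reception time), we have $t_n\to\infty$, and therefore the open-loop estimation error $|X_e(t_n)-\hat{X}_e(t_n)|\overset{P}{\to}0$ along this sequence. Theorem~\ref{NECCONDITIO} then yields
\begin{align}
I(W;W+S)\;\ge\;a\,\Gamma\,\mathbb{E}(W+S)\quad[\text{nats}].
\end{align}
Because this holds for every $\Gamma>1$, I would take the infimum over $\Gamma>1$ (equivalently, let $\Gamma\downarrow 1$), which gives~\eqref{nec:stab:3202}.

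To obtain the capacity statement~\eqref{nec:stab:3202:cap}, I would note that any waiting-time distribution used by a stabilizing encoder must satisfy the bound just derived. Setting $\chi=\mathbb{E}(W)$ in the Anantharam--Verd\'{u} formula~\eqref{capanaverdum}, we have $\mathbb{E}(W+S)=\mathbb{E}(S)+\chi$, so
\begin{align}
\frac{I(W;W+S)}{\mathbb{E}(S)+\chi}\;\ge\;a\quad[\text{nats}/\text{sec}],
\end{align}
and taking the supremum over admissible $W$ and $\chi>0$ on the left-hand side gives $C\ge a$.

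The main obstacle is the subtle interplay between the continuous-time convergence $|X(t)|\overset{P}{\to}0$ and the discrete-time condition~\eqref{tamoomnemishe2} that Theorem~\ref{NECCONDITIO} requires. One must verify that the estimator extracted in Lemma~\ref{ocem234221q} enjoys convergence along the deterministic sequence $t_n=\Gamma\,\mathbb{E}(\mathcal{T}_n)$ rather than along the random reception times $\mathcal{T}_n$; this is immediate because convergence in probability as $t\to\infty$ implies convergence along any deterministic divergent sequence. A second minor point worth checking is that $\Gamma\downarrow 1$ is genuinely allowed: Theorem~\ref{NECCONDITIO} is stated for strict inequality $\Gamma>1$, but since~\eqref{nec:stab:3202} is the infimum over all such $\Gamma$, no boundary issue arises. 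Beyond these bookkeeping details, the argument is a clean consequence of the estimation result and requires no new rate-distortion computation.
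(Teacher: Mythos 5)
Your argument is the paper's argument: invoke Lemma~\ref{ocem234221q} to pass from closed-loop stabilization to open-loop estimation, apply Theorem~\ref{NECCONDITIO} along a sequence $\{t_n\}$ satisfying~\eqref{tamoomnemishe2} for arbitrary $\Gamma>1$, and then let $\Gamma\downarrow 1$. The extra remarks about convergence along deterministic sequences and the Anantharam--Verd\'{u} supremum are correct bookkeeping but match what the paper already establishes in the proof of Theorem~\ref{NECCONDITIO}.
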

\begin{proof}
By Lemma~\ref{ocem234221q} we have that if ${|\XC(t)|\overset{P}{\rightarrow}0}$, then ${|\XE(t)-\XEH(t)|\overset{P}{\rightarrow} 0}$ for all $t \rightarrow \infty$, and in particular along a sequence
$\{\timenN\}$ satisfying \eqref{tamoomnemishe3}.  The result now follows from Theorem~\ref{NECCONDITIO}.
\end{proof}

\subsection{Sufficient condition}
Our next lemma strengthens our estimation results, stating that it is enough for the state estimation error to converge to zero in probability as $n \rightarrow \infty$ along a sequence of estimation times $\{\timepS\}$ satisfying~\eqref{tamoomnemishe2}, to ensure it   converges to zero for all $t \rightarrow \infty$.
\begin{lemma}\label{vri221333}
Consider   estimation of the system~\eqref{estprob} over the timing channel~\eqref{channelm}.
If there exists  $1<\Gamma_0 \leq \gmm$ such that along the sequence of estimation times $\timepS = \Gamma_0 \mathbb{E}(\mathcal{T}_n)$ (which satisfies~\eqref{tamoomnemishe2})  we have  ${|\XE(\timepS)-\XEH(\timepS)|\overset{P}{\rightarrow} 0}$ as $n \rightarrow \infty$, then for all  $t \rightarrow \infty$ we also have ${|\XE(t)-\XEH(t)|\overset{P}{\rightarrow} 0}$.
\end{lemma}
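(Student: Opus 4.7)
The plan is to extend the discrete-time estimator guaranteed by the hypothesis to a continuous-time estimator by propagating the state forward through the open-loop dynamics between reception-based estimation instants. The key observation is that because the inter-reception times $D_i = W_i + S_i$ are i.i.d., we have $\mathbb{E}(\mathcal{T}_n) = n\,\mathbb{E}(W+S)$, so the hypothesized estimation times $t_n = \Gamma_0\,\mathbb{E}(\mathcal{T}_n) = \Gamma_0 n\,\mathbb{E}(W+S)$ are equispaced with constant gap $\Delta := \Gamma_0 \mathbb{E}(W+S)$. Hence the ``dead time'' between consecutive estimation instants does not grow with $n$, which is precisely what prevents the exponential blow-up of $\XE(t)=e^{at}X(0)$ from ruining the interpolation.

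Concretely, I would define the extended estimator for every $t \geq t_1$ by
\begin{align*}
n(t) &= \max\{n : t_n \leq t\}, \\
\XEH(t) &= e^{a(t-t_{n(t)})}\,\XEH(t_{n(t)}),
\end{align*}
which is causal, as it uses only symbols received up to time $t_{n(t)} \leq t$. Since the open-loop dynamics give $\XE(t) = e^{a(t-t_{n(t)})}\XE(t_{n(t)})$, the estimation error satisfies
\begin{align*}
|\XE(t)-\XEH(t)| = e^{a(t-t_{n(t)})}\,|\XE(t_{n(t)})-\XEH(t_{n(t)})| \leq e^{a\Delta}\,|\XE(t_{n(t)})-\XEH(t_{n(t)})|,
\end{align*}
where the last inequality uses $t - t_{n(t)} \leq t_{n(t)+1} - t_{n(t)} = \Delta$.

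The factor $e^{a\Delta}$ is a deterministic constant independent of $t$. Given any $\epsilon > 0$,
\begin{align*}
\mathbb{P}\bigl(|\XE(t)-\XEH(t)| > \epsilon\bigr) \leq \mathbb{P}\bigl(|\XE(t_{n(t)})-\XEH(t_{n(t)})| > \epsilon\,e^{-a\Delta}\bigr).
\end{align*}
Since $t_n$ grows linearly in $n$, we have $n(t) \to \infty$ deterministically as $t \to \infty$, so the right-hand side tends to zero by the hypothesis ${|\XE(t_n)-\XEH(t_n)| \overset{P}{\rightarrow} 0}$. This gives the desired conclusion.

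The proof is essentially mechanical once the interpolation scheme is written down; the one point that requires care is verifying that the gap $t_{n+1}-t_n$ does not grow with $n$, for otherwise the multiplicative factor $e^{a(t-t_{n(t)})}$ could swamp the vanishing discrete-time error. Using $\mathbb{E}(\mathcal{T}_n) = n\,\mathbb{E}(W+S)$ makes this bound uniform in $n$, which is the crux of the argument.
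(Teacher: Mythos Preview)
Your proof is correct and follows essentially the same approach as the paper: both identify that the gap $t_{n+1}-t_n=\Gamma_0\,\expvalue(D)$ is a constant, propagate the estimate forward via $\XEH(t)=e^{a(t-t_{n(t)})}\XEH(t_{n(t)})$, and bound the error by the fixed factor $e^{a\Gamma_0\expvalue(D)}$ times the discrete-time error. Your presentation via $n(t)$ is slightly more streamlined, but the argument is the same.
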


\begin{proof}
We have that for   $\timepS=\Gamma_0 \mathbb{E}(\mathcal{T}_n)$ and for all   $\epsilon'>0$,   and  $\phi>0$, there exist $n_\phi$ such that for all $n\ge n_\phi$
\begin{align}\label{wwIEE2nkmkkmlo}
\mathbb{P}\left(|\XE(\timepS)-\XEH(\timepS)|>\epsilon'\right)\le \phi.
\end{align}
Let $t_{n_\phi}=\Gamma_0 \mathbb{E}(\mathcal{T}_{n_\phi})$ be the time at which we estimate the state for the $n_\phi$th time. We want to show that for all $t \in [t_{n_\phi},t_{n_{\phi}+1}]$ and $\epsilon>0$, we also have
\begin{align}\label{wo13rwx111}
\mathbb{P}\left(|\XE(t)-\XEH(t)|>\epsilon\right)\le \phi.
\end{align}
Consider the random time 
$\mathcal{T}_{n_\phi}$ at which   $\spadesuit$ is received for the $n_\phi$th   time.
We have
\begin{align} \label{timediff}
t_{n_{\phi}+1}-t_{n_\phi} &= \Gamma_0 \, \expvalue(\mathcal{T}_{n_{\phi}+1}) -  \Gamma_0 \, \expvalue(\mathcal{T}_{n_{\phi}})   \nonumber \\
&= (n_\phi+1) \Gamma_0 \, \expvalue(D) -  n_\phi \Gamma_0 \, \expvalue(D)  \nonumber \\
& =  \Gamma_0 \, \expvalue(D).
\end{align}

 For all  $t \in [t_{n_\phi},t_{n_{\phi}+1}]$, from the open-loop equation~\eqref{estprob} we have
\begin{align} \label{comb1}
\XE(t) &= e^{a(t-t_{n_\phi})}\XE(t_{n_\phi}).
\end{align}
We then let
\begin{align} \label{comb2}
\XEH(t) &= e^{a(t-t_{n_\phi})}\XEH(t_{n_\phi}).
\end{align}
Combining \eqref{comb1} and \eqref{comb2} and using~\eqref{timediff}, we obtain  that for all  $t \in [t_{n_\phi},t_{n_{\phi}+1}]$   
\begin{align}\label{nfee2ed222}
|\XE(t)-\XEH(t)| & \le e^{a \Gamma_0 \, \expvalue(D)}|\XE(t_{n_\phi})-\XEH(t_{n_\phi})|.
\end{align}
From which it follows that
\begin{align}\nonumber
\mathbb{P}\left(|\XE(t)-\XEH(t)| >\epsilon'e^{a \Gamma_0 \, \expvalue(D) }\right) 
\\
 \leq \mathbb{P}\left(|\XE(t_{n_\phi})-\XEH(t_{n_\phi})|> \epsilon'\right).
\end{align}
Since~\eqref{wwIEE2nkmkkmlo} holds for all $n \geq n_\phi$, we also  have
\begin{align}
\mathbb{P}\left(|\XE(t_{n_\phi})-\XEH(t_{n_\phi})|\geq \epsilon'\right) \leq \phi.
\end{align}

We can now let $\epsilon'< \epsilon e^{-a \Gamma_0 \, \expvalue(D)}$    and the result follows.
\end{proof}
\hfill \break
 
Lemma~\ref{vri221333} yields the following corollary, which is an immediate extension of Theorem~\ref{SUFFCONDMA}.
\begin{corollary}~\label{remark2}
Consider the estimation problem  depicted in Fig.~\ref{The observation problem} with system dynamics~\eqref{estprob}. Consider transmitting $n$ symbols over the telephone signaling channel~\eqref{channelm}.
Assume $\{S_i\}$ are drawn i.i.d. from exponential distribution with mean $\expvalue(S)$. 
If the capacity of the timing channel  is at least $C \ge a \, (\gmm)$, then we have ${|\XE(t)-\XEH(t)|\overset{P}{\rightarrow}0}$ as $t \rightarrow \infty$. 
\end{corollary}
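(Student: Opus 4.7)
The plan is to chain Theorem~\ref{SUFFCONDMA} and Lemma~\ref{vri221333} together, with only a one-line parameter selection in between. Given the hypothesis $C > a\Gamma$ with $\{S_i\}$ exponentially distributed, Theorem~\ref{SUFFCONDMA} supplies an encoder-decoder pair which guarantees ${|\XE(t_n)-\XEH(t_n)|\overset{P}{\rightarrow} 0}$ along every sequence $\{t_n\}$ satisfying the two-sided ratio condition~\eqref{tamoomnemishe2}. To feed this conclusion into the lemma, I would select one specific $\Gamma_0 \in (1,\Gamma]$ (this interval is nonempty because $\Gamma > 1$ by~\eqref{tamoomnemishe2}) and set $t_n = \Gamma_0 \, \expvalue(\mathcal{T}_n)$. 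This sequence trivially satisfies~\eqref{tamoomnemishe2} since $t_n/\expvalue(\mathcal{T}_n) = \Gamma_0$ for every $n$, and Theorem~\ref{SUFFCONDMA} therefore delivers convergence in probability of the estimation error along it.

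Second, Lemma~\ref{vri221333} is designed precisely to take as input convergence along a sequence of the form $t_n = \Gamma_0 \expvalue(\mathcal{T}_n)$ with some $\Gamma_0 > 1$, and to upgrade it to convergence for every diverging $t$. The bound~\eqref{nfee2ed222} used in its proof inflates the open-loop estimation error on each inter-sample interval $[t_{n_\phi},t_{n_\phi+1}]$ by at most the deterministic factor $e^{a\Gamma_0 \expvalue(D)}$, which is finite and can be absorbed into the tolerance parameter $\epsilon'$. Applying this lemma to the convergent sequence constructed in the first step gives the stated conclusion ${|\XE(t)-\XEH(t)|\overset{P}{\rightarrow} 0}$ as $t \to \infty$.

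The only thing worth verifying is the compatibility of parameters, namely that the $\Gamma_0 > 1$ required by Lemma~\ref{vri221333} can be chosen inside the interval $(1,\Gamma]$ permitted by Theorem~\ref{SUFFCONDMA}; this is immediate from $\Gamma > 1$. There is no real obstacle to anticipate: the substantive work was done in the theorem (construction of the tree-structured quantizer together with the maximum-likelihood decoder over the capacity-achieving random codebook) and in the lemma (the deterministic sandwich between consecutive receptions), and the corollary simply records that these two results dovetail.
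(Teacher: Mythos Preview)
Your proposal is correct and follows essentially the same route as the paper: apply Theorem~\ref{SUFFCONDMA} along the specific sequence $t_n=\Gamma_0\,\expvalue(\mathcal{T}_n)$ for some $\Gamma_0\in(1,\Gamma]$, and then invoke Lemma~\ref{vri221333} to upgrade convergence along that sequence to convergence for all $t\to\infty$. The paper's proof differs only cosmetically, phrasing the parameter selection as ``since $C>a$ there exists $\Gamma_0>1$ with $C>a\Gamma_0$'' rather than picking $\Gamma_0$ directly from $(1,\Gamma]$.
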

\begin{proof}
We start by considering the  sequence of estimation times $\timepS= (\gmm)\, \mathbb{E}(\mathcal{T}_n)$. Since 
$C \ge a \, (\gmm)$, by Theorem~\ref{SUFFCONDMA}   
we have ${|\XE(\timepS)-\XEH(\timepS)|\overset{P}{\rightarrow} 0}$ as $n \rightarrow \infty$. Then, by Lemma~\ref{vri221333}   we also have ${|\XE(t)-\XEH(t)|\overset{P}{\rightarrow} 0}$ as $t \rightarrow \infty$.
\end{proof}

The next key lemma states that if   we are able to estimate the state with vanishing error in probability, then we are also able to drive the state to zero in probability. 
\begin{lemma}\label{iirir3404www}
Consider   stabilization of the closed-loop system~\eqref{syscon}   and  estimation of the open-loop system~\eqref{estprob} over the timing channel~\eqref{channelm}.
If there exists an estimator such that ${|\XE(t )-\XEH(t )|\overset{P}{\rightarrow} 0}$  as $t \rightarrow \infty$, in open-loop,  then  there exists a controller such that  ${|\XC(t)|\overset{P}{\rightarrow} 0}$  as $t \to \infty$, in closed-loop.
\end{lemma}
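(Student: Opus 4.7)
The plan is to reduce stabilization to estimation. Following the decomposition used in the proof of Lemma~\ref{ocem234221q}, I would write $\XC(t)=\XE(t)+\zeta(t)$, where $\zeta(t)=e^{at}\int_0^t e^{-a\varrho}bU(\varrho)\,d\varrho$ solves $\dot{\zeta}=a\zeta+bU$ with $\zeta(0)=0$. Since the sensor directly observes $X(0)$ at time zero, I would have it run, in closed loop, the very same source--channel encoder used by the given open-loop estimator, and have the controller run the same decoder to produce an estimate $\XEH(t)$. Because the encoder, decoder, channel, and the acknowledgements through the system are all identical to those of the open-loop setup, the hypothesis $|\XE(t)-\XEH(t)|\overset{P}{\rightarrow}0$ continues to hold for this $\XEH(t)$ in closed loop.

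Given $\XEH(t)$, I would then pick $U$ so that $\zeta(t)=-\XEH(t)$ for every $t\ge 0$, which would yield $\XC(t)=\XE(t)-\XEH(t)$ and hence $|\XC(t)|\overset{P}{\rightarrow}0$ directly from the hypothesis. To realise $\zeta=-\XEH$, I would exploit that between consecutive reception times $\mathcal{T}_k$ and $\mathcal{T}_{k+1}$ the estimate obeys $\XEH(t)=e^{a(t-\mathcal{T}_k)}\XEH(\mathcal{T}_k^+)$, which matches the free dynamics $\dot{\zeta}=a\zeta$; hence setting $U\equiv 0$ on the open interval $(\mathcal{T}_k,\mathcal{T}_{k+1})$ automatically preserves the identity $\zeta=-\XEH$ once it holds at $\mathcal{T}_k^+$. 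At each reception time, $\XEH$ undergoes a jump $\Delta_k=\XEH(\mathcal{T}_k^+)-\XEH(\mathcal{T}_k^-)$, which I would match by the impulsive input $U(t)=-b^{-1}\sum_k \Delta_k\,\delta(t-\mathcal{T}_k)$, so that $\zeta$ jumps by $-\Delta_k$ at each $\mathcal{T}_k$.

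The only delicate point will be the admissibility of the distributional input. I would handle it in either of two equivalent ways: (i) interpret $U$ as a measure-valued control and $\XC$ as the right-continuous solution of the resulting jump ODE; or (ii) replace each Dirac by a bounded pulse of magnitude $\Delta_k/(b\varepsilon_k)$ on a short window $[\mathcal{T}_k,\mathcal{T}_k+\varepsilon_k]$ whose integrated effect equals the desired impulse, and then let $\varepsilon_k\downarrow 0$. On any bounded time horizon the resulting trajectories converge uniformly to $\XE-\XEH$, which is enough to transfer the in-probability convergence from the estimator to the state. Either realisation establishes the existence of a stabilising controller and completes the proof.
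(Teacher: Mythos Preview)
Your reduction $\XC=\XE+\zeta$ and the idea of running the open-loop encoder/decoder inside the closed loop match the paper's first step; the paper phrases this as building a closed-loop estimate $\XCH(t)=\XEH(t)+\zeta(t)$ and noting $|\XC-\XCH|=|\XE-\XEH|\overset{P}{\rightarrow}0$. Where you diverge is in the control law. You try to enforce the \emph{identity} $\zeta\equiv-\XEH$ via impulsive inputs at the reception times, so that $\XC\equiv\XE-\XEH$ directly. The paper instead applies ordinary linear feedback $U(t)=-k\,\XCH(t)$ with $a-bk<0$, rewrites the closed loop as $\dot\XC=(a-bk)\XC+bk(\XC-\XCH)$, and bounds $|\XC(t)|$ by splitting the variation-of-constants integral at a time $N'_\epsilon$ beyond which $|\XC-\XCH|\le\epsilon$ with high probability, together with the crude bound $|\XC-\XCH|\le 2Le^{aN'_\epsilon}$ on $[0,N'_\epsilon]$.

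Your route is slicker but has two soft spots. First, the model stipulates $U(t)\in\real$, so Dirac inputs lie outside the admissible class; your fix~(i) changes the model, and fix~(ii) as written is a \emph{limiting} argument---a family of controllers indexed by a pulse width $\varepsilon\downarrow0$ whose trajectories converge on compact time intervals---rather than the construction of a single controller for which $|\XC(t)|\overset{P}{\rightarrow}0$ as $t\to\infty$. Uniform convergence on bounded horizons does not by itself transfer in-probability convergence at infinity; to repair this you would have to fix one sequence $(\varepsilon_k)_k$, show the accumulated mismatch stays controlled (using that $|\Delta_k|\overset{P}{\rightarrow}0$), and handle the event that the next reception falls inside a pulse window. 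Second, you assume the given estimator satisfies $\XEH(t)=e^{a(t-\mathcal{T}_k)}\XEH(\mathcal{T}_k^+)$ between receptions; the lemma only posits existence of \emph{some} estimator with vanishing error, which need not be piecewise exponential (nor differentiable, which you would otherwise need in order to realise $\zeta=-\XEH$ with a locally integrable $U$). The paper's feedback argument uses only the convergence $|\XC-\XCH|\overset{P}{\rightarrow}0$, imposes no structural hypothesis on $\XEH$, and produces a bounded real-valued control throughout.
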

\begin{proof}
We start by showing that if there exists an open-loop estimator    such that $|\XE(t)-\XEH(t )|\overset{P}{\rightarrow} 0$  as $t \rightarrow \infty$,  then  there also  exists a   closed-loop estimator  such that  ${|\XC(t)-\XCH(t)|\overset{P}{\rightarrow} 0}$  as $t \to \infty$. We construct the closed-loop estimator based on the open-loop estimator as follows.
The sensor in closed-loop runs a copy of the open-loop system by constructing the virtual open-loop dynamic
\begin{align}
\label{kewoef!!!33332322}
    \XE(t)=X(0)e^{at}.
\end{align}
Using the open-loop estimator, for all $t>0$ the controller  acquires the open-loop estimate $\XEH(t)$ such that ${|\XE(t)-\XEH(t )|\overset{P}{\rightarrow} 0}$. It then uses this estimate to construct the closed-loop estimate
\begin{align}\label{!!2223i3n33!} 
    \XCH(t) &= 
    \XEH(t)+e^{at} \int_{0}^{t} 
    {e^{-a \varrho}bU( \varrho)d \varrho}.
  \end{align}
Since from~\eqref{syscon} the true state in closed loop is
\begin{align} 
      \label{zetax}
    \XC(t) &= X(0)e^{at} +e^{at} \int_{0}^{t} {e^{-a\varrho}bU(\varrho)d\varrho},
  \end{align}
  it follows by combining~\eqref{kewoef!!!33332322}, \eqref{!!2223i3n33!} and \eqref{zetax} that
  \begin{align}
  \label{efjijie!!!2234441}
|\XC(t)-\XCH(t)|=|\XE(t)-\XEH(t)|\overset{P}{\rightarrow} 0.
\end{align}

What remains to be proven is that 
if ${|\XC(t)-\XCH(t)|\overset{P}{\rightarrow} 0}$, then there exists a controller such that ${|\XC(t)|\overset{P}{\rightarrow} 0}$.

Let $b>0$ and  
choose  $k$  so large that $a-bk<0$. Let $U(t)=-k \XCH(t)$. From~\eqref{syscon}, we have
\begin{align}\label{slove321}
  \dot{\XC}(t)=(a-bk)\XC(t)+bk[\XC(t)-\XCH(t)].
\end{align}
By solving~\eqref{slove321} and using the triangle inequality, we get
 \begin{align}\label{mmd2344i}
  |\XC(t)|\le &|e^{(a-bk)t}X(0)|+
  \nonumber \\
  &\left|\int_{0}^{t}e^{(t-\varrho)(a-bk)}bk(\XC(\varrho)-\XCH(\varrho))d\varrho\right|.
  \end{align}
Since ${|X(0)|<L}$ and ${a-bk<0}$,  the first term in~\eqref{mmd2344i} tends to zero as $t \rightarrow \infty$. Namely, for any ${\epsilon>0}$  
there exists a number $N_{\epsilon}$ such that for all $t \ge N_{\epsilon}$, we have 
\begin{align}
{|e^{(a-bk)t}X(0)|\le \epsilon}.
\end{align}
Since by \eqref{efjijie!!!2234441} we have that ${|\XC(t)-\XCH(t)| \overset{P}{\to} 0}$, we also have that for any $\epsilon, \delta>0$ 
there exist a   number $N'_{\epsilon}$ such that for all ${t \ge N'_{\epsilon}}$,  we have
\begin{align}
{\mathbb{P}\left(|\XC(t)-\XCH(t)|\le \epsilon\right) \geq 1-\delta}.
\end{align}
It now follows from \eqref{mmd2344i} that for all ${t \ge \max\{N_{\epsilon},N'_{\epsilon}\}}$   the following inequality holds with probability at least $(1-\delta)$
 \begin{align}\label{mm32vgds}
  \left|\XC(t)\right|&\le \epsilon +bke^{t(a-bk)}\int_{0}^{{N'_\epsilon}}e^{-\varrho(a-bk)} |\XC(\varrho)-\XCH(\varrho) |d\varrho~
  \nonumber \\
 &+\epsilon bke^{t(a-bk)}\int_{{N'_{\epsilon}}}^{t}e^{-\varrho(a-bk)}d\varrho.
  \end{align} 
Since both sensor and controller are aware that $|X(0)| < L$,  by \eqref{kewoef!!!33332322} we have that for all $t\geq 0$ the open-loop estimate  acquired by the controller satisfies $\XEH(t) \in [-Le^{at},Le^{at}]$. By~\eqref{efjijie!!!2234441} the closed-loop estimation error is the same as the open-loop estimation error, and we then have
that for all ${\varrho \in [0,{N'_{\epsilon}}]}$ 
\begin{align} \label{subs}
{|\XC(\varrho)-\XCH(\varrho)| = |\XE(\varrho)  -  \XEH(\varrho)| \le 2L e^{a {N'_{\epsilon}}}}.
\end{align}
%a.s
Substituting \eqref{subs} into~\eqref{mm32vgds}, we obtain that with probability at least $(1-\delta)$
\begin{align}
  \left|\XC(t)\right|\le &\epsilon+ 2Lbke^{[t(a-bk)+ aN'_\epsilon]}\frac{e^{-{N'_{\epsilon}}(a-bk)}-1}{-(a-bk)} 
  \nonumber \\
 &+ \epsilon bke^{t(a-bk)}~\frac{e^{-t(a-bk)}-e^{-{N'_{\epsilon}}(a-bk)}}{-(a-bk)}. \label{letting1} %~~\mbox{w.h.p.}
  \end{align}
By first letting $\epsilon$ be sufficiently close to zero, and then letting $t$ be sufficiently large, we can make the right-hand side of \eqref{letting1} arbitrarily small, and the result follows. 
\end{proof}

The next theorem combines the results above, providing a sufficient condition for convergence of the state to zero in probability in the case of exponentially distributed delays.
\begin{theorem}\label{SUFFCONDMA:stability}
Consider the stabilization of the system~\eqref{syscon}. 
Assume $\{S_i\}$ are drawn i.i.d. from an exponential distribution with mean $\expvalue(S)$. 
If the capacity of the timing channel is at least
\begin{align}\label{suf:stab3002}
C \ge  a \, (\gmm)\;\;\;[\mbox{\rm{nats}}/\mbox{\rm{sec}}],
\end{align}
then ${|\XC(t)|\overset{P}{\rightarrow}0}$ as $t \to \infty$. 
\end{theorem}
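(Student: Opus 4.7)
The plan is to combine the two main tools already built up in the paper: Corollary~\ref{remark2}, which converts the strict capacity inequality into open-loop estimation convergence in probability for all $t\to\infty$, and Lemma~\ref{iirir3404www}, which promotes open-loop estimation convergence into closed-loop stabilization. Neither step needs to be redone; the work of this theorem is just verifying that the hypotheses of these two results are met and chaining them.

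First I would observe that from the hypothesis $C > a$ we can pick a slack factor $\Gamma_0 > 1$ such that $C > a\Gamma_0$ still holds (this is exactly the opening move in the proof of Corollary~\ref{remark2}). Since the delays are exponential with mean $\mathbb{E}(S)$, all hypotheses of Theorem~\ref{SUFFCONDMA} hold for this $\Gamma_0$, so along the sampling sequence $t_n = \Gamma_0 \,\mathbb{E}(\mathcal{T}_n)$ the open-loop estimation error tends to zero in probability. Lemma~\ref{vri221333} then removes the restriction to the sequence $\{t_n\}$ and gives $|X_e(t) - \hat{X}_e(t)| \xrightarrow{P} 0$ for all $t\to\infty$; equivalently, this is what Corollary~\ref{remark2} already packages in one step.

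Second, I would invoke Lemma~\ref{iirir3404www}. Its hypothesis is precisely that there exists an open-loop estimator with vanishing error in probability as $t\to\infty$, which is what the previous paragraph supplies. Its conclusion is that there exists a controller driving the closed-loop state to zero in probability, which is the statement of the theorem. Concretely, the construction inside Lemma~\ref{iirir3404www} tells us what controller to use: the sensor runs a virtual open-loop copy $X_e(t) = X(0)e^{at}$, the controller maintains $\hat{X}_e(t)$ from the timing-channel outputs, forms the closed-loop estimate $\hat{X}(t)$ by adding the deterministic control contribution, and then applies $U(t) = -k\hat{X}(t)$ for any gain $k$ with $a - bk < 0$ (assuming $b>0$; the case $b<0$ is symmetric by flipping the sign of $k$).

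Since both pieces are already proved, no genuinely new estimate is needed. The only thing worth pausing on is making sure the quantifiers line up: Corollary~\ref{remark2} gives convergence in probability as $t\to\infty$ (not merely along a subsequence), which is exactly the input regime that Lemma~\ref{iirir3404www} requires, and the construction in Lemma~\ref{iirir3404www} uses only $|X(0)| < L$ and the existence of a stabilizing gain $k$, both of which are standing assumptions of the model in Section~\ref{sec:setup}. The proof therefore reduces to a one-line citation of Corollary~\ref{remark2} followed by a one-line citation of Lemma~\ref{iirir3404www}; there is no real obstacle to overcome beyond this bookkeeping.
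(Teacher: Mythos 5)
Your proposal matches the paper's proof exactly: the paper proves Theorem~\ref{SUFFCONDMA:stability} in one line by combining Corollary~\ref{remark2} with Lemma~\ref{iirir3404www}, which is precisely the chain of citations you lay out (with the extra, correct, unpacking of Corollary~\ref{remark2} into Theorem~\ref{SUFFCONDMA} plus Lemma~\ref{vri221333}). The bookkeeping about quantifiers and the controller construction is accurate and no gap is present.
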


\section{Comparison with previous work} \label{sec:relatedwork}
\subsection{Comparison with stabilization over an erasure channel}\label{sec:erasure}
In~\cite{tatikonda2004control}   the problem of stabilization of the discrete-time version of the system in~\eqref{syscon}   
over an erasure channel has been considered. In this discrete model, at each time step of the system's evolution the sensor transmits    $\bar{I}$ bits 
to the controller and these bits are successfully  delivered   with probability $1-\mu$, or they are dropped with probability $\mu$, in an independent fashion. It is shown that a necessary condition for ${X(k) \xrightarrow{a.s} 0}$   is  that the capacity of this $\bar{I}$-bit erasure channel is
\begin{align}\label{erestat}
(1-\mu) \bar{I} \ge \log a~~[\mbox{bits}/\mbox{sec}] {\,=\, \ln 2 \, \log a ~~[\mbox{nats}/\mbox{sec}]}.
\end{align}
Since almost sure convergence implies convergence in probability, by  Theorem~\ref{NECCONDITIO:stabili} we have that  
 the following necessary condition holds in our setting  for ${X(t) \xrightarrow{a.s.} 0}$:  
\begin{align}\label{similaregu}
\frac{I(W;W+S)}{~\expvalue(W+S)}  \ge  a \, (\ngmm) ~~~~[\mbox{nats/sec}],
\end{align}
where $\nu>0$ can be arbitrarily small.

We now compare \eqref{erestat} and~\eqref{similaregu}.
The rate of expansion of the state space of the continuous system in open loop is $a$ nats per unit time, while for the discrete system is $\log  a$ bits per unit time. Accordingly, \eqref{erestat} and~\eqref{similaregu} are parallel to each other: in the case of~\eqref{similaregu}  the controller   must receive at least $a \, \expvalue(W+S)$ nats   representing the initial state  during a time interval of average length $\expvalue(W+S)$. In the case of~\eqref{erestat}  the controller must receive at least $\log a/(1-\mu)$ bits representing the initial state over  a  time interval whose average length corresponds to the average number of trials before the first successful reception 
\begin{align}
(1-\mu)\sum_{k=0}^{\infty} (k+1) \mu^k =\frac{1}{1-\mu}.
\end{align}

\subsection{Comparison with event triggering strategies}
The works~\cite{khojasteh2020exploiting,Level,OurJournal1,linsenmayer2017delay,yildiz2019event,cdc19papermj,guo2019optimal}
use event-triggering strategies that exploit timing information for stabilization over a digital communication channel. 
These    strategies encode information over time in a specific state-dependent fashion and use a combination of timing information and data payload to convey information used for stabilization.  

Our framework, by considering the transmission of symbols from a unitary alphabet, uses only timing information for stabilization. In
Theorem~\ref{NECCONDITIO:stabili} we provide a fundamental limit on the rate at which information can be encoded in time, independent of any transmission strategy.  
Theorem~\ref{SUFFCONDMA:stability} then shows that this limit can be almost achieved, in the case of  exponentially distributed delays.

The work~\cite{Level} shows that using event triggering it is possible to achieve stabilization    with any positive transmission rate over a zero-delay digital communication channel. 
Indeed, for channels without delay achieving stabilization at zero rate is easy. One could for example
transmit a single  symbol at a time  equal to
any bijective mapping of $x(0)$ into a point of the non-negative reals. For example, we could  transmit $\spadesuit$  at time $t=\tan^{-1}(x(0))$ for $t \in [0,\pi]$.
The reception of the symbol would reveal the initial state exactly, and the system could be stabilized.

The work in~\cite{OurJournal1} shows that when the delay is positive, but sufficiently small, a triggering policy can still achieve stabilization with any positive transmission rate. However, as the delay increases  past a critical threshold, the timing information becomes so much out-of-date that the transmission rate must begin to increase. 
In our case, since the capacity of our timing channel depends on the distribution of the delay, we may also expect that 
a large value of the capacity, corresponding to a small average delay, would allow for stabilization to occur using only timing information. Indeed, when delays are distributed exponentially, 
from~\eqref{capanaverdumexp1} and 
Theorem~\ref{SUFFCONDMA:stability} it follows that  as longs as the expected value of delay is
\begin{align}
\label{eq:cirticaldelay-timing}
\expvalue(S) < \frac{1}{e \, a},
\end{align}
it is possible to stabilize the system by using only   timing information. On the other hand, the system is not stabilizable using only   timing information if the expected value of the delay becomes larger than $(e \, a)^{-1}$.

\begin{remark}
In this paper we consider a timing channel in which transmitted symbols carry no data payload, and all information is conveyed through the inter-reception times. This contrasts with conventional digital communication models (e.g., discrete memoryless channels), where information is conveyed through payload bits and transmission times are typically not used for encoding. Event-triggered control mechanisms such as~\cite{OurJournal1} lie between these two cases: transmissions occur at state-dependent times, and information is conveyed jointly through the payload and the choice of transmission times.

Additionally, the critical-delay phenomenon reported in~\cite{OurJournal1} (see also our survey~\cite{franceschetti2023information}) is derived under a different delay model, in which the communication delay is unknown but bounded by an upper bound $\gamma'$. In that setting, the required payload transmission rate exhibits a phase transition: for $\gamma' < \gamma_c$, stabilization is possible with arbitrarily small payload rate by exploiting timing information carried by the event times, whereas for $\gamma' > \gamma_c$ a strictly positive payload rate becomes necessary 
because delay uncertainty makes the timing information insufficient, by itself, to support stabilization.
For scalar plants, the critical bound satisfies $a\,\gamma_c=\ln 2$, i.e., timing alone suffices as long as the worst-case delay bound times the plant's entropy rate remains below $\ln 2$. In contrast, our timing-channel condition in~\eqref{eq:cirticaldelay-timing} can be written as $a\,\mathbb{E}[S]< 1/e$ for exponential delays. Thus, both thresholds depend inversely on the plant's entropy rate, but the constants differ because they correspond to different notions of delay (worst-case bound $\gamma'$ versus mean delay $\mathbb{E}[S]$) and different communication abstractions (event-triggered strategies versus strategy-independent timing capacity).
\end{remark}

\section{Numerical example}
\label{simulations11}
\begin{figure*}[t]
	\centering
 \includegraphics[scale=0.55]
 {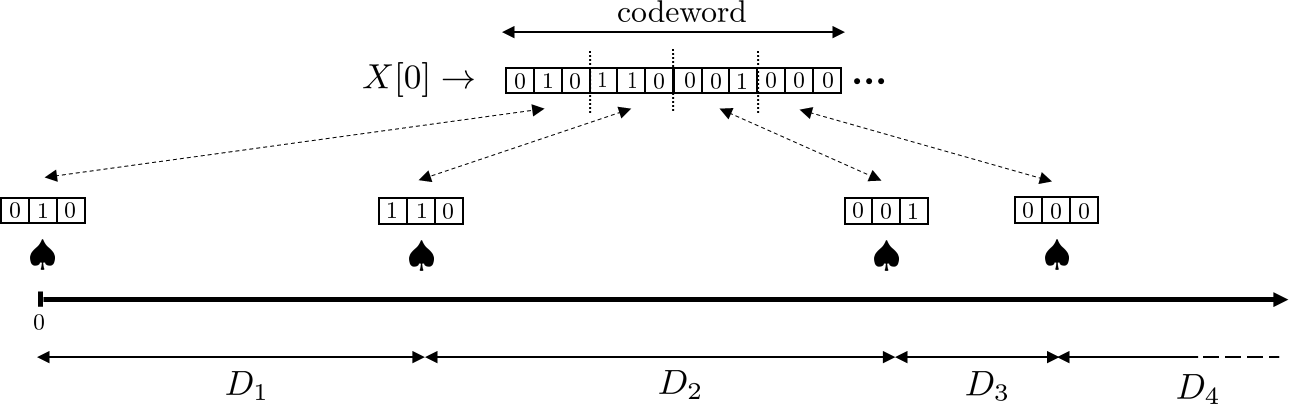}
	\caption{\footnotesize{Evolution of the channel used in the simulation in an error-free case.} Each time $\spadesuit$ is received,  a new codeword is decoded  using all the symbols received up to that time. The decoded codeword represents the initial state $X[0]$ with a precision that increases by $\mathbb{E}(D)C$ bits at each symbol reception. In the figure, for illustration purposes we have assumed $\mathbb{E}(D)C=3$ bits. }\label{timing_simulations3}
\end{figure*}
\label{simsec}
We now present a numerical simulation of stabilization over the telephone signaling channel. While our analysis is for continuous-time
systems, the simulation is  performed in discrete time, considering the system
\begin{align}
\XC[m+1]=a\XC[m]+U[m],~~\mbox{for} ~~m\in \integerspos,
\end{align}
where $a>1$ so that the system is unstable.

In this case, assuming i.i.d.\ geometrically distributed delays $\{S_i\}$, the sufficient condition for stabilization
becomes 
\begin{align}
C \ge \log a \, (\gmm) \;\;\;[\mbox{\rm{nats}}/\mbox{\rm{sec}}],
\end{align}
where $C$ is the   timing capacity of the discrete telephone signaling channel~\cite{bedekar1998information}. The timing capacity is achieved in this case
 using i.i.d.\ waiting times $\{W_i\}$ that are distributed according to a mixture of a geometric and a delta distribution. This results in   $\{D_i\}$ also being i.i.d. geometric~\cite{sundaresan2000robust,bedekar1998information}.

Assuming that a decoding operation occurs at time $m$ using all $k_m$ symbols  received up to this time, and following the source-channel coding scheme described in the proof of Theorem~\ref{SUFFCONDMA},  the controller decodes an estimate ${\hat{X}_m[0]}$ of the initial state  and estimates the current state as
\begin{align} \label{estcorrect}
    \XCH[m] &= 
    a^m\hat{X}_m[0]+\sum_{j=0}^{m-1} 
    {a^{m-1-j}U[j]}.
  \end{align}
  The  estimate ${\hat{X}_m[0]}$   corresponds to the binary representation of $X(0)$ using $
 \lceil k_m  \mathbb{E}(D) C 
\rceil$ bits,  provided that there is no decoding error in the tranmsission. Accordingly,   in our simulation,  we let $\eta>0$ and $P_{e}=e^{-\eta k_m}$, and we assume that   at every decoding time,    with probability $(1-P_{e})$ we construct a correct quantized estimate of the initial state ${\hat{X}_m[0]}$ using   $\lceil k_m \mathbb{E}(D) C \rceil$  bits. 
  Alternatively,   with probability $P_e$ we construct an  incorrect  quantized estimate. 
In the case of a correct estimate, we apply the asymptotically optimal control input $U[m]= - K \XCH[m]$, where $K>0$ is the control gain and $\XCH[m]$ is obtained from \eqref{estcorrect}. In the case of an incorrect estimate, the  state estimate
used to construct the control input can be arbitrary. We consider three cases: (i) we do not apply any control input and let the system evolve in open loop, (ii) we apply the control input using the previous estimate, (iii) we apply the opposite of the asymptotically optimal control input: $U[m]= K \XCH[m]$.   In all cases, the control input remains fixed to its most recent value during the time required for a new estimate to be performed.

  These three cases are included as representative sensitivity scenarios: case (i) serves as a conservative open-loop fallback, case (ii) as a natural hold-last-estimate surrogate, and case (iii) as an intentionally adverse benchmark. Among these, case (ii) is the most practically interpretable if an external error-detection mechanism is available, whereas case (iii) should be viewed only as a stress test.
  
  In this case study, we set 
$K=0.4$, chosen to minimize an linear quadratic regulator (LQR) cost as reported in the caption of Fig.~\ref{simres}. This choice is used only to define a nominal stabilizing controller for the simulation; the information-theoretic results of the previous sections do not depend on this particular contorl policy.

  \begin{figure}[!thb]
	\centering
 \includegraphics[width=.41 \textwidth]{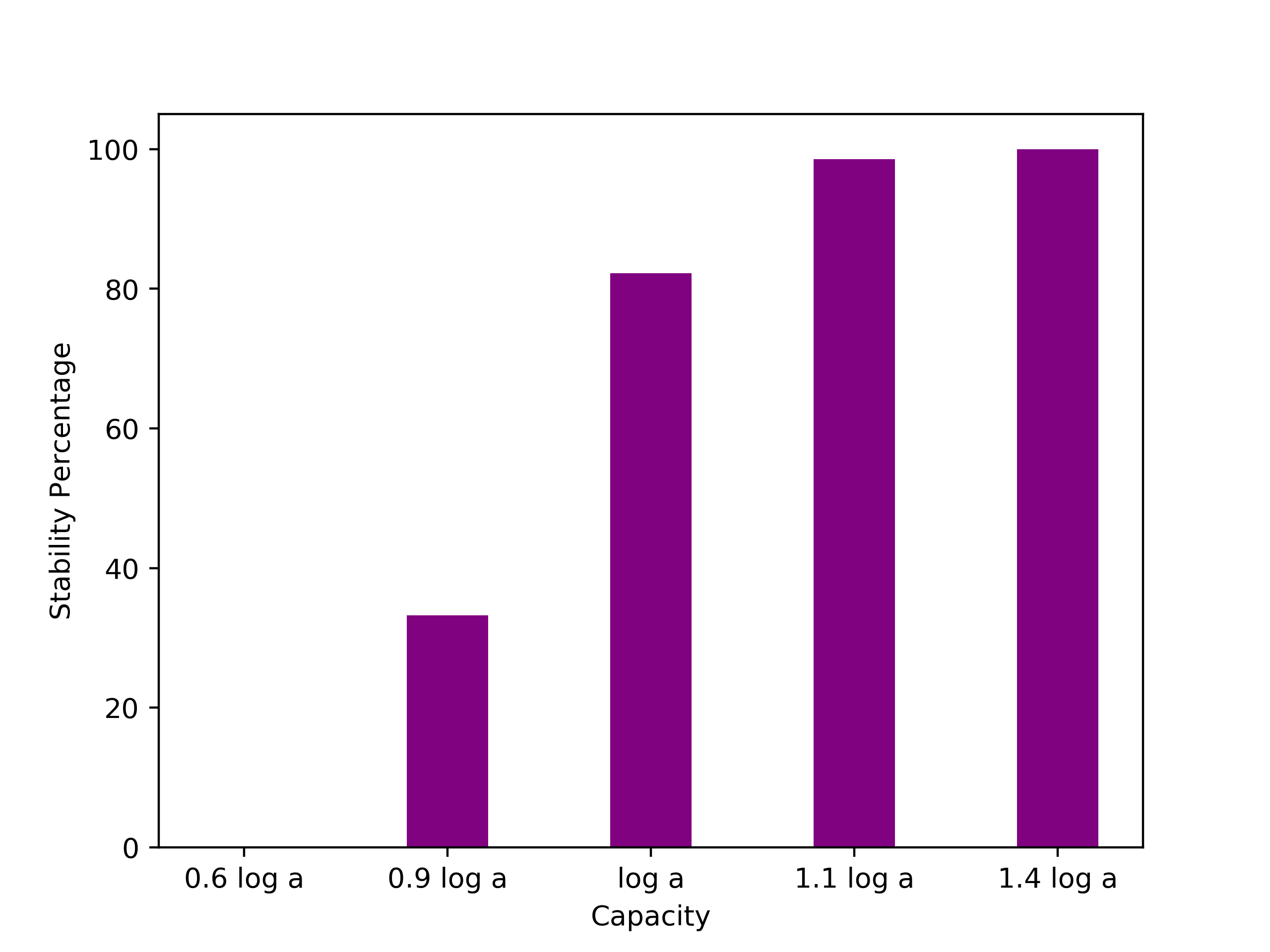}
	\caption{ \footnotesize{Here we show the fraction of times stabilization was achieved  versus the capacity of the channel across a run of 500 simulations for each value of the capacity.
 Successful stabilization is defined in these simulations as $|X[250]|\le 0.05$. In the case of a decoding error, no control input is applied and we  let the system evolve in open loop.
The  simulation parameters were chosen as follows: $a=1.2$, $\mathbb{E}(D)=2$, $P_{e} = e^{-\eta k_m}$, where $\eta=0.09$, and the control gain is $K=0.4$. 
 }
 }\label{fig:barchart_sucessive}
\end{figure}

Fig.~\ref{timing_simulations3} pictorially illustrates the evolution of our simulation in an error-free case in which the  binary representation of $X[0]$ is refined by $\mathbb{E}(D) C=3$ bits at each symbol reception.

Numerical results are depicted in  Fig.~\ref{simres}, 
\begin{figure*}[t]
\centering
\begin{tabular}{c c c}
    &\scriptsize{Case I: decoding error $\rightarrow$ open loop}\\
    	\scriptsize{$C=1.2 \log a$} &
    \scriptsize{$C=1.2 \log a$}
    & \scriptsize{$C=0.9 \log a$}
    \\
    \includegraphics[width=40mm]{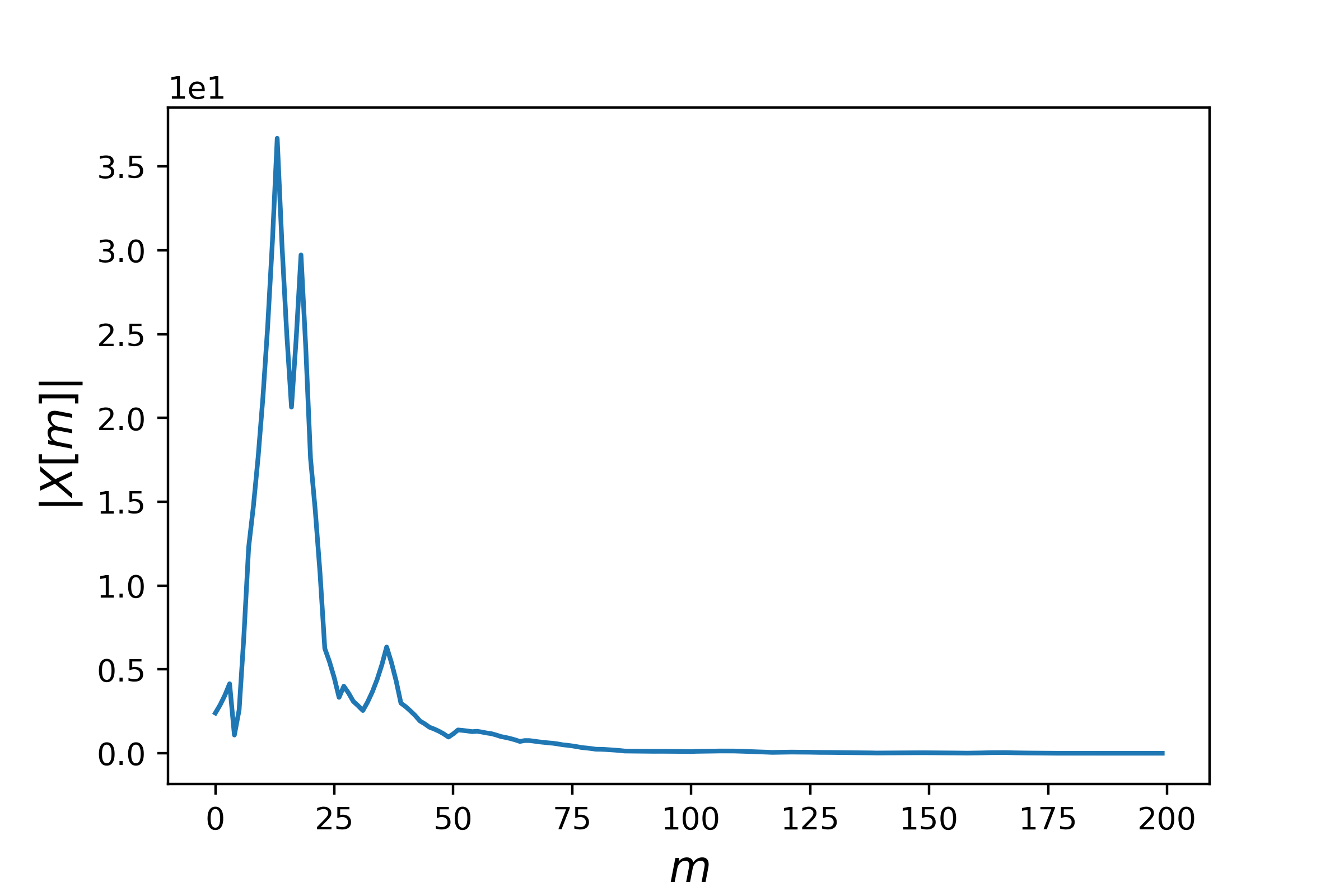} &
    \includegraphics[width=40mm]{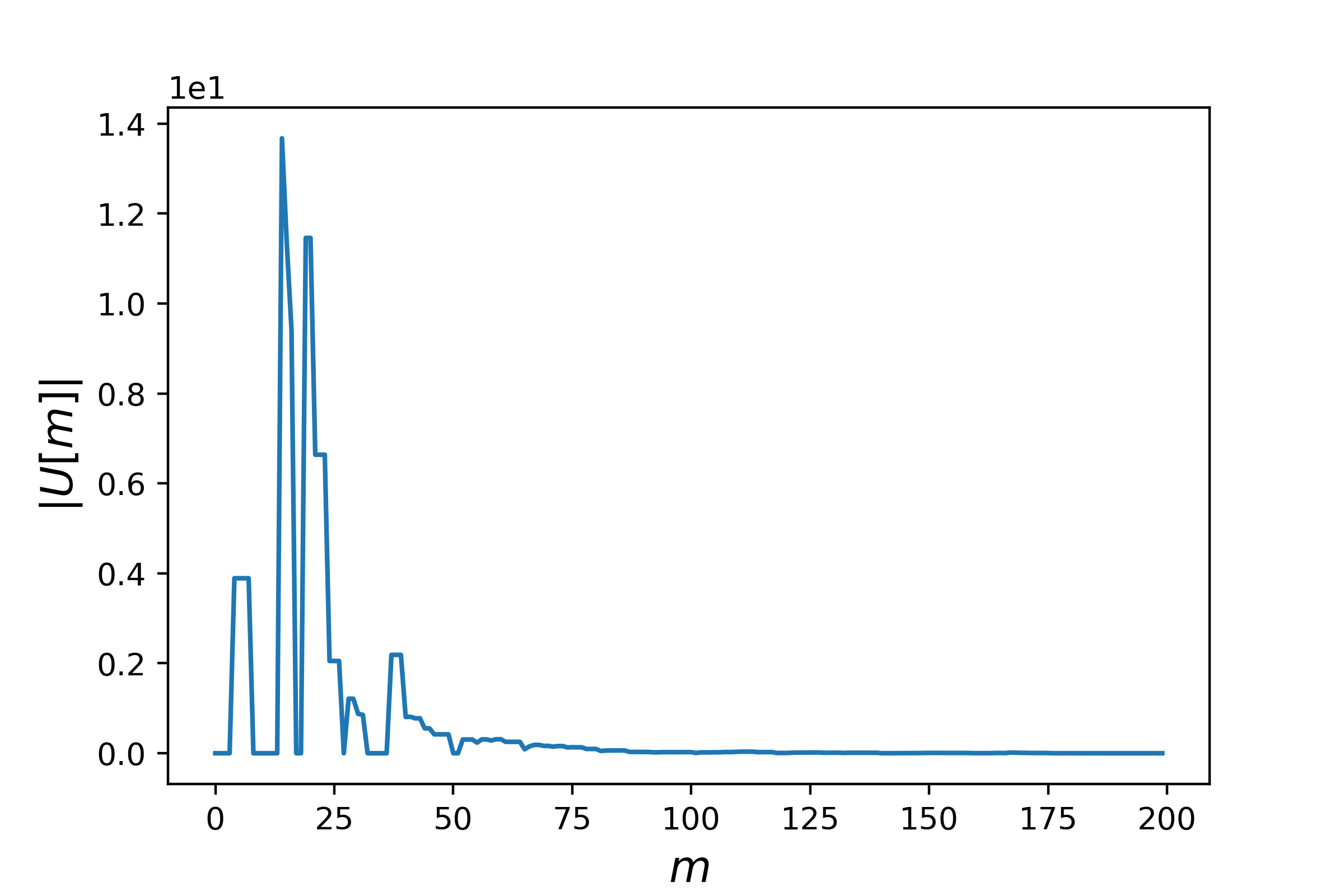} &
    \includegraphics[width=40mm]{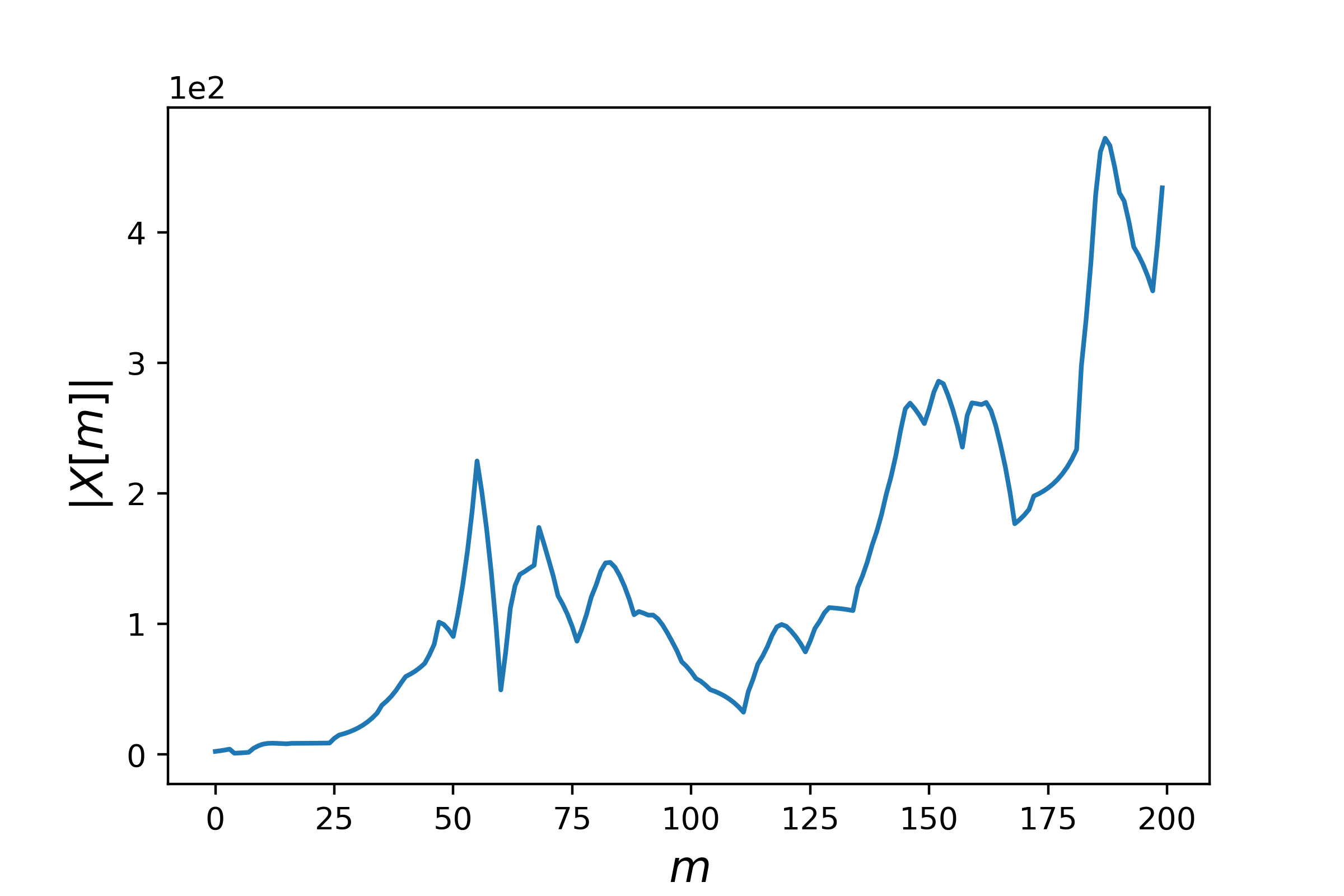} 
    \\
     &\scriptsize{Case II: decoding error $\rightarrow$ previous estimate}\\
    	\scriptsize{$C=1.2 \log a$} &
    \scriptsize{$C=1.2 \log a$} 
     & \scriptsize{$C=0.9 \log a$}
    \\
    \includegraphics[width=40mm]{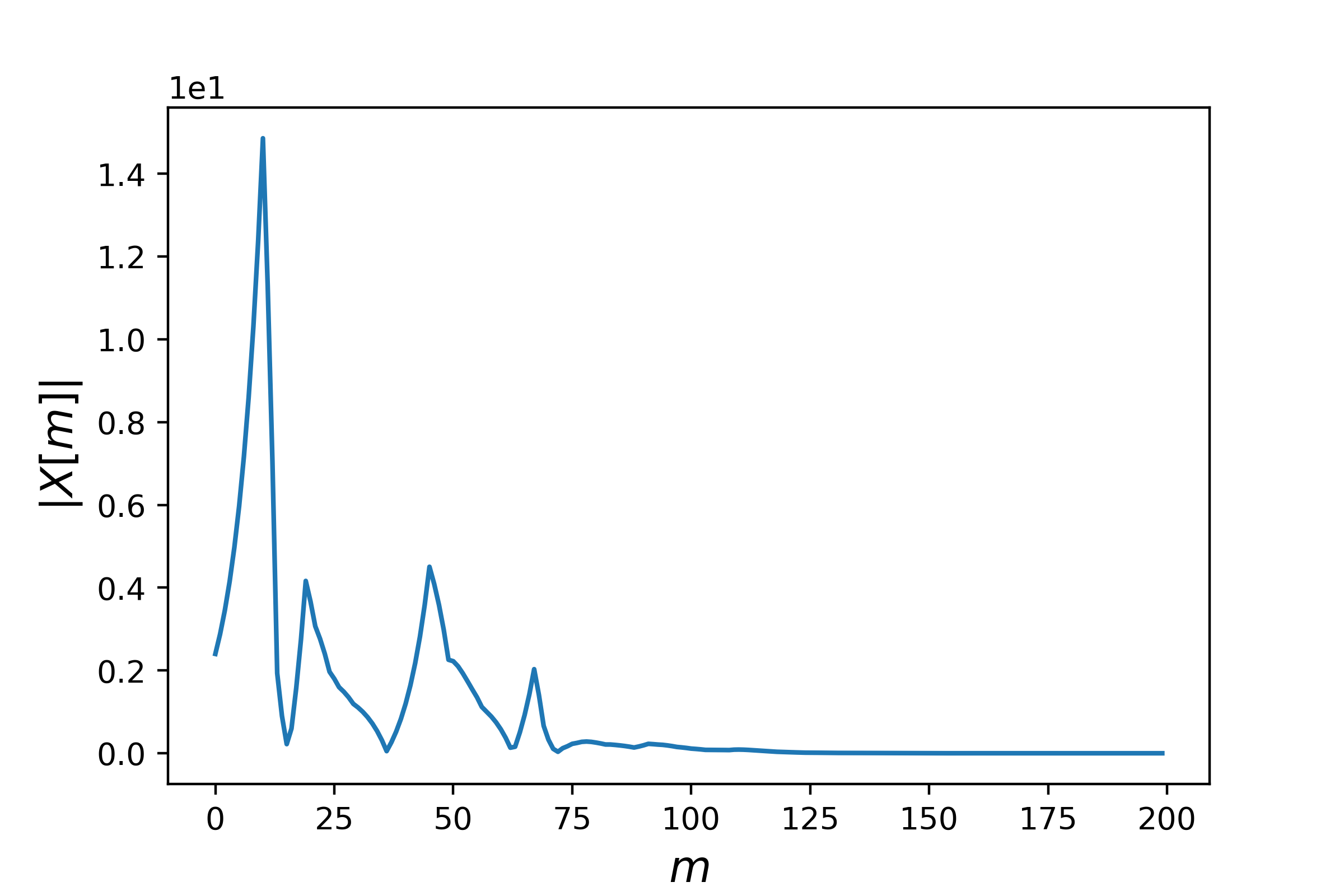} &
    \includegraphics[width=40mm]{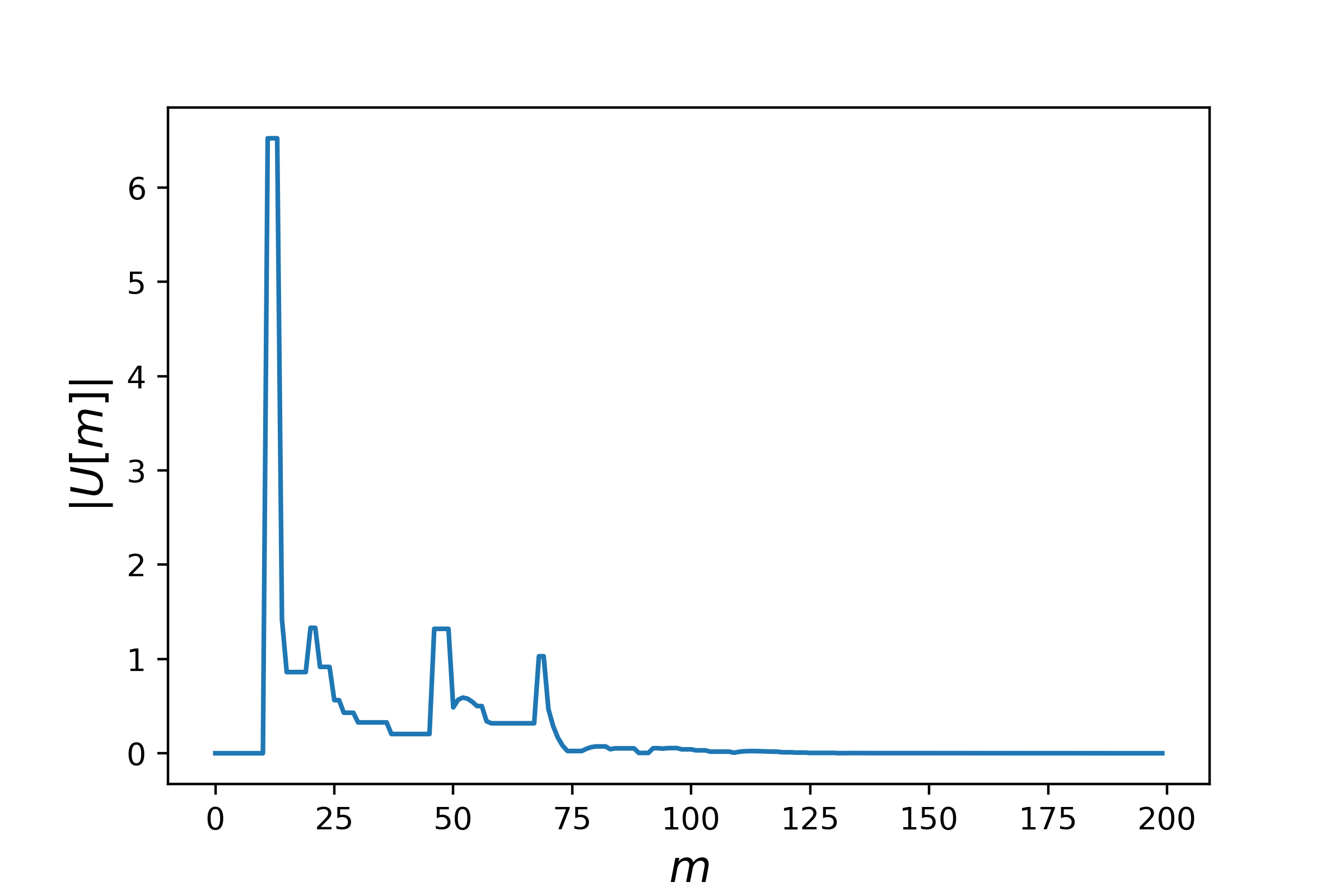} &
    \includegraphics[width=40mm]{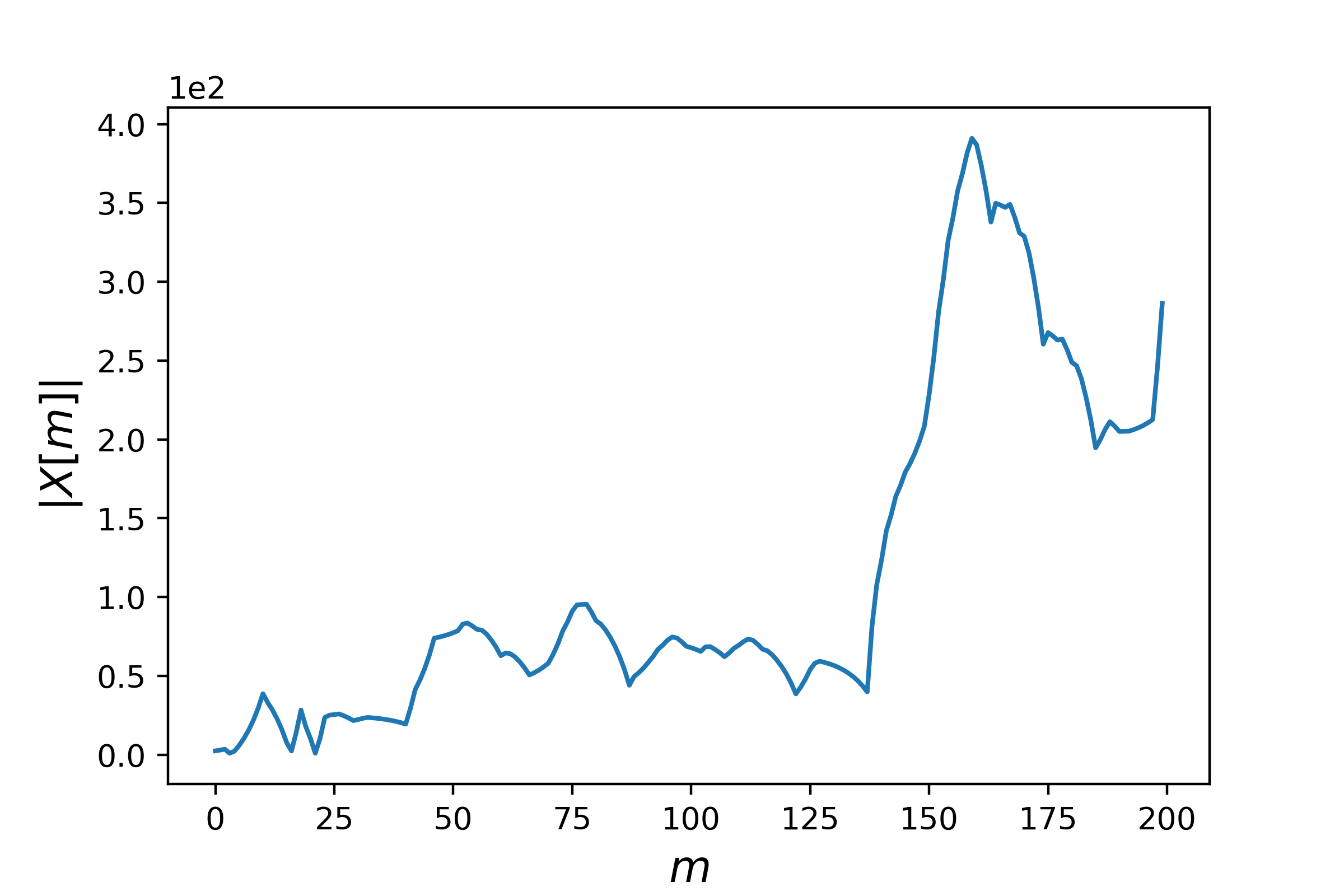} 
      \\
       &\scriptsize{Case III: decoding error $\rightarrow$ opposite of the optimal control}\\
    	\scriptsize{$C=1.2 \log a$} &
    \scriptsize{$C=1.2 \log a$} 
     & \scriptsize{$C=0.9 \log a$}
    \\
    \includegraphics[width=40mm]{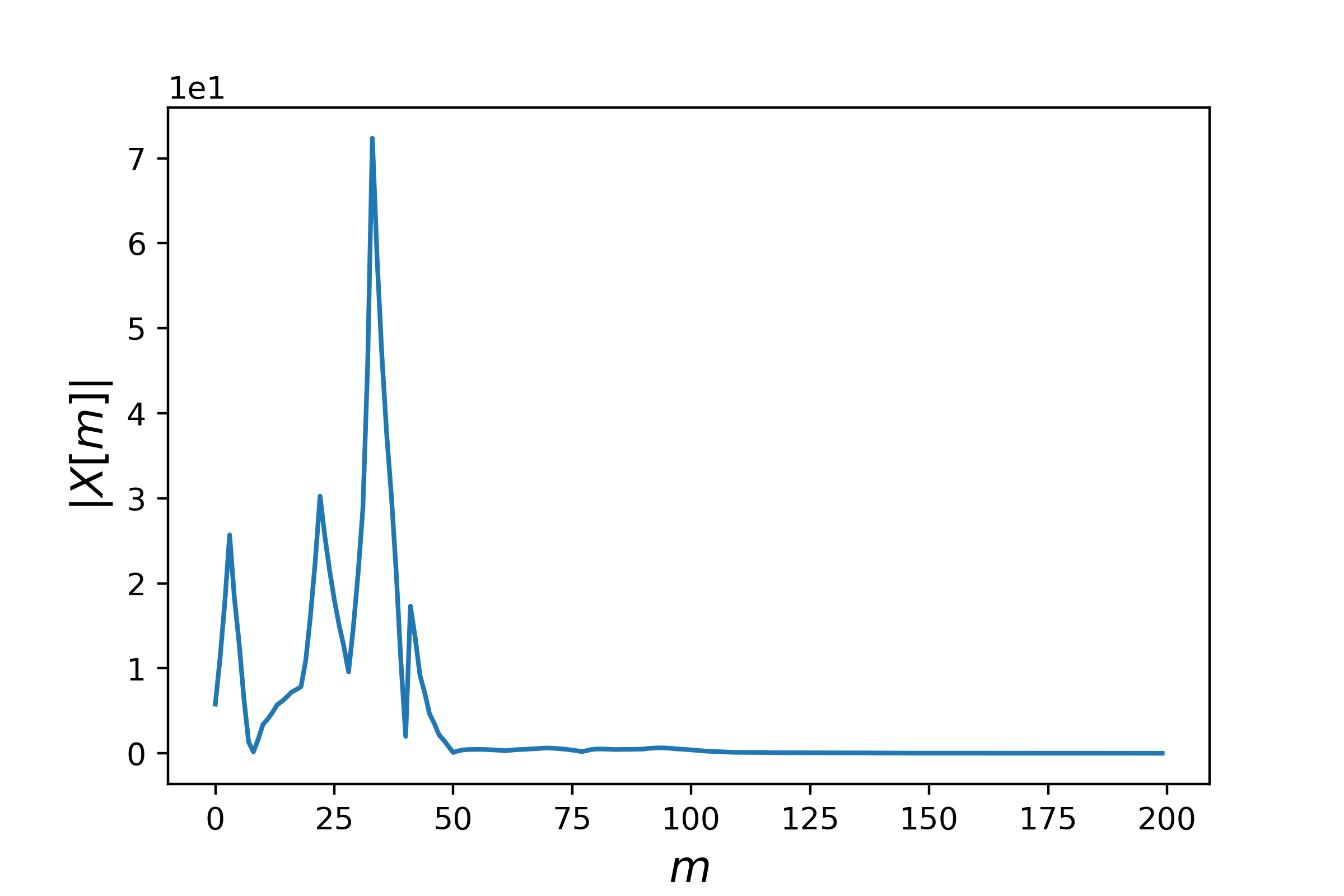} &
    \includegraphics[width=40mm]{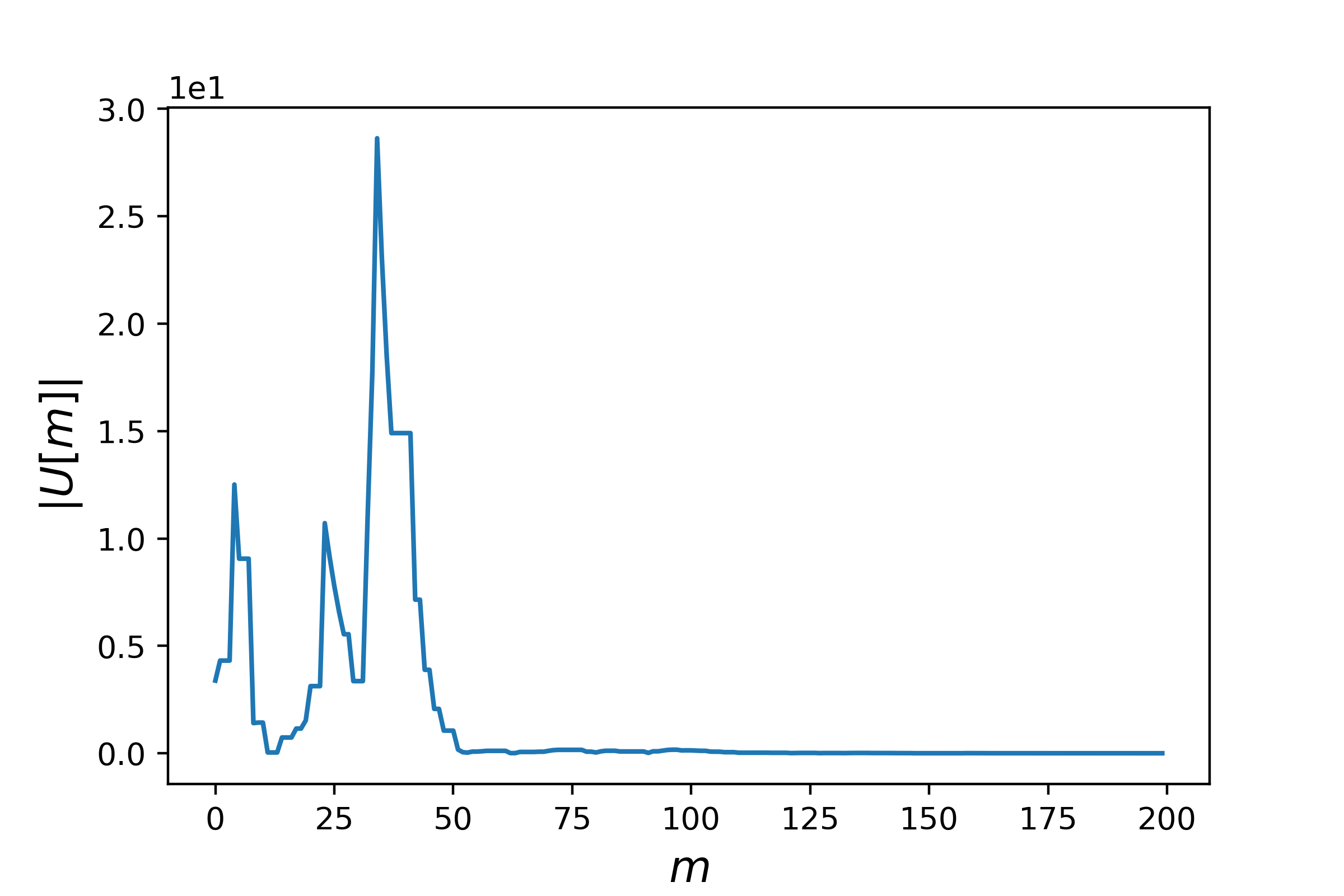} &
    \includegraphics[width=40mm]{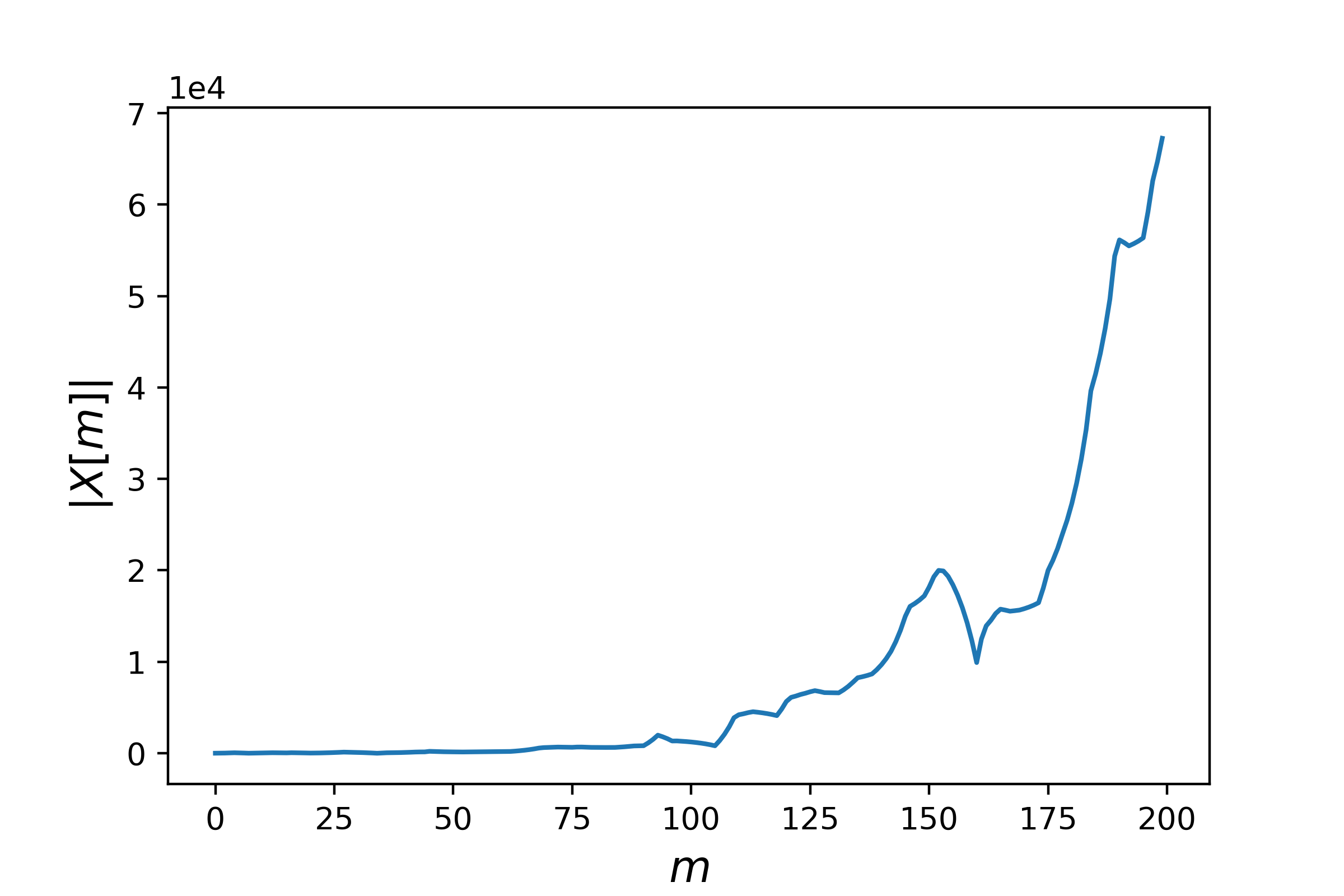} 
\end{tabular}
\caption{
Here we show the evolution of a single run of a system with different capacities for the timing channel.
The first and second columns represent the  absolute value of the  state and control input, respectively,  when the timing capacity is larger than the entropy rate of the system $(C>\log a)$. The third column represents the  absolute value of the  state when the timing capacity is smaller than the entropy rate of the system $(C<\log a)$. 
In the first row, in the presence of a decoding error, we do not apply any control input and let the system evolve in open-loop; 
in the second row,  we apply the control using the previous estimate;
the third row,    we apply the opposite of the optimal control.  
The simulation parameters were chosen as follows: $a=1.2$, $\mathbb{E}(D)=2$, and $P_{e} = e^{-\eta k_m}$, where $\eta=0.09$. 
 For the optimal control gain we have chosen $K=0.4$, which is optimal with respect to the (time-averaged) linear quadratic regulator (LQR) control cost $(1/200)\mathbb{E}[\sum_{m=0}^{199} (0.01X_k^2+0.5U_k^2)+0.01X_{200}^2].$
}
\label{simres}
\end{figure*}
showing   convergence of the state to zero in all cases, provided that the timing capacity is above the entropy rate of the system. In contrast, when the timing capacity is below the entropy rate, the state diverges. The plots also show the  absolute  value of the control input    used for stabilization in the various cases.

Fig.~\ref{fig:barchart_sucessive} illustrates the percentage of times at which the controller successfully stabilized the plant versus the capacity of the channel in a run of  500 Monte Carlo simulations. The phase transition behavior at the critical value $C=\log a $ is clearly evident.

\begin{remark}
The numerical example in this section is not intended as a direct numerical discretization study of the continuous-time system in~\eqref{syscon}. Rather, it serves as a tractable discrete-time surrogate model that illustrates the core information-theoretic mechanism developed in the preceding sections. The theoretical results of this paper are formulated in continuous time because the plant dynamics, the timing-capacity characterization, and the comparison with event-triggered formulations~\cite{OurJournal1} are naturally expressed on a continuous-time axis.

Assuming a sampling period $\Delta$, if $a_c$ denotes the open-loop gain of the continuous-time system, then the corresponding discrete-time gain is $a_d=e^{a_c\Delta}$. Hence, the open-loop uncertainty growth in discrete time is $\ln a_d=a_c\Delta$ per step, i.e., $(\ln a_d)/\Delta=a_c$ per unit time. Thus, while discretization changes the numerical values when expressed per step, it does not change the underlying per-unit-time information balance that motivates the theoretical results. In this sense, the numerical example should be viewed as a tractable illustration of the information-theoretic mechanism developed in the preceding sections.
\end{remark}

\section{Conclusion and outlook}
\label{sec:conc}
In the framework of control of dynamical systems over communication channels, it has recently been observed that  event-triggering policies  encoding information over time in a state-dependent fashion  can exploit timing information for stabilization   in addition to the information traditionally carried by data packets~\cite{khojasteh2020exploiting,Level,OurJournal1,linsenmayer2017delay,yildiz2019event,cdc19papermj,guo2019optimal}.  In a more general framework, this paper studied from an information-theoretic perspective the fundamental limitation of using \emph{only} timing information for stabilization, independent of any transmission strategy.
We showed that for stabilization of an undisturbed scalar linear system over a channel with a unitary alphabet, 
the timing capacity~\cite{anantharam1996bits} should be, essentially, at least as large as the entropy
rate of the system. In addition, in the case of exponentially
distributed delays, we provided an almost tight sufficient condition using a coding strategy that refines the estimate of the decoded message as more and more symbols are received.

Our derivation ensures that when the timing capacity is larger than the entropy rate,    the estimation error   does not grow unbounded, in probability, even in the presence of the random delays occurring in the timing channel. This is made possible by   communicating a real-valued variable (the initial state) at an increasingly higher resolution and with vanishing probability of error. This strategy   has been previously studied in~\cite{como2010anytime} in the context of estimation over the binary erasure channel, rather than over   the timing channel. It is also related to communication at increasing resolution over channels with feedback via posterior matching~\cite{shayevitz2011optimal,naghshvar2015extrinsic}. The classic Horstein~\cite{horstein1963sequential} and Schalkwijk-Kailath~\cite{schalkwijk1966coding} schemes   are special cases of posterior matching for the binary symmetric channel and the additive Gaussian channel respectively.
The main idea in our setting is to employ a tree-structured quantizer in conjunction to a capacity-achieving timing channel codebook that grows exponentially with the tree depth, and re-compute the estimate of the real-valued variable  as more and more channel symbols are received. The estimate is re-computed  for  a number of received symbols that depends on  the channel rate and on the average delay. In contrast to posterior matching, we are not concerned with the complexity of the encoding-decoding strategy, but only with its existence. We also do not assume a specific distribution for the real value we need to communicate, and we do not use the feedback signal to perform encoding, but only to avoid queuing~\cite{anantharam1996bits,sundaresan2000robust}. We  point out that our control strategy does  not work in the presence of disturbances: in this case, one needs   to track a state that  depends not only on the initial condition, but also on the evolution of the disturbance. This requires to  update  the entire   history of the system's states at each symbol reception~\cite{sahai2006necessity },  leading to a different, i.e. non-classical, coding model. Alternatively, remaining in a classical setting one could aim for less, and attempt to obtain results using  weaker probabilistic notions of stability, such as  the one in~\cite[Chapter~8]{matveev2009estimation}.

Finally, by  showing that in the case of no disturbances and exponentially distributed delay  it is possible to achieve stabilization at zero data-rate  only for sufficiently small average delay $\mathbb{E}(S)<(e \, a)^{-1}$, we confirmed  from an information-theoretic perspective the observation made in~\cite{OurJournal1} regarding the existence of a critical delay value for stabilization at zero data-rate.

We conclude by discussing future directions motivated by our analysis.

\noindent\textbf{Extensions to vector systems:}
The present analysis focuses on scalar systems to keep the exposition concise. The fundamental limits developed in this work can be applied to vector systems with only one unstable mode  (cf.~\cite{khojasteh2020exploiting}). In general, for vector systems, classical data-rate theorems state that the information-generation rate is governed by the intrinsic entropy rate of the unstable dynamics (e.g., the sum of the positive real parts of the unstable eigenvalues in continuous time). We expect that the necessary condition in this paper admits an analogous extension based on the the intrinsic entropy rate of the system. However, establishing this rigorously would require extending the main steps of our scalar converse proof to the vector setting. Extending the sufficient results to vector plants is more challenging. Such an extension would require a timing-based estimation mechanism that can  reduce uncertainty across multiple unstable modes under random reception times, in a manner closer in spirit to encoder/decoder architectures studied in~\cite{Mitter} for noiseless digital channels. As in the scalar case, a key difficulty is to achieve convergence in probability through an anytime-reliable transmission mechanism despite unbounded random delays in the timing channel, which requires careful analysis.

\noindent\textbf{Effect of system disturbances:}
As discussed above, the present study considers a disturbance-free plant, with the objective of driving the state to zero, with probability one, using information encoded solely in the inter-reception times of symbols. When disturbances are present in system dynamics, convergence to the origin needs to be replaced by alternative notions such as probabilistic ultimate boundedness or moment stability. In these settings, the information-theoretic quantity governing stabilizability may require stronger reliability notions than Shannon capacity; for example, for discrete memoryless channels, moment objectives are closely related to anytime capacity, whereas almost-sure boundedness under bounded disturbances naturally invokes zero-error requirements (cf. Table~\ref{tab:table1}). Developing timing-channel counterparts of these notions, defined in terms of inter-reception times rather than data payloads, is an important and technically challenging direction for future work.

\noindent\textbf{Understanding the combination of timing information and packets with data payload:}
As the controller may exploit both payload and timing information, a more general data-rate formulation should account for two distinct information flows. Developing fundamental stabilization results that explicitly quantify the trade-off between payload capacity and timing capacity is therefore an important direction for future research in networked control systems.

\noindent\textbf{Other delay models:} 
In this work, we focus on exponentially distributed delays, which are not only analytically convenient, but also constitute a canonical queue-based timing-channel model~\cite{sundaresan2006capacity}. Prior work has also considered non-exponential service-time laws, including uniform and truncated-Gaussian timing-channel models~\cite{sellke2006timing,sellke2007capacity}. Although the entropy-rate viewpoint underlying our necessary condition is expected to remain relevant more generally, extending the present results to non-exponential delay models is left for future work.

\noindent\textbf{Complexity-performance trade-offs:} Our proofs are information-theoretic and focus on the existence of encoding/decoding strategies rather than their computational complexity. Once a codebook and quantizer are fixed, the encoder can be implemented via a simple lookup that selects the prescribed waiting time and uses feedback only to avoid queuing. The main computational burden lies at the decoder, since the capacity-achieving existence argument relies on random codebooks and is not designed for real-time implementation~\cite{gallager2003low}. Developing low-complexity timing-channel codes and tractable sequential anytime decoders (cf.~\cite{sundaresan2002sequential}), together with quantitative performance-complexity trade-offs, is an important direction for future work.

Collectively, these directions suggest that timing information will remain central to the fundamental limits of networked control systems, with many opportunities for further theoretical and practical developments.

\section*{APPENDIX}
\begin{figure*}[t]
	\centering
 \includegraphics[scale=0.55]
 {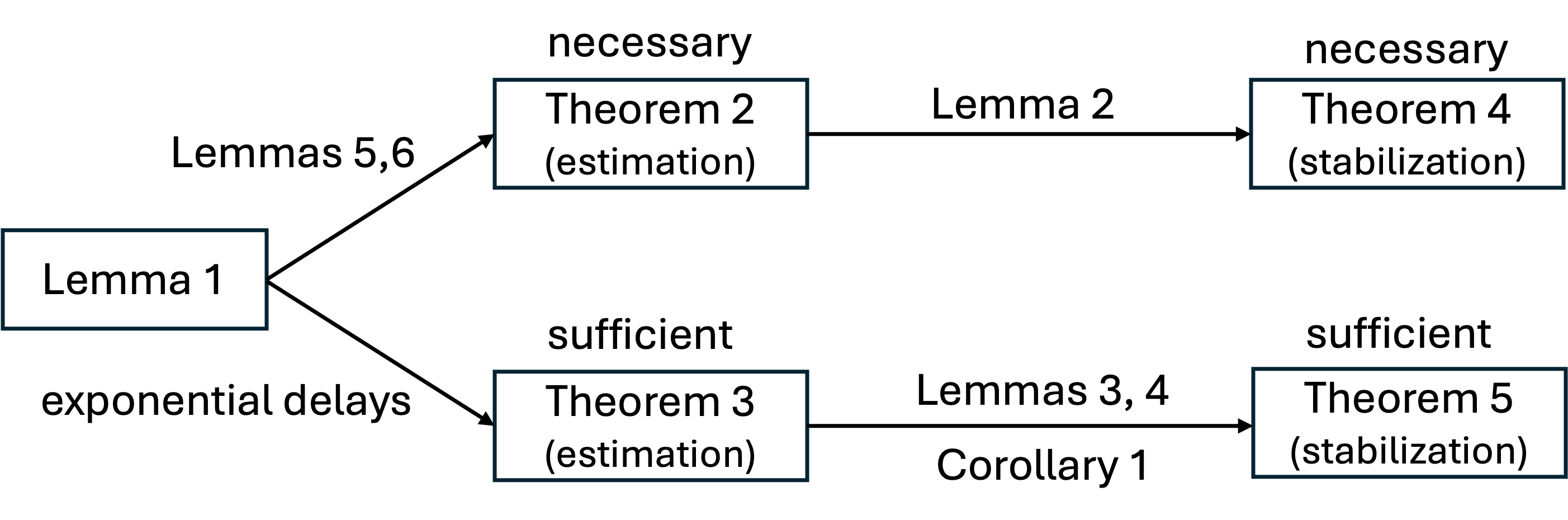}
\caption{\footnotesize{Roadmap of the main proofs.}} \label{diagram-1}
\end{figure*}
\subsection{Roadmap of the Proofs}\label{flowchart-1}
Fig.~\ref{diagram-1} provides a schematic roadmap of the logical dependencies among the main lemmas and theorems underlying the necessary and sufficient conditions.

\subsection{Proof of Theorem~\ref{NECCONDITIO}}\label{Necesssary}
We start by introducing a  few definitions and proving some useful lemmas.

\begin{defn}
\label{ratedistortion}
For any $\epsilon >0$ and $\phi>0$, we define  the rate-distortion function of  the  source $\dot{\XE}=a\XE(t)$ at times $\{\timenN\}$ as
\begin{small}
\begin{align}\label{defratedisfunc}
R_{\timenN}^\epsilon(\phi)
= \inf_{\mathbb{P}\left(\XEH(\timenN)|\XE(\timenN)\right)}\Bigg\{I\left(\XE(\timenN);\XEH(\timenN)\right) :
\\\nonumber
 \mathbb{P}\left(|\XE(\timenN)-\XEH(\timenN)|>\epsilon\right)\le \phi\Bigg\}.
\end{align}
\end{small}
\end{defn}

The proof of the following lemma adapts an argument of~\cite{tatikonda2004control}   to our continuous-time setting.
\begin{lemma}\label{ratedistortionlower}
We have
\begin{align}\label{lowerratedis}
R_{\timenN}^\epsilon(\phi) \ge (1-\phi) \left[a\,\timenN+h(X(0))\right] \\\nonumber -\ln 2\epsilon -  \frac{\ln 2}{2} \;\;\; [\mbox{\rm{nats}}] .
\end{align}
\end{lemma}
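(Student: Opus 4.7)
My plan is to prove a rate-distortion style lower bound, exploiting the deterministic evolution $X_e(t_n) = e^{at_n}X(0)$ of the open-loop source. First, I would invoke the scaling property of differential entropy to obtain
\begin{align*}
h(X_e(t_n)) = h(X(0)) + at_n \quad [\text{nats}],
\end{align*}
which already identifies the leading $at_n + h(X(0))$ term in the target bound. Writing the mutual information as $I(X_e(t_n); \hat{X}_e(t_n)) = h(X_e(t_n)) - h(X_e(t_n)\mid \hat{X}_e(t_n))$, the core task reduces to upper bounding the conditional entropy.

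Next, I would introduce the binary error indicator $E = \mathbf{1}\{|X_e(t_n) - \hat{X}_e(t_n)| > \epsilon\}$, whose marginal $p := \mathbb{P}(E = 1)$ satisfies $p \le \phi$ by the distortion constraint in Definition~\ref{ratedistortion}. Since $E$ is a deterministic function of $(X_e(t_n), \hat{X}_e(t_n))$, the chain rule yields
\begin{align*}
h(X_e(t_n)\mid \hat{X}_e(t_n)) \le h(X_e(t_n)\mid \hat{X}_e(t_n), E) + H(E).
\end{align*}
On $\{E = 0\}$, the conditional support of $X_e(t_n)$ is contained in an interval of length $2\epsilon$ centered at $\hat{X}_e(t_n)$, so that conditional differential entropy is bounded by $\ln(2\epsilon)$. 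On $\{E = 1\}$, the trivial estimate ``conditioning reduces entropy'' gives the bound $h(X_e(t_n))$. Combining these two cases,
\begin{align*}
h(X_e(t_n)\mid \hat{X}_e(t_n)) \le (1-p)\ln(2\epsilon) + p\,h(X_e(t_n)) + H(E),
\end{align*}
and substituting into the mutual information decomposition gives
\begin{align*}
I(X_e(t_n);\hat{X}_e(t_n)) \ge (1-p)\bigl[h(X_e(t_n)) - \ln(2\epsilon)\bigr] - H(E).
\end{align*}

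Finally, I would pass from $p$ to $\phi$ using the monotonicity $p \le \phi$: replace $(1-p)$ by $(1-\phi)$ in the leading entropy term, and absorb the residual linear-in-$p$ corrections together with the binary entropy $H(E)$ into the additive constant $\frac{\ln 2}{2}$. Taking the infimum over admissible conditional distributions $\mathbb{P}(\hat{X}_e(t_n)\mid X_e(t_n))$ then gives the claimed lower bound on $R_{t_n}^\epsilon(\phi)$. The step I expect to be the main obstacle is the tight bookkeeping that produces the sharp additive constant $\frac{\ln 2}{2}$: the naive estimate $H(E) \le \ln 2$ yields only the looser $\ln 2$, and sharpening it to $\frac{\ln 2}{2}$ likely requires exploiting the fact that $p \le \phi$ lies in the small-error regime together with a slightly more refined Fano-type decomposition of the error event.
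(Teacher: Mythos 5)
Your approach matches the paper's proof essentially line for line: both introduce the error indicator, split the conditional differential entropy across the good event $\{|X_e(t_n)-\hat{X}_e(t_n)|\le\epsilon\}$ (where the bounded-support uniform bound gives $\ln(2\epsilon)$) and the bad event (where ``conditioning reduces entropy'' applies), and use $h(X_e(t_n))=at_n+h(X(0))$ to pull out the leading term. Your worry about sharpening $H(E)\le\ln 2$ to $\tfrac{\ln 2}{2}$ is not actually resolved in the paper either; the paper simply asserts $H(\xi)\le\tfrac{\ln 2}{2}$ without comment, which holds only when $\phi$ is small enough (roughly $\phi\lesssim 0.11$), but since the lemma is applied as $\phi\to 0$ this implicit restriction is harmless and the looser constant $\ln 2$ would serve equally well in the downstream argument.
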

\begin{proof}
Let 
\begin{align}
\label{ep}
\xi=\left \{
  \begin{array}{cc}
  0 & \mbox{if}~|\XE(\timenN)-\XEH(\timenN)|\le \epsilon \\
  1 & \mbox{if}~|\XE(\timenN)-\XEH(\timenN)|> \epsilon.
  \end{array}
\right.
\end{align} 
Using the chain rule, we have 
\begin{align}\label{Khab1}
&I(\XE(\timenN);\XEH(\timenN))\nonumber
\\
&=I(\XE(\timenN);\xi,\XEH(\timenN))-I(\XE(\timenN);\xi|\XEH(\timenN))
 \\
&=I(\XE(\timenN);\xi,\XEH(\timenN))-H(\xi|\XEH(\timenN))\nonumber
\\\nonumber
&~~+H(\xi|\XE(\timenN),\XEH(\timenN)).\nonumber
\end{align} 
Given $X(\timenN)$ and $\hat{X}(\timenN)$, there is no uncertainty in $\xi$,  hence we deduce
 \begin{align}\label{khab2}
&I(\XE(\timenN);\XEH(\timenN))\nonumber \\
&=I(\XE(\timenN);\xi,\XEH(\timenN))-H(\xi|\XEH(\timenN))\nonumber\\
&=h(\XE(\timenN))-h(\XE(\timenN)|\xi,\XEH(\timenN))-H(\xi|\XEH(\timenN))\nonumber\\
&=h(\XE(\timenN))-h(\XE(\timenN)|\xi=0,\XEH(\timenN))\mathbb{P}(\xi=0)
\\
&~~-h(\XE(\timenN)|\xi=1,\XEH(\timenN))\mathbb{P}(\xi=1)
-H(\xi|\XEH(\timenN)).\nonumber
\end{align}
Since ${H(\xi|\XEH(\timenN)) \le H(\xi) \le \ln 2/2~[\mbox{nats}]}$, ${\mathbb{P}(\xi=0) \le 1}$, and ${\mathbb{P}(\xi=1) \le \phi}$, it then follows that
 \begin{align}
&I(\XE(\timenN);\XEH(\timenN)) \ge
\nonumber \\
&
h\left(\XE(\timenN)\right)
-h\left(\XE(\timenN)-\XEH(\timenN)|\xi=0,\XEH(\timenN)\right)\nonumber \\
&~~-h\left(\XE(\timenN)|\xi=1,\XEH(\timenN)\right)\phi-\frac{\ln 2}{2}.
\end{align}
Since conditioning reduces the entropy, we have
 \begin{align}
&I(\XE(\timenN);\XEH(\timenN))\ge h(\XE(\timenN))
\\\nonumber
&-h(\XE(\timenN)-\XEH(\timenN)|\xi=0)-h(\XE(\timenN))\phi-\frac{\ln 2}{2}\\\nonumber
&=(1-\phi)h(\XE(\timenN))-h(\XE(\timenN)-\XEH(\timenN)|\xi=0)-\frac{\ln 2}{2}.
\end{align}
By \eqref{ep} and since the uniform distribution maximizes the differential entropy among all distributions with bounded support, we have
 \begin{align}\label{khab4}
I(\XE(\timenN);\XEH(\timenN)) \ge (1-\phi)h(\XE(\timenN))-\ln 2\epsilon-\frac{\ln 2}{2}.
\end{align}
Since $\XE(\timenN)=X(0)~e^{a\timenN}$, we have
\begin{align}\label{khab5}
h(\XE(\timenN))&=\ln e^{a\timenN}+h(X(0))
=a\timenN+h(X(0)).
\end{align}
Combining~\eqref{khab4}, and~\eqref{khab5} we obtain
 \begin{align}\label{lowerratedis11}
I(\XE(\timenN);\XEH(\timenN))\ge 
(1-\phi) \left(a\timenN+h(X(0))\right)-\ln 2\epsilon -\frac{\ln 2}{2}.
\end{align}
Finally, noting that this inequality is independent of $\mathbb{P}(\XEH(\timenN)|\XE(\timenN))$ the result follows.
\end{proof}
\hfill \break
\begin{remark}
By letting  $\phi=\epsilon$ in~\eqref{lowerratedis}, we have
\begin{align}
R_{\timenN}^\epsilon(\epsilon) \ge (1-\epsilon) a\timenN+\epsilon',
\end{align}
where
\begin{align}
\epsilon'=(1-\epsilon)h\left(X(0)\right)-\ln 2\epsilon -\frac{\ln 2}{2}.
\end{align}
For sufficiently small $\epsilon$ we have that $\epsilon' \geq 0$, and hence
\begin{align}
\frac{R_{\timenN}^\epsilon(\epsilon) }{\timenN}\ge (1-\epsilon)a.
\end{align}
It follows that for sufficiently small $\epsilon$ the rate-distortion per unit time of the source must be at least as large as the entropy rate of the system.  Since the rate-distortion represents the number of bits required to represent the state of the process up to a given fidelity, this provides an operational characterization of the entropy rate of the system. \oprocend
\end{remark}

The proof of the following lemma follows a converse argument of~\cite{anantharam1996bits} with some modifications due to our different setting.
\begin{lemma}\label{mutalinfotimin}
Under the same assumptions as in Theorem~\ref{NECCONDITIO}, 
 if by time $\timenN$, $\kappa_n$ symbol is received by the controller,
we have 
\begin{align}
I\left(\XE(\timenN);\XEH(\timenN)\right) \le %\sum_{i=1}^{n}
\kappa_n I(W;W+S).
%~~\mbox{a.s.}
\end{align}
\end{lemma}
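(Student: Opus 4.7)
The plan is to combine a data-processing argument with the chain rule for mutual information, exploiting the i.i.d.\ random-codebook assumption on $\{W_i\}$ and the independence of $\{S_i\}$. First I would argue that the controller's estimate $\hat{X}_e(t_n)$ is a (possibly randomized) function of only the inter-reception times $D_1,\ldots,D_{\kappa_n}$ that it has observed by time $t_n$, since the encoder emits symbols from a unitary alphabet and all the controller sees is the sequence of reception timestamps. On the encoder side, $X_e(t_n)=e^{at_n}X(0)$ is a deterministic function of $X(0)$, which in turn is what the sensor used to select the codeword $W_1,W_2,\ldots$ from the random codebook. This yields the Markov chain
\begin{equation*}
X_e(t_n)\;\longleftrightarrow\;X(0)\;\longleftrightarrow\;W^{\kappa_n}\;\longleftrightarrow\;D^{\kappa_n}\;\longleftrightarrow\;\hat{X}_e(t_n),
\end{equation*}
so the data-processing inequality gives
\begin{equation*}
I\bigl(X_e(t_n);\hat{X}_e(t_n)\bigr)\le I\bigl(W^{\kappa_n};D^{\kappa_n}\bigr).
\end{equation*}

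Next I would expand $I(W^{\kappa_n};D^{\kappa_n}) = h(D^{\kappa_n}) - h(D^{\kappa_n}\mid W^{\kappa_n})$. Because the codebook symbols $W_i$ are i.i.d.\ and the delays $S_i$ are i.i.d.\ and independent of $\{W_i\}$, the inter-reception times $D_i=W_i+S_i$ are themselves i.i.d., giving $h(D^{\kappa_n})=\kappa_n\, h(D)$ (treating $\kappa_n$ as the realized number of receptions by $t_n$). Similarly, conditioning on $W^{\kappa_n}$ reduces $D_i$ to $S_i$, and since $S^{\kappa_n}$ is independent of $W^{\kappa_n}$ with i.i.d.\ entries, $h(D^{\kappa_n}\mid W^{\kappa_n})=\kappa_n\, h(S)$. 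Subtracting yields $I(W^{\kappa_n};D^{\kappa_n})=\kappa_n\bigl(h(D)-h(S)\bigr)=\kappa_n\, I(W;W+S)$, which combined with the data-processing step proves the claim.

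The main obstacles I anticipate are essentially bookkeeping rather than deep: one must carefully justify (i) that the decoder in an anytime scheme can indeed be taken to be a function of $D^{\kappa_n}$ alone (any external randomization can be absorbed without increasing mutual information), and (ii) that the chain-rule manipulation is legitimate when $\kappa_n$ is actually a stopping time determined by the channel realization. For (ii), the cleanest route is to view the statement conditionally on the realized value of $\kappa_n$, or equivalently to note that at every reception instant the same per-symbol identity $I(W_i;D_i\mid W^{i-1},D^{i-1})=I(W;W+S)$ holds by the i.i.d.\ assumption, so summing over however many symbols have arrived by $t_n$ gives the stated bound.
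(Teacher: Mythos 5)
Your proof is correct and takes essentially the same route as the paper: a data-processing inequality across the timing channel, followed by the i.i.d.\ random-codebook structure of $\{W_i\}$ and $\{S_i\}$ to evaluate the resulting channel term. The paper arrives at the same bound by introducing the message indices $V,U$ explicitly, applying DPI to get $I\left(\XE(t_n);\XEH(t_n)\right)\le I(V;D_1,\ldots,D_{\kappa_n})$, then expanding with the chain rule and inserting/removing $W_i$ in the conditioning so each term reduces to $I(W_i;D_i\mid D^{i-1})=I(W;W+S)$, mirroring the converse of Anantharam and Verd\'u; this collapses to exactly your entropy-difference computation $I(W^{\kappa_n};D^{\kappa_n})=\kappa_n\bigl(h(D)-h(S)\bigr)=\kappa_n I(W;W+S)$ once independence of $\{D_i\}$ is invoked (the two are equal since $V$ determines $W^{\kappa_n}$).
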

\begin{proof}
We denote the transmitted message by $V \in \{1,\ldots,M\}$ and the decoded message by $U \in \{1,\ldots,M\}$. Then 
\begin{align}
\XE(\timenN)\rightarrow V \rightarrow (D_1,\ldots,D_{\kappa_n}) \rightarrow U \rightarrow \XEH(\timenN),
\end{align}
is a Markov chain.
Therefore, using the data-processing inequality~\cite{cover2012elements}, we have
\begin{align}\label{eqmida5}
I\left(\XE(\timenN);\XEH(\timenN)\right) \le I(V;U) \le  I (V;D_1,\ldots,D_{\kappa_n}).
\end{align}
By the chain rule for the mutual information, we have 
\begin{align}\label{eqmida4}
I(V;D_1,\ldots,D_{\kappa_n})=\sum_{i=1}^{{\kappa_n}} I(V;D_i|D^{i-1}).
\end{align}
Since $W_i$ is uniquely determined by the encoder from $V$, using the chain rule we deduce
\begin{align}\label{SIA}
\sum_{i=1}^{\kappa_n} I(V;D_i|D^{i-1})=\sum_{i=1}^{\kappa_n} I(V,W_i;D_i|D^{i-1}).
\end{align}
In addition, again using the chain rule, we have
\begin{align}\label{partion2}
\sum_{i=1}^{\kappa_n} I(V,W_i;D_i|D^{i-1})&=\sum_{i=1}^{\kappa_n} I(W_i;D_i|D^{i-1})
\\\nonumber
&+\sum_{i=1}^{\kappa_n} I(V;D_i|D^{i-1},W_i).
\end{align}
$D_i$ is conditionally independent of $V$ when given $W_i$, hence, 
\begin{align}\label{SIA2}
\sum_{i=1}^{\kappa_n} I(V;D_i|D^{i-1},W_i)=0.
\end{align}
Combining~\eqref{SIA},~\eqref{partion2}, and~\eqref{SIA2} it follows that
\begin{align}\label{eqmida3}
\sum_{i=1}^{\kappa_n} I(V;D_i|D^{i-1})=\sum_{i=1}^{\kappa_n} I(W_i;D_i|D^{i-1}).
\end{align}
Since the sequences $\{S_i\}$ and $\{W_i\}$ are  i.i.d. and  independent of each other, it follows that the sequence $\{D_i\}$ is also  i.i.d., and we have
\begin{align}\label{eqmida10}
\sum_{i=1}^{\kappa_n} I(W_i;D_i|D^{i-1})=\kappa_nI(W;D).
\end{align}
By combining ~\eqref{eqmida5},~\eqref{eqmida4},  \eqref{eqmida3} and~\eqref{eqmida10} the result follows.
\end{proof}

We are now ready to finish the proof of Theorem~\ref{NECCONDITIO}.
\begin{proof}
If $\expvalue(W+S)=0$,
~\eqref{mainresult} is straightforward. Thus, for the rest of the proof, we assume $\expvalue(W+S)>0$.  Using Lemma~\ref{lemnewsub20},
as $n \rightarrow \infty$, by time $\timenN$, given in~\eqref{tamoomnemishe3}, with  a probability that tends to one, at most $n$ symbols are received by the controller.
In this case, using Lemma~\eqref{mutalinfotimin},  it follows that
\begin{align}\label{GGMJ22}
 n \, I(W;W+S) \ge I\left(\XE(\timenN);\XEH(\timenN)\right).
\end{align}
By the assumption  of the theorem, for any $\epsilon>0$ we have
\begin{align}\label{helpful12}
\lim_{n\rightarrow\infty}\mathbb{P}\left(|\XE(\timenN)-\XEH(\timenN)|\le \epsilon\right)=1.
\end{align}
Hence, for any $\epsilon>0$  and any $\phi>0$ there exist $n_\phi$ such that for $n\ge n_\phi$
\begin{align}\label{IEE2}
\mathbb{P}\left(|\XE(\timenN)-\XEH(\timenN)|>\epsilon\right)\le \phi.
\end{align}
Using~\eqref{IEE2},~\eqref{defratedisfunc}, and Lemma~\ref{ratedistortionlower} it follows that for $n\ge n_\phi$
\begin{align}\label{shab2}
R_{\timenN}^\epsilon(\phi) \ge (1-\phi) \left[a\timenN+h(X(0))\right]-\ln 2\epsilon -\frac{\ln 2}{2}.
\end{align}
By~\eqref{defratedisfunc},  we have
\begin{align}\label{Icds23}
I(\XE(\timenN);\XEH(\timenN)) \ge R_{\timenN}^\epsilon(\phi),
\end{align}
and combining~\eqref{shab2}, and~\eqref{Icds23} we obtain that for $n\ge n_\phi$
\begin{align}
&\frac{I\left(\XE(\timenN);\XEH(\timenN)\right)}{n} \ge
\\\nonumber
&\frac{(1-\phi)a\timenN}{n}  
+\frac{(1-\phi) h(X(0))-\ln 2\epsilon -\frac{\ln 2}{2}}{n}.
\end{align}
We now let $\phi \rightarrow 0$, so that $n\rightarrow \infty$.
Using~\eqref{GGMJ22} we have
\begin{align}\label{Gfq22}
I(W;W+S) \ge a \lim_{n \rightarrow \infty}\frac{\timenN}{n}.
\end{align}
Since, $\expvalue(\mathcal{T}_n)=n\expvalue(D_n)$
 from~\eqref{tamoomnemishe3} it follows that 
\begin{align}\label{G3311}
\lim_{n \rightarrow \infty} \frac{\timenN}{n} \ge   (\ngmm) \, \expvalue(D).
\end{align}
Combining~\eqref{G3311} and~\eqref{Gfq22},~\eqref{mainresult} follows. 
Finally, using~\eqref{capanaverdum}
and noticing
\begin{align}\label{supineq8}
\sup_{\substack{W\ge 0 \\ \mathbb{E}(W)\le\chi}} \frac{I(W;W+S)}{\mathbb{E}(S)+\chi} \ge \sup_{\substack{W\ge 0 \\ \mathbb{E}(W)=\chi}} \frac{I(W;W+S)}{\mathbb{E}(S)+\chi},
\end{align}
we deduce that if~\eqref{mainresult} holds then \eqref{obs:cap:nec:12} holds as well.  
\end{proof}

\subsection{Proof of Theorem~\ref{SUFFCONDMA}}

\begin{proof}
If $\expvalue(S)=0$ the timing capacity is infinite, and the result is trivial. Hence, for the rest of the proof, we   assume that 
\begin{align}\label{sea}
\expvalue(S+W)\ge \expvalue(S)>0,
\end{align}
which  by \eqref{eq:taun} implies that  $\expvalue(\mathcal{T}_{n}) \rightarrow \infty$ as $n\rightarrow \infty$. As a consequence, by~\eqref{tamoomnemishe2} we also have that $\timepS \rightarrow \infty$ as $n\rightarrow \infty$.

The objective is to design an encoding and decoding strategy,
such that for all $\epsilon, \delta > 0$ and sufficiently large $n$, we have
\begin{equation} \label{objective}
\mathbb{P}(|\XE(\timepS) - \XEH(\timepS)| > \epsilon ) < \delta.
\end{equation}
We have
\begin{align}
&\mathbb{P} ( | \XE(\timepS) - \XEH(\timepS) | > \epsilon ) =  \nonumber \\
&\mathbb{P}(|\XE(\timepS) - \XEH(\timepS)| > \epsilon \mid \timepS  \geq \mathcal{T}_n ) \mathbb{P} (  \timepS  \geq \mathcal{T}_n ) \nonumber \\
&+ \mathbb{P} ( |\XE(\timepS) - \XEH(\timepS) |> \epsilon \mid \timepS < \mathcal{T}_n ) \mathbb{P} (  \timepS < \mathcal{T}_n )\nonumber \\
&\leq \mathbb{P} ( |\XE(\timepS) - \XEH(\timepS) | > \epsilon \mid \timepS \geq \mathcal{T}_n ) +\mathbb{P} (  \timepS < \mathcal{T}_n ),\label{eq:split}
\end{align}
where, using Lemma~\ref{lemnewsub20}, the second term in the sum~\eqref{eq:split},  tends to zero as $n \rightarrow \infty$.
It follows that to ensure~\eqref{objective}  it suffices to design an encoding and decoding scheme,
such that for all $\epsilon, \delta > 0$ and sufficiently large $n$, we have that the following conditional probability is upper bounded by $\delta$.
\begin{align}
\label{jiijeij!1!!22222!!3}
\mathbb{P} ( |\XE(\timepS) - \XEH(\timepS) | > \epsilon \mid \timepS \geq \mathcal{T}_n )<\delta.
\end{align}
From the open-loop equation~\eqref{estprob}, we  have
\begin{align}\label{!!33jifjokod}
\XE(\timepS)= e^{a \timepS}X(0),
\end{align}
from which it follows that the decoder can construct the estimate    \begin{align}\label{wjwidijdej1111}
    \XEH(\timepS)= 
    e^{a\timepS}\hat{X}_{\timepS}(0),
\end{align}
where ${\hat{X}_{\timepS}(0)}$ is an estimate of $X(0)$ constructed at time $\timepS$ using all the symbols received by this time.

By~\eqref{!!33jifjokod} and~\eqref{wjwidijdej1111}, we now have that~\eqref{jiijeij!1!!22222!!3} is equivalent to 
\begin{align} \label{eq:epstrick}
\mathbb{P} ( |X(0) - \hat{X}_{\timepS}(0) | > \epsilon e^{-a\timepS}  \mid \timepS \geq \mathcal{T}_n ) < \delta,
\end{align}
namely it suffices to design an encoding and decoding scheme to communicate the initial condition with exponentially increasing reliability in probability. 
Our  coding procedure that achieves this objective is described next. 
\subsubsection*{Source coding}
We let the source coding map 
\begin{align} \label{scmap}
\mathcal{Q}: [-L,L] \rightarrow \{0,1\}^{\mathbb{N}}
\end{align}
  be an infinite tree-structured quantizer~\cite{gersho2012vector}. This map constructs the infinite binary sequence $\mathcal{Q}\left(X(0)\right)= \{Q_1,Q_2, \ldots\}$ as follows.  $Q_1=0$ if $X(0)$ falls into the left-half of the interval $[-L,L]$, otherwise $Q_1=1$. The sub-interval where $X(0)$ falls is then divided into half and we let  $Q_2=0$ if $X(0)$ falls into the left-half of this sub-interval, otherwise $Q_2=1$. The process then continues in the natural way,  and $Q_i$ is determined accordingly for all $i \ge 3$.

Using the definition of truncation operator ~\eqref{tunc-oper22},
 for any $n' \geq 1$ we can define
\begin{align} \label{qdir}
    \mathcal{Q}_{n'}=\pi_{n'} \circ \mathcal{Q}.
\end{align} 
It follows that   $\mathcal{Q}_{n'}\left(X(0)\right)$ is a binary sequence of length $n'$ that identifies an interval of length $L/2^{n'-1}$ that contains $X(0)$. We also let
 \begin{align} \label{qinv}
\mathcal{Q}_{n'}^{-1}: \{0,1\}^{n'} \rightarrow [-L,L]
\end{align}
  be the  right-inverse map of $\mathcal{Q}_{n'}$, which resolves to the middle point of the last interval identified by the sequence that contains $X(0)$. 
  It follows that for any $n' \geq 1$,  this procedure achieves a quantization error
  \begin{align}\label{quantiz222344!}
    |X(0)-\mathcal{Q}_{n'}^{-1} \circ \mathcal{Q}_{n'} (X(0))| \le \frac{L}{2^{n'}}.
\end{align}
\oprocend

\subsubsection*{Channel coding} 
In order to communicate the quantized initial condition over the timing channel, the truncated binary sequence $Q_{n'}(X(0))$ needs to be mapped into a channel codeword of length $n$. 
  \begin{figure*}[t]
	\begin{subfigure}{0.5\textwidth}
	\centering
   	\includegraphics[scale=0.5]{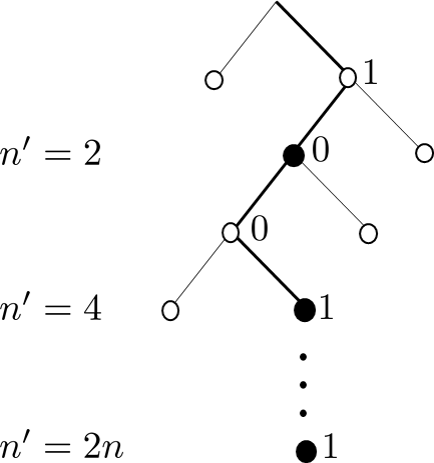}
        \label{fig:Tree11}
  \end{subfigure}~~~~\,\,\,\,\,
  	\begin{subfigure}{0.3\textwidth}
	\centering
   	\includegraphics[scale=0.5]{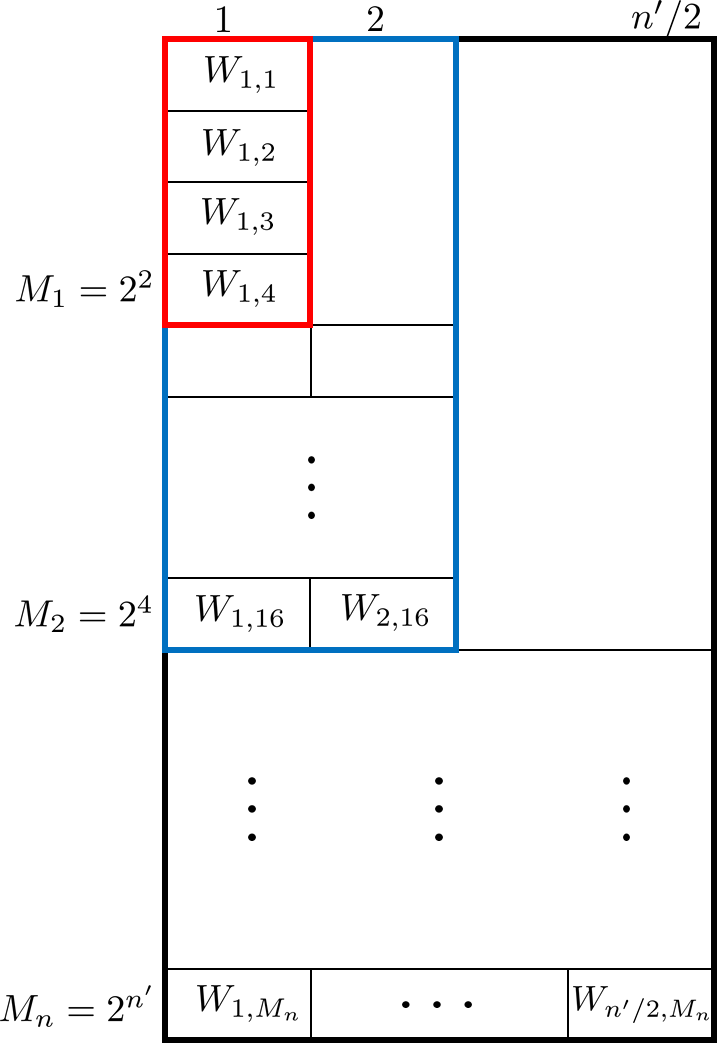}
        \label{fig:tree123}
  \end{subfigure}
     \caption{Tree-structured quantizer and the corresponding codebook for $R \mathbb{E}(D)=2$. In this case, every received channel symbol refines the source coding representation by two bits.  Here the black nodes in the quantization tree at level $n'=\lceil i R \mathbb{E}(D) \rceil=2,4,6,\ldots,$ are mapped into the rows of the codebook. As an example, the first codebook is highlighted in red, and the second in blue.}  
   \label{fig:sns1}
\end{figure*}

\begin{figure*}[t!]
    \begin{subfigure}{0.3\textwidth}
	\centering
     \includegraphics[scale=0.5]{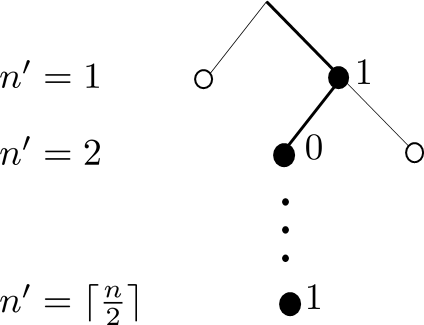}
        \label{fig:codebokk11}
   \end{subfigure}~~~~\,\,\,
    \begin{subfigure}{0.5\textwidth}
	\centering
     	\includegraphics[scale=0.5]{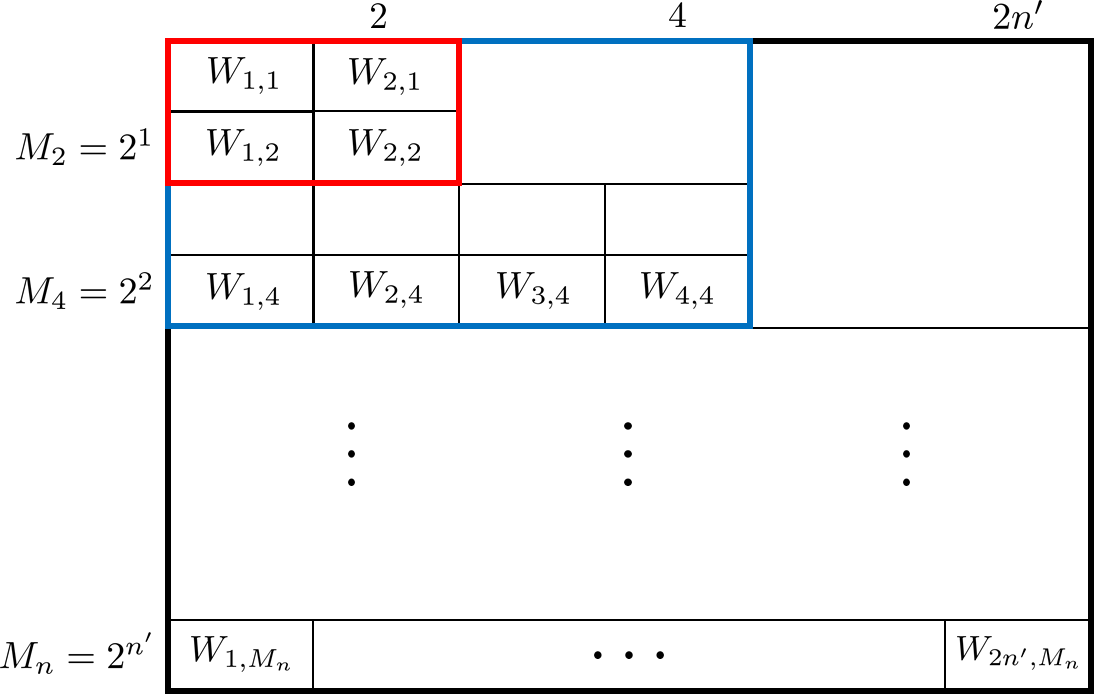}
        \label{fig:codebokk23}
   \end{subfigure}
   \caption{Tree-structured quantizer and the corresponding codebook for $R \mathbb{E}(D)=1/2$. In this case, every two received channel symbols refine the source coding representation by one bit. As an example, the first codebook is highlighted in red, and the second in blue.
  }
   \label{fig:sns}
\end{figure*}
We consider a channel codebook of $n$ columns and $M_n$ rows. The codeword symbols   $ \{ w_{i,m},i=1,\cdots ,n; \, m=1 \cdots M_n  \}$  are   drawn i.i.d. from a distribution which is mixture of a delta function and an exponential and such that  $\mathbb{P}(W_i = 0) = e^{-1}$, and $\mathbb{P}(W_i > w \mid W_i > 0) = \exp\{\frac{-w}{e\mathbb{E}(S)}\}.$   By Theorem 3 of~\cite{anantharam1996bits}, if   the delays $\{S_i\}$ are exponentially distributed, using a maximum likelihood decoder this construction  achieves the timing  capacity. 
Namely, 
letting
\begin{align}\label{nief1268sohrab!!!!}
T_n = \mathbb{E}(\mathcal{T}_n) = n \mathbb{E}(D),
\end{align}
using this codebook we can achieve any rate  
\begin{equation}\label{g1}
 R=\lim_{n\to\infty} \frac{\log M_{n}}{T_{n}} \le C 
\end{equation}
over the timing channel.

Next, we describe the mapping between the source coding and the channel coding constructions.
\oprocend

\subsubsection*{Source-channel mapping}
We first consider the direct mapping. 
For all $i \geq 1$,   we let $n'= \lceil i R \mathbb{E}(D) \rceil$ and consider the $2^{n'}$ possible outcomes of the source coding map $\mathcal{Q}_{n'}(X(0))$. We associate them, in a one-to-one fashion, to the  rows of a codebook  $\Psi_{n'}$ of size $2^{n'} \times  \lceil n'/R \mathbb{E}(D) \rceil$.  
 This mapping is defined as
\begin{align}
\label{edir}
    \mathcal{E}_{n'} : &\{0,1\}^{n'} \rightarrow  \Psi_{n'}.
\end{align}
 By letting $i \rightarrow \infty$, the codebook becomes a double-infinite   matrix $\Psi_{\infty}$, and  the map becomes 
 \begin{align}
 \label{emap}
     \mathcal{E}: \{0,1\}^{\mathbb{N}} \rightarrow \Psi_{\infty}. 
 \end{align} 
Thus, as $i \rightarrow \infty$, $X(0)$ is encoded as
\begin{align}
    X(0) \xrightarrow{\mathcal{Q}} \{0,1\}^{\mathbb{N}}\xrightarrow{\mathcal{E}}  \Psi_\infty. 
\end{align}

We now consider the inverse mapping.
Since the elements of $\Psi_{n'}$ are drawn independently from a continuous distribution, with probability one,  no two rows of the codebook 
are equal to each other, so for any $i \geq 1$ and number of received symbols $n= \lceil i/R \mathbb{E}(D) \rceil$ we   define 
\begin{align} \label{einv}
    \mathcal{E}_{n'}^{-1}: \Psi_{n'} \rightarrow \{0,1\}^{n'},
\end{align}
where $n'=\lceil n R \mathbb{E}(D) \rceil$.
This map associates to every row in the codebook a corresponding node in the quantization tree at level $n'$.

Figures~\ref{fig:sns1} and~\ref{fig:sns} show the constructions described above  for the cases   $R \mathbb{E}(D)=2$ and $R \mathbb{E}(D)=0.5$, respectively.
In Fig.~\ref{fig:sns1}, the nodes in the quantization tree at level $n'=\lceil i R \mathbb{E}(D) \rceil=2,4,6,\ldots,$ are mapped into the rows of a table of $M_n= 2^2,2^4,2^6,
\ldots$ rows and $n=1,2, 3 \ldots $ columns.  Conversely, the  rows in each table  are mapped   into the corresponding nodes in the tree. In Fig.~\ref{fig:sns}, the nodes in the quantization tree at level $n'=\lceil i R \mathbb{E}(D) \rceil=1,2,3,\ldots,$ are mapped into the rows of a table of $M_n=2,2^2,2^3, \ldots$ rows and $n=2,4,6,\ldots$ columns. Conversely, the rows in each table  are mapped into the corresponding nodes in the tree.

Next, we describe how the encoding and decoding operations are performed  using these maps and how transmission occurs over the channel.
\oprocend

\subsubsection*{One-time encoding}
The encoding of the initial state $X(0)$ occurs at the sensor in one-shot and then the corresponding symbols are transmitted over the channel, one by one.   
 Given $X(0)$, the source encoder  computes   $Q (X(0))$ according to the source coding map \eqref{scmap} and the channel encoder picks the corresponding codeword  $\mathcal{E} (Q(X(0)))$ from the doubly-infinite codebook according to the map \eqref{emap}. This codeword is  an infinite sequence of real numbers, which also corresponds to a leaf at infinite depth in the quantization tree.  Then, the encoder starts transmitting the real numbers of the codeword one by one, where each real number corresponds to a holding time, and proceeds in this way forever.  According to the source-channel mapping described above,  transmitting $n =\lceil n'/R \mathbb{E}(D) \rceil$   symbols using this scheme corresponds to transmitting,  for all $i \geq 1$,  $n'= \lceil i R \mathbb{E}(D) \rceil$ source bits, encoded    into
 a codeword $\mathcal{E}_{n'}(Q_{n'}(X(0)))$,  picked from a truncated codebook    of $2^{n'}$ rows and $n$ columns.  
\oprocend

\subsubsection*{Anytime Decoding}
The decoding of the initial state $X(0)$ occurs at the controller in an anytime fashion, refining the estimate of $X(0)$ as more and more symbols are received.

For all $i\geq 1$ the decoder  updates its guess for the value of $X(0)$ each time the number of symbols received equals  $n=\lceil i /R \mathbb{E}(D) \rceil$.  
 Assuming   a decoding operation occurs
 after $n$ symbols have been received, the decoder picks the maximum likelihood codeword from a    truncated codebook of size $M_n \times n$ and by inverse mapping,  it finds the corresponding node in the tree. 
It follows that at the $n$th random reception time $\mathcal{T}_n$, the decoder utilizes the inter-reception times of all $n$    symbols received up to this time
to construct the estimate $\hat{X}_{\mathcal{T}_n}(0)$.  
First, a  maximum likelihood decoder $\mathcal{D}_n$ is employed to map the inter-reception times $(D_1,\dots,D_n)$ to an element of $\Psi_{n'}$. This element is then mapped   to  a binary sequence of length $n'$ using $\mathcal{E}_{n'}^{-1}$. Finally, $\mathcal{Q}_{n'}^{-1}$ is used to construct $\hat{X}_{\mathcal{T}_n}(0)$.
It follows that at the $n$th reception time  where decoding occurs, we have 
\begin{align}\label{succdeco22}
    (D_1,\dots,D_n)& \xrightarrow{\mathcal{D}_n}  
    %\\\nonumber
    \Psi_{n'}
    \xrightarrow{\mathcal{E}_{n'}^{-1}} \{0,1\}^{n'}  
    \xrightarrow{\mathcal{Q}_{n'}^{-1}} [-L,L],
\end{align}
 and we let
 \begin{align}
 \label{forri333!!!23334054}
     \hat{X}_{\mathcal{T}_n}(0)=\mathcal{Q}_{n'}^{-1}\left( \mathcal{E}_{n'}^{-1}\left(\mathcal{D}_n(D_1,\dots,D_n)\right)\right).
 \end{align}
Thus, as $n \rightarrow \infty$ the final decoding process becomes
 \begin{align}\label{succdeco22final}
    (D_1,D_n,\dots)
    %&
    \xrightarrow{\mathcal{D}}  
    %\\\nonumber
    \Psi_\infty
    \xrightarrow{\mathcal{E}^{-1}} \{0,1\}^{\mathbb{N}} %\rightarrow 
    %\nonumber \\
    \xrightarrow{\mathcal{Q}^{-1}} [-L,L].
\end{align}
 \oprocend

To conclude the proof, we now show that if  $C \ge (\gmm) a$, then   it is possible to perform the above encoding and decoding operations with an arbitrarily small probability of error while using a codebook so large that it can accommodate a quantization error at most $L/2^{n'} < \epsilon e^{-a \timepS}$.

Since the channel coding scheme achieves the timing capacity, we have that for any $R \le C$, as $n\rightarrow \infty$ the maximum likelihood decoder  selects  the correct transmitted codeword with arbitrarily high probability.
It follows that for any $\delta>0$ and $n$ sufficiently large, we have   with probability at least $(1-\delta)$  that
\begin{align}
    \mathcal{Q}_{n'} (X(0))= \mathcal{E}_{n'}^{-1}\left(\mathcal{D}_n(D_1,\dots,D_n)\right),
\end{align}
and then by~\eqref{quantiz222344!} we have
\begin{align}\label{eomeo2!!!!!!8765}
    |X(0)-\hat{X}_{\mathcal{T}_n}(0)| \le \frac{L}{2^{n'}}.
\end{align}
We now consider a sequence of estimation times $\{\timepS\}$ satisfying \eqref{tamoomnemishe2} and let the estimate at time $\timepS \geq \mathcal{T}_n$ in \eqref{eq:epstrick} be $\hat{X}_{\timepS}(0)=\hat{X}_{\mathcal{T}_n}(0)$. By \eqref{eomeo2!!!!!!8765} we have that the sufficient condition for estimation reduces to 
\begin{equation} \label{codebookcondition}
\frac{L}{2^{n'}}   \leq \epsilon e^{-a\timepS},
\end{equation}
which means having the size of the codebook $M_n$ be  such that
\begin{equation}
    \frac{L}{M_n}   \leq \epsilon e^{-a\timepS},
\end{equation}
or equivalently
\begin{equation}\label{eq:suffcod}
\frac{\log M_{n} - \log L+ \log \epsilon}{\timepS} \geq a.
\end{equation}
Using \eqref{nief1268sohrab!!!!}, we   have
\begin{align} \label{eq:limit}
\frac{\log M_{n} - \log L + \log \epsilon}{\timepS} &= \frac{\log M_{n} - \log L+ \log \epsilon}{T_{n}} \cdot \frac{T_{n}}{\timepS} \nonumber \\
&=\frac{\log M_{n} - \log L + \log \epsilon}{T_{n}}  \cdot \frac{\mathbb{E}(\mathcal{T}_{n})}{\timepS}.
\end{align}
Taking  the limit for $n \rightarrow \infty$, 
we have 
\begin{align}
\lim_{n\to\infty} \frac{\log M_{n} - \log L + \log \epsilon}{T_{n}} \cdot \frac{\mathbb{E}(\mathcal{T}_{n})}{\timepS}  & \geq  R \cdot \frac{1}{\gmm}. 
\end{align}
It follows that as $n \rightarrow \infty$  the sufficient condition \eqref{eq:suffcod} can be expressed in terms of the rate as
\begin{equation}
R \geq   (\gmm) a.
\end{equation}
It follows that the   rate must satisfy 
\begin{equation}
C \ge R \geq   (\gmm) a 
\end{equation}
and since $C \ge (\gmm) a$, the proof is complete.
\end{proof}

\bibliography{mybib} 

@article{gallager2003low,
  title={Low-density parity-check codes},
  author={Gallager, Robert},
  journal={IRE Transactions on information theory},
  volume={8},
  number={1},
  pages={21--28},
  year={2003},
  publisher={IEEE}
}

@article{sundaresan2002sequential,
  title={Sequential decoding for the exponential server timing channel},
  author={Sundaresan, Rajesh and Verd{\'u}, Sergio},
  journal={IEEE Transactions on Information Theory},
  volume={46},
  number={2},
  pages={705--709},
  year={2002},
  publisher={IEEE}
}

@article{sundaresan2006capacity,
  title={Capacity of queues via point-process channels},
  author={Sundaresan, Rajesh and Verd{\'u}, Sergio},
  journal={IEEE transactions on information theory},
  volume={52},
  number={6},
  pages={2697--2709},
  year={2006},
  publisher={IEEE}
}

@article{Mitter,
  title={Control under communication constraints},
  author={Tatikonda, Sekhar and Mitter, Sanjoy},
  journal={IEEE Trans. Autom. Control},
  volume={49},
  number={7},
  pages={1056-1068},
  year={2004},
  publisher={IEEE}
}

@article{ranade2018control,
  title={Control capacity},
  author={Ranade, Gireeja and Sahai, Anant},
  journal={IEEE Transactions on Information Theory},
  volume={65},
  number={1},
  pages={235--254},
  year={2018},
  publisher={IEEE}
}

@article{kumar,
  title={Cyber--physical systems: A perspective at the centennial},
  author={Kim, Kyoung-Dae and Kumar, Panganamala R},
  journal={Proc. IEEE},
  volume={100~(Special Centennial Issue)},
  pages={1287--1308},
  year={2012},
  publisher={IEEE}
}

@article{Delchamps,
  title={Stabilizing a linear system with quantized state feedback},
  author={Delchamps, David F},
  journal={IEEE Trans. Autom. Control},
  volume={35},
  number={8},
  pages={916--924},
  year={1990},
  publisher={IEEE}
}

@incollection{Massimo,
  title={Elements of information theory for networked control systems},
  author={Franceschetti, Massimo and Minero, Paolo},
  booktitle={Information and Control in Networks},
  pages={3-37},
  year={2014},
  publisher={Springer}
}

@article{Nair,
  title={Feedback control under data rate constraints: An overview},
  author={Nair, By Girish N and Fagnani, Fabio and Zampieri, Sandro and Evans, Robin J},
  journal={Proceedings of the IEEE},
  volume={95},
  number={1},
  pages={108-137},
  year={2007},
  publisher={IEEE}
}

@article{Topological,
  title={Topological feedback entropy and nonlinear stabilization},
  author={Nair, Girish N and Evans, Robin J and Mareels, Iven MY and Moran, William},
  journal={IEEE Trans. Autom. Control},
  volume={49},
  number={9},
  pages={1585-1597},
  year={2004},
  publisher={IEEE}
}

@article{Tabuada,
  title={Event-triggered real-time scheduling of stabilizing control tasks},
  author={Tabuada, Paulo},
  journal={IEEE Trans. Autom. Control},
  volume={52},
  number={9},
  pages={1680-1685},
  year={2007},
  publisher={IEEE}
}

@inproceedings{Level,
  title={Level crossing sampling in feedback stabilization under data-rate constraints},
  author={Kofman, Ernesto and Braslavsky, Julio H},
  booktitle={45st IEEE Conf. Decision and Cont. (CDC)},
  pages={4423-4428},
  year={2006},
}

@Article{PT-JC:16-tac,
  author =       {P. Tallapragada and J. Cort{\'e}s},
  title =        {Event-triggered stabilization of linear systems
                  under bounded bit rates},
  journal =      {IEEE Trans. Autom. Control},
  year =         2016,
  volume =       61,
  number =       6,
  pages =        {1575-1589},
  funding =      {NSF award CNS-1329619},
}

@article{nair2004stabilizability,
  title={Stabilizability of stochastic linear systems with finite feedback data rates},
  author={Nair, Girish N and Evans, Robin J},
  journal={SIAM J. Control Optimiz.},
  volume={43},
  number={2},
  pages={413-436},
  year={2004},
  publisher={SIAM}
}

@article{martins2006feedback,
  title={Feedback stabilization of uncertain systems in the presence of a direct link},
  author={Martins, Nuno C and Dahleh, Munther A and Elia, Nicola},
  journal={IEEE Trans. Autom. Control},
  volume={51},
  number={3},
  pages={438-447},
  year={2006},
  publisher={New York, NY: Institute of Electrical and Electronics Engineers, 1963-}
}

@article{Paolo,
  title={Data rate theorem for stabilization over time-varying feedback channels},
  author={Minero, Paolo and Franceschetti, Massimo and Dey, Subhrakanti and Nair, Girish N },
  journal={IEEE Trans. Autom. Control},
  volume={54},
  number={2},
  pages={243},
  year={2009}
}

@article{Lorenzo,
  title={Stabilization over {M}arkov feedback channels: the general case},
  author={Minero, Paolo and Coviello, Lorenzo and Franceschetti, Massimo},
  journal={IEEE Trans. Autom. Control},
  volume={58},
  number={2},
  pages={349-362},
  year={2013},
  publisher={IEEE}
}

@article{sahai2006necessity,
  title={The Necessity and Sufficiency of Anytime Capacity for Stabilization of a Linear System Over a Noisy Communication Link. {P}art {I}: Scalar Systems},
  author={Sahai, Anant and Mitter, Sanjoy},
  journal={IEEE Trans. Inf. Theory},
  volume={52},
  number={8},
  pages={3369-3395},
  year={2006},
  publisher={IEEE}
}

@article{tatikonda2004control,
  title={Control over noisy channels},
  author={Tatikonda, Sekhar and Mitter, Sanjoy},
  journal={IEEE Trans. Autom. Control},
  volume={49},
  number={7},
  pages={1196--1201},
  year={2004},
  publisher={IEEE}
}

@article{de2005n,
  title={n-Bit stabilization of n-dimensional nonlinear systems in feedforward form},
  author={De Persis, Claudio},
  journal={IEEE Trans. Autom. Control},
  volume={50},
  number={3},
  pages={299-311},
  year={2005},
  publisher={IEEE}
}

@article{liberzon2014finite,
  title={Finite data-rate feedback stabilization of switched and hybrid linear systems},
  author={Liberzon, Daniel},
  journal={Automatica},
  volume={50},
  number={2},
  pages={409-420},
  year={2014},
  publisher={Elsevier}
}

@inproceedings{baillieul1999feedback,
  title={Feedback designs for controlling device arrays with communication channel bandwidth constraints},
  author={Baillieul, J},
  booktitle={ARO Workshop on Smart Structures, Pennsylvania State Univ},
  pages={16-18},
  year={1999}
}

@article{wong1999systems,
  title={Systems with finite communication bandwidth constraints. {I}{I}. Stabilization with limited information feedback},
  author={Wong, Wing Shing and Brockett, Roger W},
  journal={IEEE Trans. Autom. Control},
  volume={44},
  number={5},
  pages={1049-1053},
  year={1999},
  publisher={IEEE}
}

@article{girish2013,
  title={A non-stochastic information theory for communication and state estimation},
  author={Nair, Girish},
  journal={IEEE Trans. Autom. Control},
  volume={58},
  pages={1497-1510},
  year={2013},
  publisher={IEEE}
}

@book{Yukselbook,
  title={Stochastic Networked Control Systems: Stabilization and Optimization under Information Constraints},
  author={Y{\"u}ksel, Serdar and Ba{\c{s}}ar, Tamer},
  year={2013},
  publisher={Springer Science \& Business Media}
}

@InProceedings{WPMHH-KHJ-PT:12,
  author =       {W. P. M. H. Heemels and K. H. Johansson and
                  P. Tabuada},
  title =     {An introduction to event-triggered and self-triggered control},
  booktitle={51st IEEE Conf. Decision and Cont. (CDC)},
  pages =        {3270-3285},
  year =         2012,
}

@book{matveev2009estimation,
  title={Estimation and control over communication networks},
  author={A. S. Matveev and A. V. Savkin},
  year={2009},
  publisher={Springer Science \& Business Media}
}

@article{liberzon2017entropy,
  title={Entropy and minimal bit rates for state estimation and model detection},
  author={Liberzon, Daniel and Mitra, Sayan},
  journal={IEEE Trans. Autom. Control},
  year={2017},
  publisher={IEEE}
}

@inproceedings{hespanha2002towards,
  title={Towards the control of linear systems with minimum bit-rate},
  author={Hespanha, Joao and Ortega, Antonio and Vasudevan, Lavanya},
  booktitle={Proc. 15th Int. Symp. on Mathematical Theory of Networks and Systems (MTNS)},
  year={2002}
}

@inproceedings{ranade2013non,
  title={Non-coherence in estimation and control},
  author={Ranade, Gireeja and Sahai, Anant},
  booktitle={51st Communication, Annu. Allerton Conf. Commun., Control,
Comput.},
  pages={189--196},
  year={2013},
  organization={IEEE}
}

@article{anantharam1996bits,
  title={Bits through queues},
  author={Anantharam, Venkat and Verd\'{u}, Sergio},
  journal={IEEE Trans. Inf. Theory},
  volume={42},
  number={1},
  pages={4--18},
  year={1996},
  publisher={IEEE}
}

@article{bedekar1998information,
  title={The information-theoretic capacity of discrete-time queues},
  author={Bedekar, Anand S and Azizoglu, Murat},
  journal={IEEE Trans. Inf. Theory},
  volume={44},
  number={2},
  pages={446--461},
  year={1998},
  publisher={IEEE}
}

@inproceedings{riedl2011finite,
  title={Finite block-length achievable rates for queuing timing channels},
  author={Riedl, Thomas J and Coleman, Todd P and Singer, Andrew C},
  booktitle={Information Theory Workshop (ITW), 2011 IEEE},
  pages={200--204},
  year={2011},
  organization={IEEE}
}

@article{gohari2016information,
  title={Information theory of molecular communication: Directions and challenges},
  author={Gohari, Amin and Mirmohseni, Mahtab and Nasiri-Kenari, Masoumeh},
  journal={IEEE Transactions on Molecular, Biological and Multi-Scale Communications},
  volume={2},
  number={2},
  pages={120--142},
  year={2016},
  publisher={IEEE}
}

@article{rose2016inscribed1,
  title={Inscribed matter communication: Part {I}},
  author={Rose, Christopher and Mian, I Saira},
  journal={IEEE Transactions on Molecular, Biological and Multi-Scale Communications},
  volume={2},
  number={2},
  pages={209--227},
  year={2016},
  publisher={IEEE}
}

@inproceedings{astrom2002comparison,
  title={Comparison of {Riemann} and {Lebesgue} sampling for first order stochastic systems},
  author={Astrom, Karl Johan and Bernhardsson, Bo M},
  booktitle={41st IEEE Conf. Decision and Cont. (CDC)},
  volume={2},
  pages={2011--2016},
  year={2002},
}

@article{OurJournal1,
  title={The value of timing information in event-triggered control},
  author={Khojasteh, Mohammad Javad and Tallapragada, Pavankumar and Cort{\'e}s, Jorge and Franceschetti, Massimo},
  journal={IEEE Transactions on Automatic Control},
  volume={65},
  number={3},
  pages={925--940},
  year={2019},
  publisher={IEEE}
}

@inproceedings{MaximRag,
  title={Mutual information saddle points in channels of exponential family type},
  author={Coleman, Todd P and Raginsky, Maxim},
  booktitle={Information Theory Proceedings (ISIT), 2010 IEEE International Symposium on},
  pages={1355--1359},
  year={2010},
  organization={IEEE}
}

@article{colonius2009invariance,
  title={Invariance entropy for control systems},
  author={Colonius, Fritz and Kawan, Christoph},
  journal={SIAM J. Control Optimiz.},
  volume={48},
  number={3},
  pages={1701--1721},
  year={2009},
  publisher={SIAM}
}

@article{hespanha2007survey,
  title={A survey of recent results in networked control systems},
  author={Hespanha, Joo P and Naghshtabrizi, Payam and Xu, Yonggang},
  journal={Proceedings of the IEEE},
  volume={95},
  number={1},
  pages={138--162},
  year={2007},
  publisher={IEEE}
}

@inproceedings{khina2016almost,
  title={({A}lmost) practical tree codes},
  author={Khina, Anatoly and Halbawi, Wael and Hassibi, Babak},
  booktitle={Information Theory (ISIT), 2016 IEEE International Symposium on},
  pages={2404--2408},
  year={2016},
  organization={IEEE}
}

@article{sukhavasi2016linear,
  title={Linear time-invariant anytime codes for control over noisy channels},
  author={Sukhavasi, Ravi Teja and Hassibi, Babak},
  journal={IEEE Trans. Autom. Control},
  volume={61},
  number={12},
  pages={3826--3841},
  year={2016},
  publisher={IEEE}
}

@article{minero2017anytime,
  title={Anytime Capacity of a Class of Markov Channels},
  author={Minero, Paolo and Franceschetti, Massimo},
  journal={IEEE Trans. Autom. Control},
  volume={62},
  number={3},
  pages={1356--1367},
  year={2017},
  publisher={IEEE}
}

@book{cover2012elements,
  title={Elements of information theory},
  author={Cover, Thomas M and Thomas, Joy A},
  year={2012},
  publisher={John Wiley \& Sons}
}

@article{matveev2007analoguedd,
  title={An analogue of Shannon information theory for detection and stabilization via noisy discrete communication channels},
  author={Matveev, Alexey S and Savkin, Andrey V},
  journal={SIAM J. Control Optimiz.},
  volume={46},
  number={4},
  pages={1323--1367},
  year={2007},
  publisher={SIAM}
}

@article{linsenmayer2017delay,
  title={Delay-dependent data rate bounds for containability of scalar systems},
  author={Linsenmayer, Steffen and Blind, Rainer and Allg{\"o}wer, Frank},
  journal={IFAC-PapersOnLine},
  volume={50},
  number={1},
  pages={7875--7880},
  year={2017},
  publisher={Elsevier}
}

@article{matveev2007shannon,
  title={Shannon zero error capacity in the problems of state estimation and stabilization via noisy communication channels},
  author={Matveev, Alexey S and Savkin, Andrey V},
  journal={International Journal of Control},
  volume={80},
  number={2},
  pages={241--255},
  year={2007},
  publisher={Taylor \& Francis}
}

@article{ostrovsky2009error,
  title={Error-correcting codes for automatic control},
  author={Ostrovsky, Rafail and Rabani, Yuval and Schulman, Leonard J},
  journal={IEEE Trans. Inf. Theory},
  volume={55},
  number={7},
  pages={2931--2941},
  year={2009},
  publisher={IEEE}
}

@article{como2010anytime,
  title={Anytime reliable transmission of real-valued information through digital noisy channels},
  author={Como, Giacomo and Fagnani, Fabio and Zampieri, Sandro},
  journal={SIAM J. Control Optimiz.},
  volume={48},
  number={6},
  pages={3903--3924},
  year={2010},
  publisher={SIAM}
}

@article{pearson2017control,
  title={Control with minimal cost-per-symbol encoding and quasi-optimality of event-based encoders},
  author={Pearson, Justin and Hespanha, Joao P and Liberzon, Daniel},
  journal={IEEE Trans. Autom. Control},
  volume={62},
  number={5},
  pages={2286--2301},
  year={2017},
  publisher={IEEE}
}

@book{fang2017towards,
  title={Towards integrating control and information theories},
  author={Fang, Song and Chen, Jie and Hideaki, I},
  year={2017},
  publisher={Springer}
}

@article{wagner2005zero,
  title={Zero-rate reliability of the exponential-server timing channel},
  author={Wagner, Aaron B and Anantharam, Venkat},
  journal={IEEE Trans. Inf. Theory},
  volume={51},
  number={2},
  pages={447--465},
  year={2005},
  publisher={IEEE}
}

@article{arikan2002reliability,
  title={On the reliability exponent of the exponential timing channel},
  author={Arikan, Erdal},
  journal={IEEE Trans. Inf. Theory},
  volume={48},
  number={6},
  pages={1681--1689},
  year={2002},
  publisher={IEEE}
}

@article{sundaresan2000robust,
  title={Robust decoding for timing channels},
  author={Sundaresan, Rajesh and Verd{\'u}, Sergio},
  journal={IEEE Trans. Inf. Theory},
  volume={46},
  number={2},
  pages={405--419},
  year={2000},
  publisher={IEEE}
}

@inproceedings{aptel2018feedback,
  title={Feedback Increases the Capacity of Queues},
  author={Aptel, Laure and Tchamkerten, Aslan},
  booktitle={2018 IEEE International Symposium on Information Theory (ISIT)},
  pages={1116--1120},
  year={2018},
  organization={IEEE}
}

@article{giles2002information,
  title={An information-theoretic and game-theoretic study of timing channels},
  author={Giles, James and Hajek, Bruce},
  journal={IEEE Trans. Inf. Theory},
  volume={48},
  number={9},
  pages={2455--2477},
  year={2002},
  publisher={IEEE}
}

@INPROCEEDINGS{6736600,
author={M. Tavan and R. D. Yates and W. U. Bajwa},
booktitle={51st Annu. Allerton Conf. Commun., Control,
Comput.},
title={Bits through bufferless queues},
year={2013},
volume={},
number={},
pages={755-762},
keywords={channel capacity;queueing theory;timing;bufferless queues;bits;channel capacity;consecutive packets;independent service times;identically distributed service times;timing channels;work-conserving queue;bufferless model;achievable rates;M/M/1 queues;M/G/1 queues;Timing;Upper bound;Silicon;Servers;Decoding;Receivers;Mutual information},
doi={10.1109/Allerton.2013.6736600},
ISSN={},
month={Oct},}

@article{prabhakar2003entropy,
  title={Entropy and the timing capacity of discrete queues},
  author={Prabhakar, Balaji and Gallager, Robert},
  journal={IEEE Trans. Inf. Theory},
  volume={49},
  number={2},
  pages={357--370},
  year={2003},
  publisher={IEEE}
}

@article{wang2011event,
  title={Event-triggering in distributed networked control systems},
  author={Wang, Xiaofeng and Lemmon, Michael D},
  journal={IEEE Trans. Autom. Control},
  volume={56},
  number={3},
  pages={586},
  year={2011}
}

@article{liberzon2005stabilization,
  title={Stabilization of nonlinear systems with limited information feedback},
  author={Liberzon, Daniel and Hespanha, Jo{\~a}o P},
  journal={IEEE Trans. Autom. Control},
  volume={50},
  number={6},
  pages={910--915},
  year={2005},
  publisher={IEEE}
}

@article{colonius2012minimal,
  title={Minimal bit rates and entropy for exponential stabilization},
  author={Colonius, Fritz},
  journal={SIAM J. Control Optimiz.},
  volume={50},
  number={5},
  pages={2988--3010},
  year={2012},
  publisher={SIAM}
}

@article{sanjaroon2018estimation,
  title={Estimation of nonlinear dynamic systems over communication channels},
  author={Sanjaroon, Vahideh and Farhadi, Alireza and Motahari, Abolfazl Seyed and Khalaj, Babak H},
  journal={IEEE Trans. Autom. Control},
  volume={63},
  number={9},
  pages={3024--3031},
  year={2018},
  publisher={IEEE}
}

@article{dimarogonas2012distributed,
  title={Distributed event-triggered control for multi-agent systems},
  author={Dimarogonas, Dimos V and Frazzoli, Emilio and Johansson, Karl H},
  journal={IEEE Trans. Autom. Control},
  volume={57},
  number={5},
  pages={1291--1297},
  year={2012},
  publisher={IEEE}
}

@article{khashooei2018consistent,
  title={A Consistent Threshold-Based Policy for Event-Triggered Control},
  author={Khashooei, B Asadi and Antunes, Duarte J and Heemels, WPMH},
  journal={IEEE Control Systems Letters},
  volume={2},
  number={3},
  pages={447--452},
  year={2018},
  publisher={IEEE}
}

@article{heemels2013periodic,
  title={Periodic event-triggered control for linear systems},
  author={Heemels, WPM Heemels and Donkers, MCF and Teel, Andrew R},
  journal={IEEE Trans. Autom. Control},
  volume={58},
  number={4},
  pages={847--861},
  year={2013},
  publisher={IEEE}
}

@article{yang2018feedback,
  title={Feedback stabilization of switched linear systems with unknown disturbances under data-rate constraints},
  author={Yang, Guosong and Liberzon, Daniel},
  journal={IEEE Trans. Autom. Control},
  volume={63},
  number={7},
  pages={2107--2122},
  year={2018},
  publisher={IEEE}
}

@inproceedings{li2012stabilizing,
  title={Stabilizing bit-rates in quantized event triggered control systems},
  author={Li, Lichun and Wang, Xiaofeng and Lemmon, Michael},
  booktitle={Proceedings of the 15th ACM international conference on Hybrid Systems: Computation and Control},
  pages={245--254},
  year={2012},
  organization={ACM}
}

@article{demirel2017trade,
  title={On the trade-off between communication and control cost in event-triggered dead-beat control},
  author={Demirel, Burak and Gupta, Vijay and Quevedo, Daniel E and Johansson, Mikael},
  journal={IEEE Trans. Autom. Control},
  volume={62},
  number={6},
  pages={2973--2980},
  year={2017},
  publisher={IEEE}
}

@article{quevedo2014stochastic,
  title={Stochastic stability of event-triggered anytime control},
  author={Quevedo, Daniel E and Gupta, Vijay and Ma, Wann-Jiun and Y{\"u}ksel, Serdar},
  journal={IEEE Trans. Autom. Control},
  volume={59},
  number={12},
  pages={3373--3379},
  year={2014},
  publisher={IEEE}
}

@inproceedings{lindemann2018event,
  title={Event-triggered Feedback Control for Signal Temporal Logic Tasks},
  author={Lindemann, Lars and Maity, Dipankar and Baras, John S and Dimarogonas, Dimos V},
  booktitle={IEEE Conf. Decision and Cont. (CDC)},
  pages={146--151},
  year={2018},
}

@article{girard2015dynamic,
  title={Dynamic triggering mechanisms for event-triggered control},
  author={Girard, Antoine},
  journal={IEEE Trans. Autom. Control},
  volume={60},
  number={7},
  pages={1992--1997},
  year={2015},
  publisher={IEEE}
}

@article{seuret2016lq,
  title={{L}{Q}-based event-triggered controller co-design for saturated linear systems},
  author={Seuret, Alexandre and Prieur, Christophe and Tarbouriech, Sophie and Zaccarian, Luca},
  journal={Automatica},
  volume={74},
  pages={47--54},
  year={2016},
  publisher={Elsevier}
}

@inproceedings{rungger2017invariance,
  title={On the invariance feedback entropy of linear perturbed control systems},
  author={Rungger, Matthias and Zamani, Majid},
  booktitle={56th IEEE Conf. Decision and Cont. (CDC)},
  pages={3998--4003},
  year={2017},
}

@article{liu2004timing,
  title={The timing capacity of single-server queues with multiple flows},
  author={Liu, Xin and Srikant, R},
  journal={DIMACS Series in Discrete Mathematics and Theoretical Computer Science},
  year={2004}
}

@inproceedings{sellke2007capacity,
  title={Capacity bounds on timing channels with bounded service times},
  author={Sellke, Sarah H and Wang, Chih-Chun and Shroff, Ness and Bagchi, Saurabh},
  booktitle={2007 IEEE International Symposium on Information Theory},
  pages={981--985},
  year={2007},
  organization={IEEE}
}

@inproceedings{sellke2006timing,
  title={Timing channel capacity for uniform and gaussian servers},
  author={Sellke, Sarah and Shroff, Ness B and Bagchi, Saurabh and Wang, Chih-Chun},
  booktitle={Forty-Fourth Annual Allerton Conference on Communication, Control, and Computing},
  year={2006}
}

@book{gersho2012vector,
  title={Vector quantization and signal compression},
  author={Gersho, Allen and Gray, Robert M},
  volume={159},
  year={2012},
  publisher={Springer Science \& Business Media}
}

@article{shayevitz2011optimal,
  title={Optimal feedback communication via posterior matching},
  author={Shayevitz, Ofer and Feder, Meir},
  journal={IEEE Trans. Inf. Theory},
  volume={57},
  number={3},
  pages={1186--1222},
  year={2011},
  publisher={IEEE}
}

@article{naghshvar2015extrinsic,
  title={Extrinsic {Jensen}--{Shannon} divergence: Applications to variable-length coding},
  author={Naghshvar, Mohammad and Javidi, Tara and Wigger, Michele},
  journal={IEEE Trans. Inf. Theory},
  volume={61},
  number={4},
  pages={2148--2164},
  year={2015},
  publisher={IEEE}
}

@article{kostina2016rate,
  title={Rate-cost tradeoffs in control},
  author={Kostina, Victoria and Hassibi, Babak},
  journal={IEEE Trans. Autom. Control},
     volume={64},
  number={11},
  pages={4525--4540},
  year={2019},
  publisher={IEEE}
}

@article{horstein1963sequential,
  title={Sequential transmission using noiseless feedback},
  author={Horstein, Michael},
  journal={IEEE Trans. Inf. Theory},
  volume={9},
  number={3},
  pages={136--143},
  year={1963},
}

@article{schalkwijk1966coding,
  title={A coding scheme for additive noise channels with feedback--I: No bandwidth constraint},
  author={Schalkwijk, J and Kailath, Thomas},
  journal={IEEE Trans. Inf. Theory},
  volume={12},
  number={2},
  pages={172--182},
  year={1966},
}

@inproceedings{guo2019optimal,
  title={Optimal Causal Rate-Constrained Sampling of the Wiener Process},
  author={Guo, Nian and Kostina, Victoria},
  booktitle={57th Ann. Allerton Conf. on Comm., Cont., and Comp. (Allerton)},
  pages={1090--1097},
  year={2019},
  organization={IEEE}
}

@inproceedings{yildiz2019event,
  title={Event-Triggered Stochastic Control via Constrained Quantization},
  author={Yildiz, Hikmet and Su, Yu and Khina, Anatoly and Hassibi, Babak},
  booktitle={Data Comp. Conf.},
  pages={612--612},
  year={2019},
  organization={IEEE}
}

@inproceedings{cdc19papermj,
  title =        {Theory and implementation of event-triggered
                  stabilization over digital channels},
  author =       {Khojasteh, M. J. and Hedayatpour, M. and Franceschetti,
                  M.},
  booktitle =    {IEEE Conf. Decis. and Cont. (CDC)},
  year =         2019,
  pages = {4183-4188} 
}

@inproceedings{khojasteh2018estimating111,
  title={Estimating a linear process using phone calls},
  author={Khojasteh, M. J. and Franceschetti, M. and Ranade, G.},
  booktitle={IEEE Conf. Decis. and Cont. (CDC)},
  pages={127--131},
  year={2018},
}

@inproceedings{khojasteh2019stabilizing112,
  title={Stabilizing a linear system using phone calls},
  author={Khojasteh, M. J. and Franceschetti, M. and Ranade, G.},
  booktitle={18th Eur. Cont. Conf. (ECC)},
  pages={2856--2861},
  year={2019},
  organization={IEEE}
}

@article{gupta2009data,
  title={Data transmission over networks for estimation and control},
  author={Gupta, Vijay and Dana, Amir F and Hespanha, Joao P and Murray, Richard M and Hassibi, Babak},
  journal={IEEE Tran. on Auto. Cont.},
  volume={54},
  number={8},
  pages={1807--1819},
  year={2009},
  publisher={IEEE}
}

@article{you2010minimum,
  title={Minimum data rate for mean square stabilization of discrete {LTI} systems over lossy channels},
  author={You, Keyou and Xie, Lihua},
  journal={IEEE Tran. Auto. Cont.},
  volume={55},
  number={10},
  pages={2373--2378},
  year={2010},
  publisher={IEEE}
}

@article{khina2018tracking,
  title={Tracking and Control of {G}auss--{M}arkov Processes over Packet-Drop Channels with Acknowledgments},
  author={Khina, Anatoly and Kostina, Victoria and Khisti, Ashish and Hassibi, Babak},
  journal={IEEE Tran. on Cont. of Net. Sys.},
  volume={6},
  number={2},
  pages={549--560},
  year={2018},
  publisher={IEEE}
}

@article{yuksel2012random,
  title={Random-time, state-dependent stochastic drift for {M}arkov chains and application to stochastic stabilization over erasure channels},
  author={Yuksel, Serdar and Meyn, Sean P},
  journal={IEEE Tran. on Auto. Cont.},
  volume={58},
  number={1},
  pages={47--59},
  year={2012},
  publisher={IEEE}
}

@ARTICLE{acknowledgement_fdbk,
author={L. {Schenato} and B. {Sinopoli} and M. {Franceschetti} and K. {Poolla} and S. S. {Sastry}},
journal={Proceedings of the IEEE},
title={Foundations of Control and Estimation Over Lossy Networks},
year={2007},
volume={95},
number={1},
pages={163-187},
doi={10.1109/JPROC.2006.887306},
ISSN={1558-2256},
month={Jan},}

@incollection{colonius2014analysis,
  title={Analysis of networked systems},
  author={Colonius, Fritz and Helmke, U and Jordan, J and Kawan, Christoph and Sailer, R and Wirth, F},
  booktitle={Control Theory of Digitally Networked Dynamic Systems},
  pages={31--79},
  year={2014},
  publisher={Springer}
}

@article{kawan2013invariance,
  title={Invariance entropy for deterministic control systems},
  author={Kawan, Christoph},
  journal={Lecture notes in mathematics},
  volume={2089},
  year={2013},
  publisher={Springer}
}

@article{khojasteh2020exploiting,
  title={Exploiting timing information in event-triggered stabilization of linear systems with disturbances},
  author={Khojasteh, Mohammad Javad and Hedayatpour, Mojtaba and Cort{\'e}s, Jorge and Franceschetti, Massimo},
  journal={IEEE Transactions on Control of Network Systems},
  volume={8},
  number={1},
  pages={15--27},
  year={2020},
  publisher={IEEE}
}

@article{franceschetti2023information,
  title={Information Flow in Event-Based Stabilization of Cyber-Physical Systems},
  author={Franceschetti, Massimo and Khojasteh, Mohammad Javad and Win, Moe Z},
  journal={Computation-Aware Algorithmic Design for Cyber-Physical Systems},
  pages={139--163},
  year={2023},
  publisher={Springer}
}
\bibliographystyle{IEEEtran}

\end{document}